\newtheorem{lemma}{Lemma}
\newtheorem{theorem}{Theorem}
\newtheorem{definition}{Definition}
\newtheorem{conjecture}{Conjecture}
\newtheorem{criterion}{Criterion}
\theoremstyle{definition}
\let\newfloat\newfloat@ltx
\newcommand{\colorred}{\color{black}}
\newcommand{\added}[1]{{\colorred #1}}
\Crefname{criterion}{Criterion}{Criteria}
\crefname{criterion}{Crit.}{Crit.}
\Crefname{definition}{Definition}{Definitions}
\crefname{definition}{Def.}{Defs.}
\Crefname{theorem}{Theorem}{Theorems}
\crefname{theorem}{Thm.}{Thms.}
\Crefname{proposition}{Proposition}{Propositions}
\crefname{proposition}{Prop.}{Props.}
\Crefname{conjecture}{Conjecture}{Conjectures}
\crefname{conjecture}{Conj.}{Conjs.}
\begin{document}

\title{Decoding across transversal Clifford gates in the surface code}

\author{Marc Serra-Peralta\,\orcidlink{0000-0002-8000-8701}}
\altaffiliation{These authors contributed equally to this work. Contact author: m.serraperalta@tudelft.nl}
\author{Mackenzie H. Shaw\,\orcidlink{0000-0002-0776-886X}}
\altaffiliation{These authors contributed equally to this work. Contact author: m.serraperalta@tudelft.nl}
\author{Barbara M. Terhal\,\orcidlink{0000-0003-0218-6614}}
\affiliation{QuTech, Delft University of Technology, Lorentzweg 1, 2628 CJ Delft, The Netherlands, and Delft Institute of Applied Mathematics, Delft University of Technology, Mekelweg 4, 2628 CD Delft, The Netherlands}

\date{\today}

\begin{abstract}
Transversal logical gates offer the opportunity for fast and low-noise logic, particularly when interspersed by a single round of parity check measurements of the underlying code. Using such circuits for the surface code requires decoding across logical gates, complicating the decoding task. We show how one can decode across an arbitrary sequence of transversal gates for the unrotated surface code, using a fast ``logical observable'' minimum-weight-perfect-matching (\textsc{mwpm}) based decoder, and benchmark its performance in Clifford circuits under circuit-level noise. We propose {\em windowed} logical observable matching decoders to address the problem of fully efficient decoding: our basic windowed decoder is computationally efficient under the restriction of quiescent (slow) resets. Our `advanced' two-step windowed decoder can be computationally inefficient but allows fast resets. For both windowed decoders we identify errors which scale sublinearly in $d$---depending on the structure of the circuit---which can lead to logical failure, and we propose methods to adapt the decoding to remove such failures. Our work highlights the complexity and interest in efficient decoding of fast logic for the surface code.
\end{abstract}

\maketitle
\tableofcontents
% \linenumbers
\floatstyle{ruled}

% motivation & introduction, references to be added
%\newpage
\section{Introduction}

One of the many challenges in performing fault-tolerant quantum computing is that of implementing a fault-tolerant gate set in a quantum error-correction (QEC) code. Transversal gates are an attractive way of implementing gates because they are both fast and inherently low noise. Although it is impossible to construct a \textit{universal} transversal gate set in a single QEC code~\cite{ZCC,EK}, the Clifford group can be implemented transversally in both the 2D color and surface codes~\cite{Bombin_2006,kubica2015unfolding,moussa2016transversal,Breuckmann_2024}, leaving the $T$ gate to be done by some fault-tolerant state preparation and injection. In the surface code some of these gates have been referred to as ``fold''-transversal because they involve two-qubit gates between qubits in the same code block that are adjacent to each other when the surface code is ``folded'' along its diagonal, see \cref{fig:fold_trans_gates}. Because of this beyond-strictly-2D connectivity, transversal gates are particularly attractive in platforms with mobile qubits such as neutral atom qubits~\cite{Bluvstein_2023} or trapped-ion qubit chips (see e.g.~Ref.~\cite{Quantinuum2}). The fault-tolerance of transversal gates has been known for a long time, but it has only recently been proven that transversal gates and magic state injections can be performed \textit{quickly} (that is, with only a constant number of QEC rounds between each gate) without compromising the fault-tolerance of the algorithm~\cite{cain2024,algo-FT}. This stands in contrast with the purely-2D procedure of lattice surgery in the surface code, which can only perform a parity measurement on a subset of logical qubits once every $\Theta(d)$ QEC rounds where $d$ is the distance of the code.

Transversal Clifford gates therefore offer the opportunity for faster fault-tolerant logic---referred by some as ``constant-time'' logical gates~\cite{zhang2025}---but comes at the price of more complex decoding. In particular, in order for measurement errors to be correctly diagnosed, any surface code decoder must look at a syndrome record which has a temporal width that grows at least linearly with $d$~\cite{Dennis_2002, Campbell_2019},~\footnote{Note that this does not contradict the known existence of hierarchical cellular automaton decoders for the surface code which use local feedback corrections based on noisy parity check data, see Ref.~\cite{bala:automaton} and references therein.}. In other words, we need to decode \textit{across} transversal logical gates. Because transversal logical gates can map stabilizer generators to \textit{products} of generators, error detectors~\cite{McEwen_2023} may need to be set by the joint parity of more than two measurement outcomes. As a consequence, single measurement errors can flip more than two detectors, corresponding to so-called hyperedges in a decoding hypergraph~\cite{Sundaresan_2023}. Even in the surface code, the fast and simple minimum-weight perfect matching (\textsc{mwpm}) algorithm therefore cannot be directly applied to the full decoding problem.

To overcome this issue, Ref.~\cite{algo-FT} used a minimum-weight \textit{hypergraph} decoder to prove their results. Such a decoder is inefficient both because minimum-weight decoding of a hypergraph is inefficient in general, and because the cost of the decoder scales with the depth of the algorithm being run.

In this paper we make progress towards designing an efficient \textsc{mwpm}-based decoder across fast transversal gates in the surface code. After reviewing previous work in \cref{sec:prev}, we describe our main set-up and concepts in \cref{sec:prelim}. We avoid minimum-weight hypergraph decoding with a logical observing matching (\textsc{lom}) decoder (\cref{sec:logical_observable_decoder}), which applies \textsc{mwpm} only to matchable subgraphs of the decoding hypergraph. 
Under a ``basic’’ error model (defined in \cref{sec:noise_detectors_graph}), the \textsc{lom} decoder can correct any error of weight $<d/2$ in the unrotated surface code in arbitrary circuits with fast transversal Clifford gates and $T$ gate injections which use fault-tolerantly prepared $\ket{T}$ magic states. 
This stands in contrast to existing hypergraph decomposition methods in the literature, including both ``splitting’’ decoders and ``hierarchical’’ matching decoders~\cite{sahay2024, wan2024, guernut2024, chen2024transversal, beverland2021cost}, whose failures we discuss in \cref{sec:fail_split,sec:hier} respectively. 
One key detail in the \textsc{lom} decoder is how to handle so-called fragile logical measurement observables, that we discuss pedagogically in \cref{sec:fragile}. 
\added{Our fault-tolerant results are not specific to \textsc{mwpm} and hold for any (distance-preserving) graph-based decoder, such as Union Find~\cite{delfosse2021almost}. }
Numerical benchmarks of the \textsc{lom} decoder show high thresholds and accuracy compared to minimum-weight decoding under both phenomenological and circuit-level depolarizing noise and for both repeated and random logical two-qubit Clifford gate circuits (\cref{sec:num}).

However, the \textsc{lom} decoder is still inefficient because it has a decoding volume that scales with the depth of the circuit. To this end, we propose a \textit{windowed} version of the \textsc{lom} decoder in \cref{sec:window}. It turns out to be non-trivial to design a windowed matching decoder for circuits with fast gates, resets and measurements, that is both computationally efficient and able to correct all basic errors of weight $<d/2$. Instead, we introduce two variants of the \textit{windowed}-\textsc{lom} decoder: a ``basic'' variant that is computationally efficient but cannot decode circuits with fast resets, and a ``two-step'' variant that can decode circuits with fast resets but is no longer guaranteed to be efficient. Both of these variants require further modifications to be able to correct all low-weight error patterns, one of which requires synchronized resets and measurements (except those in $T$ injection circuits) to occur only every $\Omega(d)$ steps, but we conjecture that with these modifications both variants of the windowed-\textsc{lom} decoder can correct up to $d/2$ basic errors and have a noise threshold.
We end the paper with a discussion in \cref{sec:discuss} on possible future work, including the use of independent observer-decoders more generally in decoding.

\subsection{Previous and simultaneous work}
\label{sec:prev}
In Ref.~\cite{algo-FT}, it was first proved and numerically demonstrated that one can implement fault-tolerant quantum logic by having $O(1)$ QEC rounds after each transversal gate for, say, the unrotated surface code, supplemented by $\ket{T}$ magic states. The constructions in that paper and the earlier paper on correlated decoding, Ref.~\cite{cain2024}, invoked minimum-weight decoders or used decoders that can explicitly handle hyperedges, thus not necessarily being very efficient. Our goal here is to show how to construct an efficient fault-tolerant {\em matching}-based decoder.

Ref.~\cite{delfosse2023} proposed an algorithm to decompose hyperedges into edges (see also~\cite{gidney2021stim}), but we show that it does not perform well when decoding across transversal gates in Section~\ref{sec:fail_split}. \added{Refs.~\cite{wan2024, turner2025} decompose the hyperedges and iteratively run \textsc{mwpm} to achieve good logical performance. However, no formal guarantees are proven for the number of iterations and the logical performance for an arbitrary circuit and increasing code distance.} 
Furthermore, the hierarchical \textsc{mwpm} decoding methods from Refs.~\cite{sahay2024, guernut2024, chen2024transversal, beverland2021cost} cannot be used to fault-tolerantly decode an {\em arbitrary} sequence of transversal logical Clifford gates, as we argue in Section~\ref{sec:hier}.

Ref.~\cite{zhang2025} addresses similar questions of decoding as our work, with numerics decoding transversal logical gates such as the CNOT and $H$ for the surface code, interspersed by single QEC rounds. Ref.~\cite{zhang2025} provides a windowed decoder, but it does not reduce decoding to a matching problem, instead it uses heuristic approximate solvers to deal with hyperedges. \added{A limitation for this type of approximate solvers is their poor embedding into specific hardware for running real-time decoding. On the other hand, Refs.~\cite{caune2024demonstrating, google2024_belowthreshold} have already performed real-time decoding on experimental data with graph-based decoders, such as MWPM. }
%Instead, it solves the minimum-weight decoding problem with hyperedges using  like Gurobi. 
Windowed, also called sliding-window, matching decoders for memory experiments in the surface code have previously been considered and used in e.g. Refs.~\cite{Dennis_2002,Tomita_2014, O_brien_2017,skoric2023parallel, window_tan2023, viszlai2024}.

At the time of writing this paper and presenting our work at the APS March meeting 2025, we became aware of similar work on fast matching-based decoding of transversal gates for the surface code~\cite{cain+:upcoming}.

\begin{figure*}[tb]
    \centering
    \includegraphics[width=0.8\linewidth]{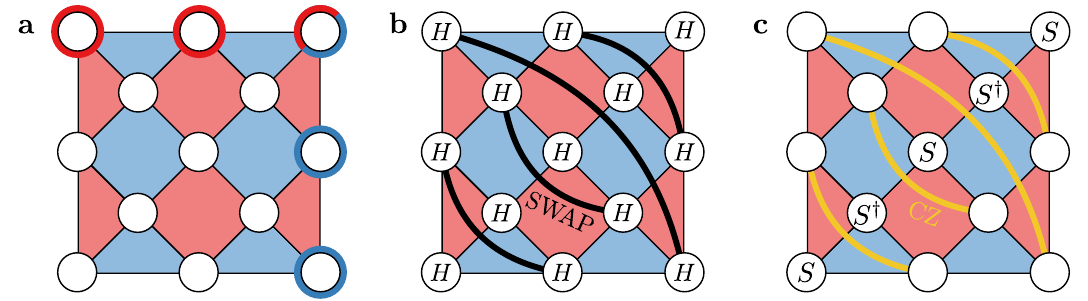}
    \caption{(a) The $d=3$ unrotated surface with data qubits as white circles and $X$- (resp. $Z$-) stabilizers as shaded red (resp. blue) regions. Representatives of the Pauli $\overline{X}$ (reps. $\overline{Z}$) are shown by the red (resp. blue) circles around the data qubits on which it has support. (b, c) The fold-transversal $\overline{H}$ and $\overline{S}$ gates, respectively. The logical Hadamard gate is implemented as a layer of Hadamard gates followed by a layer of SWAP gates. The long-range SWAPs (resp. CZs) are shown as thick black (resp. yellow) lines. }
    \label{fig:fold_trans_gates}
\end{figure*}

\section{Preliminaries}
\label{sec:prelim}

We begin this work with a pedagogical introduction to the concepts we will use throughout the paper.

\subsection{Circuits, (fragile) observables, and observing regions}
\label{sec:coo}

We consider an arbitrary (unencoded) quantum circuit $\mathcal{C}$ compiled into $\ket{0}$, $\ket{+}$ and $\ket{T}$ state preparation, destructive $Z$- and $X$-basis measurements, and Clifford gates that may be conditioned on the outcome of those measurements. 
%Such a model is universal since it allows $T$ gates to be performed via $\ket{T}$ state injection, and since a projective measurement can be implemented using a destructive measurement followed by a reset and a Pauli gate conditioned on the measurement. 
If a measurement is used to condition a Clifford gate, we call it a {\em conditioning} measurement; otherwise, we call it a {\em final} measurement. Each different set of outcomes of conditioning measurements leads to a different \textit{realization} of the circuit. The outcome of the computation is then determined from some subset of the final measurements.

We consider each layer of resets, gates and/or measurements to take place at a time labeled by a half-integer, so $t+1/2$ with $t\in\mathbb{Z}$. With this, we say that a \textit{circuit location} is a point in the circuit in between layers of gates described by a set of coordinates $(t,j)$, where $t\in\mathbb{Z}$ represents time and $j$ is a qubit index. 

We make use of \textit{Pauli regions} of the circuit, which is formally a list of Pauli operators $P=[P_{(t,j)}]$, with each element of the list labeled by a circuit location. Each element is itself a single-qubit Pauli operator $I$, $X$, $Y$ or $Z$ (ignoring phases). Two Pauli regions $P_{1}$ and $P_{2}$ can be multiplied to each other element-wise by multiplying the Pauli operators at each circuit location. We say that two Pauli regions anticommute if their components anticommute at an odd number of circuit locations; otherwise we say they commute. We also consider time-slices of Pauli regions denoted $P_{t}$, which is a Pauli operator acting on all of the qubits that are active at time $t$. Note that a Pauli region would be a Pauli operator if we associated a qubit with each circuit location as is done for space-time codes~\cite{BFHS:spacetime}.

We define an \textit{observable} $O=\{M_{1},M_2,\dots\}$ as any non-empty subset of measurements in the circuit $\mathcal{C}$ where the outcome of the observable is the {\em parity} of the outcomes of each measurement $M_i$ in the observable (the measurements $M_i$ can be conditioning and/or final measurements and either $X$ or $Z$-measurements).  In a slight abuse of notation we will also use $O$ to refer to the \textit{Pauli representation} of the observable, which is the Pauli region that has a Pauli $Z$ at the circuit locations immediately preceding each of the $Z$-measurements in the observable (and likewise a Pauli $X$ preceding an $X$-measurement), and identity operators elsewhere. This way two observables $O_{1}$ and $O_{2}$ can be multiplied by each other, constructing $O_1 O_2$ by multiplying their Pauli representations. Later in \cref{sec:window}, we will additionally refer to observables that do not correspond to measurements in the circuit, and instead are just a Pauli operator in a circuit that hypothetically \textit{could} be measured.

Given a realization of the circuit, we define the \textit{observing region} $O^{\leftarrow}$, which is itself a Pauli region, of the observable $O$ as follows. This object has been called the spackle of a Pauli string in Ref.~\cite{BFHS:spacetime}, the back-cumulant of $O$ in Ref.~\cite{delfosse2023:spacetime}, and has also been used in Refs.~\cite{gottesman2022,McEwen_2023}. We consider all the measurement locations $M_i$ of the observable and apply the following rules:
\begin{enumerate}
    \item \label{rule1} Begin by setting $O^{\leftarrow}=O$, the Pauli representation of the observable; that is, for all circuit locations $(t,j)$, set $O_{(t,j)}=X$ (respectively, $Z$) if there is a $X$-measurement ($Z$-measurement) at $(t+1/2,j)$ which is in the observable $O$, otherwise set $O_{(t,j)}=I$.
    \item \label{rule2} Starting at the last time-step on which the observable has support and proceeding iteratively backwards in time, apply the following rule. If the circuit locations $\{(t,j): j\in J\}$ occur before a layer of Clifford gates $C$ at time $t+1/2$ for some subset of qubits $J$, set $O^{\leftarrow}_{(t,J)}=C^{\dag}{O}^{\leftarrow}_{(t+1,J)} C$, where $O^{\leftarrow}_{(t+1,J)}$ is the $|J|$-qubit Pauli operator acting on each qubit according to $O^{\leftarrow}_{(t+1,j)}$ for $j\in J$.
\end{enumerate}
Note that when the backpropagation hits a reset or a $T$-ancilla preparation, nothing backpropagates further. Note also that the observing region can only be calculated after all the conditional gates that occur before the last measurement included in $O$ have been fixed. Loosely speaking, the observing region $O^{\leftarrow}$ is obtained as the ``backpropagation'' of the Pauli representation of the observable $O$. Importantly, the observing region also corresponds to the region of the circuit where an error in the circuit flips the outcome of the observable: if $O^{\leftarrow}_{(t,j)}=Z$, then a Pauli $X$ or $Y$ error at the circuit location $(t,j)$ flips the value of the observable. 

We also consider \textit{reset stabilizers} of the circuit. Analogous to our definition of an observable $O$ as a set of measurements, we define a reset stabilizer $S$ as a set $\{R_{1},\dots\}$ of $\ket{+}$- or $\ket{0}$-state resets in the circuit $\mathcal{C}$. 
%Note that in practice we will only need to consider a ``generating set'' of reset stabilizers $S_{i}=\{r_{i}\}$ for each reset $r_{i}$ in the circuit $\mathcal{C}$, since the remaining reset stabilizers that contain more than one reset can be viewed as a ``product'' of elements in the generating set. 
The Pauli representation of a reset stabilizer $S$ is the Pauli region which has $X$ (respectively, $Z$) operators at the circuit locations immediately following each of the $\ket{+}$-resets (respectively, $\ket{0}$-resets) in $S$, while the reset stabilizing region $S^{\rightarrow}$ is the ``forward-propagation'' (cumulant in Ref.~\cite{delfosse2023:spacetime}) of $S$ using a time-reversed version of Steps \ref{rule1}-\ref{rule2} used to define the observing region. An important fact that we prove for completeness (see e.g. Ref.~\cite{gottesman2022} and Proposition 3 in Ref.~\cite{delfosse2023:spacetime}) in Section~I of the Supplemental Material~(SM) is the following:
\begin{lemma}\label{lem:propa}
    $S^{\rightarrow}$ and $O$ anticommute if and only if $S$ and $O^{\leftarrow}$ anticommute.
\end{lemma}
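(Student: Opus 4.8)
The plan is to recognize \cref{lem:propa} as a conservation law for the symplectic (mod-$2$ anticommutation) form carried through the circuit: this form is preserved by the Clifford layers and is ``sourced'' only at resets and measurements. Throughout I work within a single fixed realization of $\mathcal{C}$, so that all conditional Clifford gates are fixed and the back- and forward-propagation rules are unambiguous. For two single-qubit Paulis $P,Q$ write $\langle P,Q\rangle\in\{0,1\}$ for the indicator that they anticommute, and extend this to Pauli operators on a set of qubits by summing the single-qubit indicators mod $2$; with this notation, ``$A$ and $B$ anticommute'' as Pauli regions means $\sum_{(t,j)}\langle A_{(t,j)},B_{(t,j)}\rangle=1 \pmod 2$.

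First I would introduce the time-sliced quantity $g(t)=\langle S^{\rightarrow}_{t},O^{\leftarrow}_{t}\rangle$, the mod-$2$ anticommutation of the two full propagated regions restricted to the qubits active at time $t$. The heart of the argument is a \emph{telescoping} identity: since $S^{\rightarrow}_{t}=I$ for every $t$ before all resets in $S$ and $O^{\leftarrow}_{t}=I$ for every $t$ after all measurements in $O$, we have $g(t)=0$ at both temporal ends, and hence $\sum_{t}\bigl(g(t+1)-g(t)\bigr)=0 \pmod 2$. It then suffices to evaluate the increment $\Delta g(t)=g(t+1)-g(t)$ across each layer at time $t+1/2$ and show that the total splits exactly into the two local anticommutation counts appearing in the lemma.

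Next I would classify the layers. Across a layer of Clifford gates $C$ on qubits $J$, the defining rules give $S^{\rightarrow}_{(t+1,J)}=C\,S^{\rightarrow}_{(t,J)}\,C^{\dag}$ and $O^{\leftarrow}_{(t+1,J)}=C\,O^{\leftarrow}_{(t,J)}\,C^{\dag}$, and because conjugation by a Clifford preserves anticommutation we get $\Delta g=0$. The only sources are therefore resets and measurements. At a reset on qubit $j$ the backpropagation of $O$ is cut, so $O^{\leftarrow}_{(t,j)}=I$ and the ``before'' contribution vanishes; the ``after'' contribution is $\langle S^{\rightarrow}_{(t+1,j)},O^{\leftarrow}_{(t+1,j)}\rangle$, which equals $\langle S_{(t+1,j)},O^{\leftarrow}_{(t+1,j)}\rangle$ when the reset lies in $S$ (where $S^{\rightarrow}$ is born equal to the reset Pauli) and is $0$ otherwise. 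Symmetrically, at a measurement on qubit $j$ the forward flow of $S$ is cut, so $S^{\rightarrow}_{(t+1,j)}=I$, and $\Delta g$ reduces to $\langle S^{\rightarrow}_{(t,j)},O_{(t,j)}\rangle$, which is nonzero only when the measurement lies in $O$. Summing $\Delta g$ over all layers yields $0=\langle S,O^{\leftarrow}\rangle+\langle S^{\rightarrow},O\rangle \pmod 2$, so the two anticommutation parities coincide, which is exactly the claim.

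The main obstacle I anticipate is the careful bookkeeping at the non-unitary layers: one must verify the two ``cutting'' facts directly from the propagation definitions, namely that $O^{\leftarrow}$ vanishes at the location immediately preceding any reset and that $S^{\rightarrow}$ vanishes immediately following any measurement, including the edge cases of $T$-ancilla preparations (which cut like resets but never belong to $S$ or $O$) and of qubits that are reinitialized mid-circuit. Establishing these localization statements, and matching the ``birth'' value of a region to the $X$/$Z$ type of the corresponding reset or measurement, is where the symmetry between the backpropagation rules and their forward time-reversal must be invoked precisely; once they are in hand, the telescoping and the Clifford-invariance steps are routine.
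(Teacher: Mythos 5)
Your proof is correct, and it takes a genuinely different route from the paper's own proof (\cref{sec:proppauli}), so a comparison is worthwhile. The paper interpolates between the two sides of the lemma using \emph{hybrid, partially propagated} objects: it defines $O^{t\leftarrow}$ (backpropagation applied only at times $\geq t$) and $S^{\rightarrow t}$ (forward propagation applied only at times $\leq t$), and shows that the total anticommutation count between this pair is \emph{exactly} invariant as the frontier $t$ advances one layer at a time; the lemma follows because the endpoint values of this invariant are $\langle S,O^{\leftarrow}\rangle$ (at $t=0$, where nothing has been forward-propagated and everything has been backpropagated) and $\langle S^{\rightarrow},O\rangle$ (at the final time). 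You instead keep both regions fully propagated and telescope the time-slice pairing $g(t)$, which vanishes at both temporal ends, so the lemma's two sides appear not as endpoint values but as \emph{source terms} accumulated at resets in $S$ and at measurements in $O$. Both arguments rest on the same two ingredients---invariance of the mod-2 anticommutation form under Clifford conjugation, and the cutting of propagation at non-unitary locations---but the decompositions differ. The paper's version buys uniformity: every layer, unitary or not, preserves the invariant, so there are no sources to classify and the whole proof is a single inductive step, at the price of manipulating the somewhat awkward partially propagated objects. Your version uses only the objects already appearing in the statement and reads like a discrete conservation law, but it shifts the work into the bookkeeping at the non-unitary layers; you correctly identify this as the crux, and your handling is sound: $O^{\leftarrow}$ is identity immediately before any reset or $\ket{T}$-preparation (nothing backpropagates through them, and such locations carry no Pauli representation of $O$), $S^{\rightarrow}$ is identity immediately after any destructive measurement, the birth value of $S^{\rightarrow}$ at a reset equals the Pauli representation of $S$ there if the reset belongs to $S$ and identity otherwise (in particular, $\ket{T}$-preparations never belong to $S$ by definition), and dually for $O^{\leftarrow}$ at measurements. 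With these facts, each non-unitary location contributes either zero or exactly one term of $\langle S,O^{\leftarrow}\rangle$ or of $\langle S^{\rightarrow},O\rangle$ to the telescoping sum, which is precisely what your final identity requires.
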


Lastly, we define fragile, or unreliable, observables. The relevance of fragile observables relates to the encoded execution of the circuit $\mathcal{C}$ and the existence of fragile time-boundaries, see Section \ref{sec:frag-time}, but its definition depends only on $\mathcal{C}$ itself, as follows:

\begin{definition}[Fragile observable]
     We say that an observable $O$ is \textit{fragile} if its observing region $O^{\leftarrow}$ anticommutes with at least one reset stabilizer $S$ of the circuit; or, equivalently via Lemma \ref{lem:propa}, if the observable $O$ anticommutes with at least one reset stabilizing region $S^{\rightarrow}$ of the circuit.
\label{def:fo}
\end{definition}

In the absence of conditional Clifford gates, a fragile observable always has a 50-50 measurement outcome, since applying the reset stabilizer does not change the state while it changes the measurement outcome of the observable. In the presence of conditional-Clifford gates, an observable can be fragile or not depending on which Clifford gates are applied, and can even have a {\em deterministic} outcome (an example is circuit \ref{eq:2_T_gate_circuit} where the observable $O_4$ is fragile if both conditional $S$ gates are applied). Fragile observables have been called ``non-deterministic'' observables in Ref.~\cite{algo-FT}. We will call non-fragile observables {\em reliable} observables. Note that non-fragile observables are not necessarily deterministic, due to the presence of $\ket{T}$-state injections.
%, aligned with Ref.~\cite{cain+:upcoming}.

\subsection{Fault-tolerant circuit implementation in the surface code}
\label{sec:ft-surface}

To implement $\mathcal{C}$ fault-tolerantly, we encode each qubit into a distance-$d$ unrotated surface code, using $n=2d^{2}-2d+1$ data qubits. Throughout the logical computation we perform one QEC round of parity check measurements after each circuit layer. 
%For a discussion on having more logical transversal gates after each QEC round, see Section \ref{sec:discuss}. 
%Since we use a basic error model, see \cref{sec:noise_detectors_graph}, internal aspects of how the single round of parity check measurements is executed are not relevant in the theoretical parts of this work. 
%For now, we consider each QEC round to use a set of $n-1$ ancilla qubits reset to $\ket{0}$ or $\ket{+}$, a depth-4 proper CNOT circuit, and $X$- and $Z$-measurements of the ancillas to extract the full syndrome in a single round of QEC. Note that in our numerical simulations we compile this into $Z$-basis resets and measurements, $H$ and CZ gates so that our results are close to experimental realisations, see Appendix~\ref{sec:schedule}. 
We compile each circuit element in $\mathcal{C}$ using the following logical operations on the unrotated surface code:
\begin{enumerate}
    \item Transversal logical $\ket{\overline{+}}$ (respectively, $\ket{\overline{0}}$) state preparation; that is, the preparation of $\ket{+}^{\otimes n}$ ($\ket{0}^{\otimes n}$) on each data qubit.
    \item Fault-tolerant magic state preparation, which could be achieved by, for example, magic state distillation. Importantly, we assume that the values of all $X$- and $Z$-stabilizers are known after the preparation of the state with an error rate less than $\sim p^{d/2}$ for some small $p \ll 1$.
    \item Transversal logical $\overline{X}$ (respectively, $\overline{Z}$) measurement; that is, an $X$-basis ($Z$-basis) measurement of each data qubit in the $n$-qubit code block, which is processed, together with other measurement data, by the decoder, see Section \ref{sec:noise_detectors_graph}.
    \item Fold-transversal $\overline{S}$ and $\overline{H}$ gates as depicted in Fig.~\ref{fig:fold_trans_gates} and the
    transversal $\overline{\text{CNOT}}$ gate, possibly conditioned on transversal logical measurements.
\end{enumerate}
All Pauli operations---both those arising from decoding and those directly in $\mathcal{C}$---are tracked in software. We denote the implementation of $\mathcal{C}$ encoded in the surface code as $\overline{\mathcal{C}}$, and when required for clarity we refer to $\mathcal{C}$ as the \textit{bare} circuit and $\overline{\mathcal{C}}$ as the \textit{encoded} circuit.

\subsection{Error model, detectors in the pre-gate frame, and decoding graph} \label{sec:noise_detectors_graph}

For all results in \cref{sec:logical_observable_decoder} and \cref{sec:window} we assume a \emph{basic} error model in the circuit $\overline{\mathcal{C}}$ which is as follows:
\begin{itemize}
	\item independently apply Pauli $X$ and Pauli $Z$, each with probability $p$, to each data qubit before each logical gate (followed by the single QEC round),
	\item apply an $X$ error (resp.~$Z$ error) with probability $p$ before a physical qubit $Z$-measurement (resp.~$X$-measurement).
\end{itemize}
Thus in this error model, no errors are inserted prior to the QEC round, but each parity check measurement in the QEC round can be faulty due to the insertion of an error prior to the ancilla qubit measurement. \added{We will show below that decoding efficiently across transversal gates is a non-trivial problem to solve even with this basic error model. Therefore, our strategy in this paper is to design a fast decoder that can efficiently decode errors in the basic model and prove its fault-tolerance (\cref{thm:single-LOM_FT})} However, these results directly bear on circuit-level noise since circuit-level errors have the same effect as products of a {\em constant} number of errors in the basic model, and we test this in the numerics in Section~\ref{sec:num} where we model more realistic phenomenological and circuit-level depolarizing noise models. 
\added{One could also apply known heuristic decoding improvements to \textsc{mwpm} to optimize decoding for circuit-level noise, however this is not the focus of this paper}.

%we discuss the difference in decoding errors in the basic error model vs. circuit-level noise in Section \ref{sec:circuit_level_slom}.

%Given the noise model, the task of a decoder is then to take the measurement outcomes and determine the (error-corrected) outcome of each of the logical measurements. 

We define a set of \textit{detectors}---sets of QEC measurement outcomes whose parity is deterministic in the absence of errors~\cite{McEwen_2023}. In the absence of logical operations, each detector simply consists of two measurements of the same ancilla qubit in consecutive QEC rounds, as is standard for decoding memory experiments on the surface code. If the measured outcome of a detector differs from its ideal outcome, we call this a \textit{defect}, and the full set of defects is the error \textit{syndrome}. Given the definition of a detector, one can define its \textit{detecting region}~\cite{McEwen_2023}, namely the Pauli region obtained by backpropagating the Pauli representation of the detector as observable, defined in Section \ref{sec:coo}. 

%With the standard definition of detectors , the detecting regions are \textit{local} in space-time: for the surface code they span at most only two consecutive QEC rounds and at most five qubits in space.

In the presence of transversal logical gates, we wish to find a set of detectors such that the detecting regions are still local in space-time. Such locality is desirable because it implies that each detector can only be flipped by a constant number of errors in the circuit. 
%One can view this choice as a desirable preprocessing step in the decoding algorithm, since in principle any complete set of detectors contains the same information.
Because parity checks are spread non-trivially by the logical gates, the detector definitions that achieve this, depend on the logical gates that are applied. 
In this paper we choose a so-called \textit{pre-gate} frame for the detectors. In this frame, each $Z$- (resp.~$X$-) detector only involves a single $Z$ (resp.~$X$) parity measurement outcome before the logical gate and a possibly non-trivial linear combination of binary measurement outcomes after the logical gate, see its precise definition in Section~II in the SM~\cite{supp}. A very similar frame choice is the \textit{post-gate} frame where each detector only involves a single parity measurement outcome after the logical gate. These detector frames have been used in other previous works on decoding logical gates, see e.g.~Refs.~\cite{cain2024, wan2024}.

\begin{figure*}[tb]
    \centering
    \includegraphics[width=0.99\textwidth]{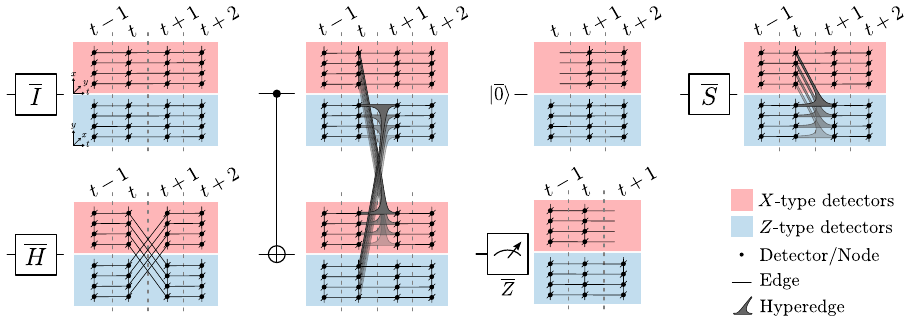}
    \caption{Decoding hypergraphs $\mathcal{G}$ when executing a logical operation assuming a basic error model, using the pre-gate frame. Time flows from left to right and the vertical gray dashed lines denote half-integer layers when a logical gate followed by a QEC round happens, so that detectors are labeled by integer times $t\in \mathbb{Z}$. In the subfigures, the logical gate+QEC round happens at layer $t+1/2$ while for all other layers we perform the identity gate. Logical reset and measurement happen at $t+1/2$ and are only shown for $\overline{Z}$, as the decoding hypergraphs for the $\overline{X}$-basis are the same but with the $X$ and $Z$ detector types swapped. The decoding hypergraphs are three-dimensional for the surface code, but we have drawn a two-dimensional slice for easier visualization. Note that for the logical $\ket{\overline{0}}$ there are no weight-1 time-like edges---the time-boundary is closed---while for the $X$-detectors there are weight-one time-like edges---the time-boundary is open.}
    \label{fig:hyperedges}
\end{figure*}

Given an observable $O=\{M_{1},M_2,\dots\}$ of the bare circuit $\mathcal{C}$, we can lift this to a \textit{logical} observable $\overline{O}$ of the encoded circuit $\overline{\mathcal{C}}$ by replacing each bare measurement $M_{i}$ in $O$ with a set of physical measurements that are in the support of the logical operator in $\mathcal{C}$. Throughout the manuscript we choose a logical representative that consists of a string of $X$ (resp.~$Z$) operators along a spatial boundary of the surface code, as shown in \cref{fig:fold_trans_gates}(a). The encoding gives rise to an observing region $\overline{O}^{\leftarrow}$ of the fault-tolerant circuit $\overline{\mathcal{C}}$ that corresponds to all the physical circuit locations where a Pauli flip would flip the recorded outcome of the observable $\overline{O}$. In a more general circuit-level noise model, $\overline{O}^{\leftarrow}$ can include locations on both data and ancilla qubits.

The product of the outcomes of each of the physical measurements that make up $\overline{O}$ gives the \textit{raw} outcome of $\overline{O}$. The task of the decoder is to take the outcomes of all the detectors, predict the error that occurred, and then to infer which observables are flipped by this error. Combining the decoder's prediction with the raw outcome of $\overline{O}$ gives the \textit{error-corrected} outcome of $\overline{O}$.

Using the basic error model and the definition of detectors, we construct a decoding hypergraph $\mathcal{G}=(\mathcal{V},\mathcal{H})$ consisting of vertices $v \in \mathcal{V}$ for each detector in $\overline{\mathcal{C}}$. There is a hyperedge $h\in \mathcal{H}$ for each $X$ or $Z$ error in the basic error model, such that the endpoints of each hyperedge correspond to the detector(s) that are flipped due to the error \footnote{For brevity we use the term \textit{hyperedge} to refer to hyperedges of any weight, meaning vertex support---including weight two---while we reserve the term \textit{edge} to only refer to weight-2 hyperedges.}. For each observable $\overline{O}$ of the circuit, there is a subset of the hyperedges $\mathcal{H}_{O}$ that cause a flip to the observable, we call this subset of hyperedges collectively the \textit{observing hyperedge set}, namely $\mathcal{H}_{O}$ is the set of hyperedges that correspond to an error that anticommutes with the logical observing region $\overline{O}^{\leftarrow}$. 
In our basic error model, for any observable $\overline{O}$, the observing hyperedge set $\mathcal{H}_{O}$ is a set of edges, i.e. it consists only of a set of space-like edges on one of the spatial boundaries of $\mathcal{G}$, see an example in Fig.~\ref{fig:decoding_subgraphs}. Hence we will refer to $\mathcal{H}_{O}$ as the observing edge set in the remainder of this paper. For circuit-level noise $\mathcal{H}_O$ can be a set of hyperedges. \added{Note that the observing edge set is closely related to the observing region: the observing region is defined at the level of unencoded circuits, but it has a corresponding expression at the physical qubit and decoding graph level (depending on the error model) that exactly coincides with the observing edge set\footnote{\added{Observing edge sets have been previously depicted in 3D decoding hypergraphs as logical sheets, see Refs.~\cite{chamberland2022universal, caune2024demonstrating}.}}.} The structure of $\mathcal{H}_{O}$ will be important in the following section when we define our single logical observable matching decoder.

Let's discuss the structure of the decoding hypergraph $\mathcal{G}$ for our basic error model. It consists of both space-like hyperedges---those arising from data-qubit errors---that connect detectors at the same time-step, and time-like hyperedges---those arising from measurement errors---that connect detectors in two adjacent time-steps, see Fig.~\ref{fig:hyperedges}. At each circuit location $(t,j)$, the space-like hyperedges form an identical subgraph consisting of a disconnected $X$- and $Z$-component corresponding to the $X$- and $Z$-detectors. Some space-like hyperedges have weight one, i.e.~they are connected to only one vertex, and these we say are \textit{connected to the boundary}. When decoding with a minimum-weight decoder, one would add a boundary vertex $v_{\text{bdy}}$ to the graph and connect each weight-one edge to it, but we do not consider such a modification here.

\subsubsection{Time-like hyperedges and fragile time-boundaries}
\label{sec:frag-time}

The time-like hyperedges may touch one, two \textit{or} three vertices, as shown in Fig.~\ref{fig:hyperedges}. In particular, weight-one time-like edges correspond to measurement errors in $Z$- (resp.~$X$-) detectors immediately following a $\ket{+}^{\otimes n}$ (resp. $\ket{0}^{\otimes n}$) reset, or immediately preceding an $X$- (resp.~$Z$-) measurement, and they will also later appear when we consider windowed decoding in \cref{sec:window}. In general, we refer to these weight-one edges collectively as constituting an \emph{open} time-boundary (as a single defect can be matched to it), as opposed to a closed time-boundary to which a single defect cannot be matched. More specifically, if the open time-boundary arises from a reset or measurement (instead of from windowed decoding), we refer to it as a \textit{fragile} time-boundary. Open time boundaries are potentially dangerous for decoding since only one type of stabilizer has a fixed eigenvalue. Moreover, the observing edge set of any fragile observable, see Definition \ref{def:fo}, ``borders'' at least one fragile time boundary, in the sense that at least one hyperedge in the observing edge set shares a vertex with at least one edge in the fragile time boundary. How to avoid decoding fragile observables will be discussed in Section \ref{sec:fragile}.

A weight-three time-like edge corresponds to a measurement error on an ancillary qubit immediately following a logical $\overline{S}$ or $\overline{\mathrm{CNOT}}$ gate. For example, a measurement error on an ancilla qubit measuring a $Z$-check immediately following an $\overline{S}$ gate triggers three detectors: two $Z$-detectors before and after the measurement \textit{and} an $X$-detector before the measurement whose detecting region includes the ancilla measurement by the definition of the pre-gate frame, see Tab.~I from SM~\cite{supp} and the gray hyperedge in Fig.~\ref{fig:bad_decomposition_plain_MWPM}(b).

\begin{figure*}[tb]
    \centering
    \includegraphics[width=0.9\textwidth]{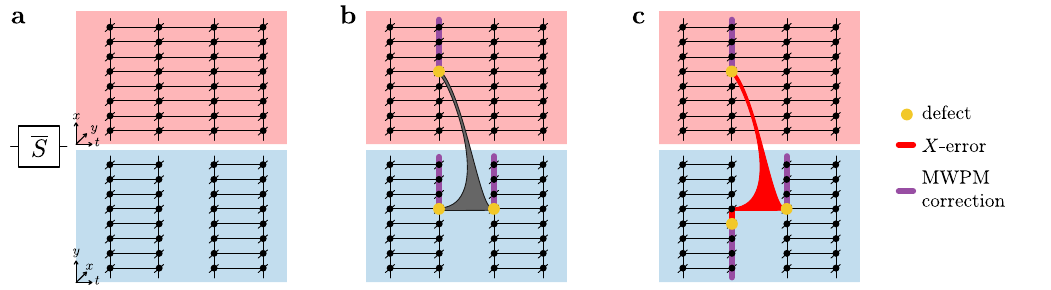}
    \caption{Hyperedge decomposition and error combination that explains the bad performance of a ``splitting-hyperedge'' matching decoder in the pre-gate frame. (a) $\mathcal{G}_{\text{matchable}}$ of the $\overline{S}$ gate, which is built by only taking the edges from the decoding hypergraph in Fig.~\ref{fig:hyperedges}. (b) The procedure from Ref.~\cite{delfosse2023} to decompose the gray hyperedge goes as follows: (1) the detectors involved in the hyperedge are triggered (yellow dots), (2) such syndrome is decoded with \textsc{mwpm} using $\mathcal{G}_{\text{matchable}}$, and (3) the correction (purple edges) corresponds to the hyperedge decomposition. Note that the number of purple edges in the decomposition can badly grow with the code distance. (c) The two errors marked in red lead to a logical error when decoded using $\mathcal{G}_{\text{matchable}}$, thus the logical scaling is at most $O(p^2)$ for any distance.}
    \label{fig:bad_decomposition_plain_MWPM}
\end{figure*}

Clearly, in the absence of the weight-3 time-like hyperedges, $\mathcal{G}$ would be matchable and thus decodable using \textsc{mwpm}. A simple approach to decoding $\mathcal{G}$ is therefore to follow the ``splitting-hyperedge'' methods of Ref.~\cite{delfosse2023} and Stim~\cite{gidney2021stim}; that is, to perform decoding only using the subgraph $\mathcal{G}_{\text{matchable}}$ of $\mathcal{G}$ that contains weight-2 edges such that $\mathcal{G}_{\text{matchable}}$ is matchable. We could then decompose each weight-3 hyperedge into a set of edges in $\mathcal{G}_{\text{matchable}}$ whose endpoints match the endpoints of the hyperedge, and use this decomposition to update the weights of the edges in $\mathcal{G}_{\text{matchable}}$. 
For example, this method can be applied to $Y$ data-qubit errors which produce four defects in standard circuit-level noise decoding for the surface code.

However, this approach is not fault-tolerant for our circuits: there are weight-2 error patterns that lead to a logical error, as shown in Fig.~\ref{fig:bad_decomposition_plain_MWPM}(c). The key issue is that the weight-3 time-like hyperedge cannot be decomposed into a set of weight-2 time-like edges, and instead must use a set of space-like edges. If the hyperedge is located $\sim d/2$ away from each of the spatial boundaries, the number of space-like errors that must be used to decompose the hyperedge is $\sim 3d/2$, see Fig.~\ref{fig:bad_decomposition_plain_MWPM}(b). Consequently, with just the weight-2 error pattern shown in Fig.~\ref{fig:bad_decomposition_plain_MWPM}(c), the matching decoder on $\mathcal{G}_{\text{matchable}}$ will make a logical error. We demonstrate this numerically in Section~\ref{sec:fail_split}. We therefore require a different approach to decode $\mathcal{G}$ using \textsc{mwpm}, which we present in the following section.

\section{Logical observable matching (\textsc{lom}) decoding} \label{sec:logical_observable_decoder}

In this section we design a matching-based decoder for the hypergraph $\mathcal{G}$. 
%As explained in the previous section, we cannot simply remove all weight-3 hyperedges from $\mathcal{G}$ to obtain a matchable subgraph. 
%Instead, what we will do is find a subset of vertices $V\subseteq\mathcal{V}$ such that every hyperedge $h\in\mathcal{H}$ has either zero or two endpoints in $V$; that is, $|h\cap V|=0$ or $2$. Then, we can define a subgraph $\mathcal{G}_{V}=(V,E)$ where $E=\{h\cap V:h\in\mathcal{H}\}$ is a set of weight-2 edges, making $\mathcal{G}_{V}$ a matchable graph.
The key observation enabling our decoder is that for any single observable $O$, it is possible to define a matchable subgraph that ``follows'' the \textit{observing region} $O^{\leftarrow}$ of the observable. Therefore, by performing matching on this subgraph of the decoding graph, we are able to determine if a logical error occurred that flips the observable $O$. We call this decoder the \textit{Single Logical Observable Matching} decoder, or single-\textsc{lom} for short. In \cref{sec:windowless}, we turn the single-\textsc{lom} decoder into the \textsc{lom} decoder: a windowless matching decoder for the full circuit $\mathcal{C}$ that works by running a \textit{separate} iteration of the single-\textsc{lom} decoder for each of the independent reliable observables in the circuit. 

%The distance-preserving property of the single-\textsc{lom} decoder separates it from the so-called ``hierarchical'' MWPM decoders that have been used in previous works, for which low-weight basic errors can cause a logical error, as we show in \cref{sec:hier}.

%However, our windowless decoder suffers from a fundamental decoding inefficiency as the detector volume that is processed by an iteration of the single-\textsc{lom} decoder grows with the circuit depth of $\mathcal{C}$. To overcome this, we propose in \cref{sec:window} a \textit{windowed} version of the multi-\textsc{lom} decoder. However, this comes at the cost of no longer achieving the circuit distance for large distances.

\subsection{Single logical observable matching decoding}\label{sec:slom}

\begin{figure*}
    \includegraphics[width=0.9\linewidth]{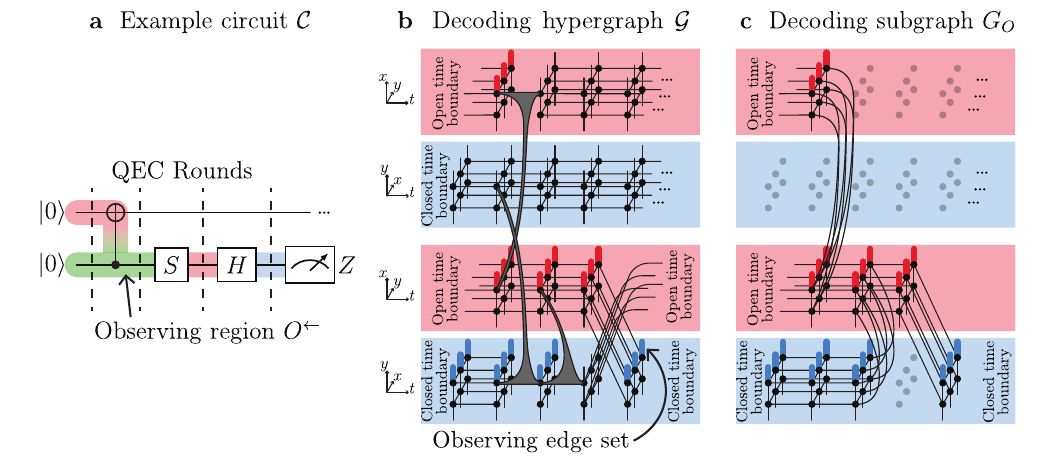}
    \caption{The decoding hypergraph and subgraph $G_O$ for an observable $O$ for an example circuit \footnote{Actually, the $Z$ measurement observable which is decoded here happens to be fragile, and decoding fragile observables can be avoided for reasons discussed in Section \ref{sec:fragile}.}. (a) The observing region $O^{\leftarrow}$ of the observable $O$ being the $Z$-measurement is shown, where red, green and blue shading indicates $X$, $Y$ and $Z$ support respectively. (b) The decoding hypergraph $\mathcal{G}$ of the encoded circuit $\overline{\mathcal{C}}$ for the $d=3$ unrotated surface code. The vertices of $\mathcal{G}$ are partitioned based on the logical qubit to which they pertain and whether they represent $X$- (red background) or $Z$- (blue background) detectors. Note that the $x$ and $y$ spatial coordinates have been swapped between the $X$- and $Z$-detectors for ease of visualization. \added{Also note that we have only shown the weight-3 hyperedges that appear on the top-front qubit to reduce clutter; weight-3 hyperedges with the same pattern also exist at all the other spatial coordinates in $\mathcal{G}$ but are not shown here.} We have also labeled all the edges that make up the observing edge set $\mathcal{H}_{O}$ (in blue and red), and the open and closed time boundaries that arise from resets and measurements in (a). (c) The decoding subgraph $G_{O}$ for the observable $O$. As explained in the text, $G_{O}$ contains all the vertices $V_{O}$ of $\mathcal{G}$ that pertain to the same logical qubit and represent a detector of the same Pauli type ($X$ or $Z$) as an edge in the observing edge set. It clearly has the property that it is a graph with no hyperedges. Vertices that are excluded from $V_{O}$ are shaded in gray.}\label{fig:decoding_subgraphs}
\end{figure*}

%We begin with an informal description of the single logical observable matching (single-\textsc{lom}) decoder, for a precise algorithm, see \cref{alg:slom} in Appendix \ref{sec:slomalg}. 

Given an observable $O$, we define the
subgraph $G_O=(V_O,E_O)$ of $\mathcal{G}$ as follows, see the example circuit in Fig.~\ref{fig:decoding_subgraphs}. To define the vertex subset $V_{O}\subseteq\mathcal{V}$, we take the observing edge set $\mathcal{H}_{O}$ of $O$, which consists of a set of space-like edges on the spatial boundary of the surface code. Each edge $e\in\mathcal{H}_{O}$ is connected to the boundary, with its only endpoint a vertex that occurs at some logical circuit location $(t,j)$ and corresponds to an $X$- or a $Z$-detector. To construct the vertex subset $V_{O}$, we include all vertices that occur at the same logical circuit location $(t,j)$ and correspond to a detector of the same Pauli type as an edge in $\mathcal{H}_{O}$, as shown in \cref{fig:decoding_subgraphs}. Given the vertex subset $V_{O}$ we now define the edge subset $E_{O}=\{h\cap V_{O}:h\in\mathcal{H},h\cap V_{O}\neq \emptyset\}$, which is intuitively the ``projection'' of the hyperedges in $\mathcal{H}$ onto the vertex subset $V_{O}$, \added{and $\mathcal{H}_0\subseteq E_0$}. It is a straightforward exercise to verify that $E_O$ is indeed a set of edges for all possible choices of observables for the $\{\overline{H},\overline{S},\overline{\mathrm{CNOT}}\}$ gates, by visual inspection using \cref{fig:decoding_subgraphs}. A general mathematical argument can be found in Section~III in the SM~\cite{supp}.

The single-\textsc{lom} decoder then proceeds by applying \textsc{mwpm} to the observed defects in $V_{O}$ using the edges in $E_{O}$. If the returned minimum-weight solution contains an odd number of edges in the observing edge set $\mathcal{H}_{O}$, then the single-\textsc{lom} decoder predicts that the observable has been flipped by the error and must be corrected; otherwise, it predicts that the observable has not been flipped by the error. Given the definition of $\mathcal{G}_{O}$, it is clear that defects in the entire observing edge set are contained in $E_{O}$ and therefore the single-\textsc{lom} decoder is looking at all relevant parts of the circuit that could cause a flip in $O$. Note that one could also apply other graph-based decoders, such as Union Find~\cite{delfosse2021almost}, to decode the observed defects in $V_{O}$. 

Finally, we discuss the fault-tolerance of this single-\textsc{lom} decoder applied to any reliable observable, which is summarized as follows.
\added{
\begin{theorem}[Fault-tolerance of single-\textsc{lom}]\label{thm:single-LOM_FT}
    Given a reliable observable $O$, the single-\textsc{lom} decoder applied to the decoding subgraph $G_{O}=(V_{O},E_{O})$ can correctly predict whether any basic error of weight $w<d/2$ has flipped the observable $O$.
\end{theorem}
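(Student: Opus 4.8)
The plan is to reduce the theorem to a single graph-distance property of the subgraph $G_O$, and then to establish that property from the surface-code distance together with the reliability hypothesis. Throughout I treat subsets of $E_O$ as vectors over $\mathbb{F}_2$, write $\partial$ for the mod-$2$ boundary map of $G_O$ (restricted to $V_O$), and call any $\gamma\subseteq E_O$ with $\partial\gamma=0$ a \emph{cycle}. Let $\mathcal{E}$ be the true basic error with $\mathrm{wt}(\mathcal{E})=w<d/2$, and let $\mathcal{E}_O$ be its projection onto $E_O$, obtained by summing the restrictions $h\cap V_O$ of each constituent hyperedge $h$. Projection cannot increase weight, so $\mathrm{wt}(\mathcal{E}_O)\le w$, and the defects the decoder sees are exactly $\partial\mathcal{E}_O$ (faults projecting outside $V_O$ do not touch $V_O$). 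Hence $\mathcal{E}_O$ is itself a valid matching of the observed syndrome, the \textsc{mwpm} output $\mathcal{R}_O$ obeys $\mathrm{wt}(\mathcal{R}_O)\le\mathrm{wt}(\mathcal{E}_O)\le w$, and by the triangle inequality the cycle $\gamma:=\mathcal{E}_O\oplus\mathcal{R}_O$ (a cycle since $\partial\mathcal{E}_O=\partial\mathcal{R}_O$) has $\mathrm{wt}(\gamma)<d$.

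Next I would express correctness as a parity condition. The decoder predicts a flip iff $|\mathcal{R}_O\cap\mathcal{H}_O|$ is odd, whereas $O$ is truly flipped iff $\mathcal{E}$ anticommutes with $\overline{O}^{\leftarrow}$. Since $\mathcal{H}_O$ is precisely the set of faults anticommuting with $\overline{O}^{\leftarrow}$ and $\mathcal{H}_O\subseteq E_O$, the $\mathbb{F}_2$-linear functional $e\mapsto|\{e\}\cap\mathcal{H}_O|$ on $E_O$ coincides with anticommutation against $\overline{O}^{\leftarrow}$. A short structural check---using that the elements of $\mathcal{H}_O$ are space-like boundary edges while all higher-weight hyperedges are time-like, so that projection preserves the $\mathcal{H}_O$-overlap---then gives $|\mathcal{E}\cap\mathcal{H}_O|\equiv|\mathcal{E}_O\cap\mathcal{H}_O|\pmod 2$. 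Therefore the decoder is correct exactly when $|\gamma\cap\mathcal{H}_O|$ is even, and the whole theorem reduces to the claim that, for a reliable $O$, every cycle in $G_O$ with odd $\mathcal{H}_O$-overlap has weight at least $d$; since $\mathrm{wt}(\gamma)<d$, this claim forces $|\gamma\cap\mathcal{H}_O|$ to be even.

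I expect this distance claim to be the crux. To prove it I would read a cycle $\gamma$ with odd $\mathcal{H}_O$-overlap physically: by construction $G_O$ tracks the observing region $\overline{O}^{\leftarrow}$ through the circuit, so such a $\gamma$ is a basic error that excites no detector in $V_O$ yet anticommutes with $\overline{O}^{\leftarrow}$, i.e.\ an undetected fault flipping $O$. For cycles that do not touch a time-boundary I would argue that $G_O$ is, as a matching graph, equivalent to the space-time decoding graph of a distance-$d$ surface-code memory: the fold-transversal $\overline{H},\overline{S}$ and transversal $\overline{\mathrm{CNOT}}$ gates are distance preserving, and in the pre-gate frame following $\overline{O}^{\leftarrow}$ stitches these together into one coherent graph as the gates permute and mix the $X/Z$ sectors. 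Any cycle crossing $\mathcal{H}_O$ is then a non-trivial logical cycle of that memory, which has weight $\ge d$ by the code distance.

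The role of the reliability hypothesis, and the main obstacle, is to rule out the short cycles that would otherwise close cheaply through a fragile (open) time-boundary. If such a low-weight logical cycle existed, its weight-one time-like edges would lie on an open time-boundary arising from a reset, so it would coincide, modulo an undetectable bulk fault, with the forward-propagated region $S^{\rightarrow}$ of some reset stabilizer $S$; its odd overlap with $\mathcal{H}_O$ would then say that $S^{\rightarrow}$ anticommutes with $O$, which by Definition~\ref{def:fo} (via Lemma~\ref{lem:propa}) makes $O$ \emph{fragile}. Contrapositively, when $O$ is reliable no reset-stabilizer region anticommutes with $O$, so no logical cycle can exploit a fragile time-boundary, and the only logical cycles are the genuine code logical operators of weight $\ge d$ treated above. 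Combining this bound with $\mathrm{wt}(\gamma)<d$ shows $|\gamma\cap\mathcal{H}_O|$ is even, so the single-\textsc{lom} prediction matches the true flip and the theorem follows.
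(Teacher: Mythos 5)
Your proposal is correct and follows essentially the same route as the paper's proof: project the error onto $G_{O}$, bound the weight of error-plus-correction by $<d$ via minimality of the matching, and show that any syndrome-free chain with odd overlap with $\mathcal{H}_{O}$ must span the top-to-bottom spatial distance $d$ because time-like edges preserve spatial coordinates, while reliability removes any cheap closure through open time-boundaries. The only substantive difference is cosmetic: where you take a homological detour (arguing a cycle through an open boundary would coincide with some $S^{\rightarrow}$ and hence certify fragility via Lemma~\ref{lem:propa}), the paper simply observes that for a reliable observable $G_{O}$ contains no open (reset-induced) time-boundary edges at all, so the only weight-one edges are spatial and your case analysis collapses to the single geometric bound.
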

\begin{proof}
    The proof is straightforward and follows familiar arguments for the surface code, but for completeness we repeat them here.
    
    The errors that occur on the hypergraph can be projected onto the decoding subgraph, let us write $\vec{e}\subseteq E_{O}$ as the set of edges in $G_{O}$ that have been flipped. The defects that are visible to the decoder are then $\partial\vec{e}\subseteq V_{O}$. The single-\textsc{lom} decoder then works by applying the minimum weight correction $\vec{c}\subseteq E_{O}$ to the defects such that $\partial\vec{c}=\partial\vec{e}$. The total error-plus-correction string $\vec{e}\oplus\vec{c}$ does not trigger any detectors, and therefore is either a logical error or logically trivial, depending on whether $\vec{e}\oplus\vec{c}$ intersects the observing edge set $\mathcal{H}_{O}$ an odd or an even number of times (respectively).
    
    We will now show that if the weight $w(\vec{e})$ of $\vec{e}$ is less than $d/2$, then $\vec{e}\oplus\vec{c}$ intersects $\mathcal{H}_{O}$ an even number of times. First, note that because $\vec{c}$ is a minimum-weight correction to $\partial\vec{e}$, then we have $w(\vec{e}\oplus\vec{c})\leq w(\vec{e})+w(\vec{c})< d$.
    Because the observable is reliable, all the time boundaries in $G_{O}$ are closed and the only weight-one edges in $G_{O}$ are the spatial boundaries on the surface code, with the observing edge set $\mathcal{H}_{O}$ containing all the weight-one edges on one of these boundaries. If $\vec{e}\oplus\vec{c}$ were to intersect $\mathcal{H}_{O}$ an odd number of times, it therefore must traverse from the ``top'' to the ``bottom'' spatial boundary. However, the minimum-weight path from the ``top'' spatial boundary to the ``bottom'' has weight $d$; this is even the case in the presence of time-like edges arising from weight-3 hyperedges because these edges connect vertices with the same spatial coordinates. Therefore, any possible string $\vec{e}\oplus\vec{c}$ that intersects $\mathcal{H}_{O}$ an odd number of times must use at least $d$ edges. This is a contradiction since $w(\vec{e}\oplus\vec{c})\geq d$, and hence $\vec{e}\oplus\vec{c}$ is logically trivial, and the proof is complete.
\end{proof}

Interestingly, there exist error patterns of weight at least $d$ that are undetectable errors in the single-\textsc{lom}, but are detectable in the full decoding hypergraph and therefore may be correctable by a minimum weight hypergraph decoder. For example, if the error string uses an edge that corresponds to a hyperedge in the full decoding hypergraph, the string may be detectable in the full hypergraph but not in the single-\textsc{lom}.
}

%The situation with fragile observables is more subtle, and we will see in the following section that care must be taken when decoding circuits in the presence of fragile observables.
%Here, we must distinguish between \textit{fragile} and \textit{reliable} observables, where we recall that a fragile observable is any observable whose observing region $O^{\leftarrow}$ anticommutes with a reset stabilizer, see Definition \ref{def:fo}; or, equivalently, whose observing hyperedge region $\mathcal{H}_{O}$ borders a fragile time boundary. 

\subsection{Multiple logical observable matching decoding}
\label{sec:windowless}

In this section, we explain how to turn the single-\textsc{lom} decoder into a decoder for an entire quantum circuit $\mathcal{C}$ which we call the \textsc{lom} decoder. 
%As foreshadowed in \cref{sec:slom}, we will need to treat fragile and reliable observables separately in order to ensure that the multi-\textsc{lom} decoder can correct every error of weight $<d/2$. 
We start by discussing how to decode \textit{reliable} circuits---that is, circuits in which all observables are reliable---before arguing that fragile observables do not need to be decoded in Section \ref{sec:fragile}.

\subsubsection{Decoding reliable circuits}\label{sec:reliable}

In the case where all observables in the circuit are reliable, the \textsc{lom} decoder is simple: decode every logical measurement individually---both conditioning as well as final measurements---using an independent instance of the single-\textsc{lom} decoder. Since we only decode reliable observables, each instance of the single-\textsc{lom} decoder can correct every basic error of weight $<d/2$. Therefore, the \textsc{lom} decoder also has this property for the full reliable circuit $\mathcal{C}$. 

It is worth discussing some properties of the \textsc{lom} decoder. First, the \textsc{lom} decoder does \textit{not} pass any information between each instance of the single-\textsc{lom} decoder. Thus, measurements that occur at the same time can be decoded in parallel. A disadvantage however is that the same location in the circuit is typically decoded multiple times by independent instances of the single-\textsc{lom} decoder, increasing the total space-time cost of the decoder in classical hardware. Another important property of the \textsc{lom} decoder is that it does \textit{not} explicitly provide a physical correction, i.e.~a set of errors consistent with the syndrome, as corrections returned by each single-\textsc{lom} decoder may not be consistent with each other. This independence also poses issues for decoding near fragile time-boundaries as we will see when handling fragile observables in \cref{sec:fragile}, and when developing a windowed decoder in Section \ref{sec:window}.
%This stands in contrast to a minimum-weight decoder, which works by first finding a minimum-weight correction for the syndrome, and then uses that correction to infer a prediction of which observables are flipped. Of course, one can always construct a correction given a syndrome and a set of flipped observables from the \textsc{lom} decoder. However, this correction will not necessarily be the minimum-weight solution to the decoding problem on the full hypergraph $\mathcal{G}$; after all, minimum weight perfect matching on an arbitrary hypergraph can be a computationally hard problem, see e.g.~\cite{Fischer_2024} and references therein. %It is worth noting that the multi-\textsc{lom} decoder is not the only decoder with this property: indeed, the maximum-likelihood decoder or neural network decoders typically also only initially return a prediction of the flipped observables without an explicit correction.

\begin{figure*}
\centering
    \includegraphics[width=0.9\textwidth]{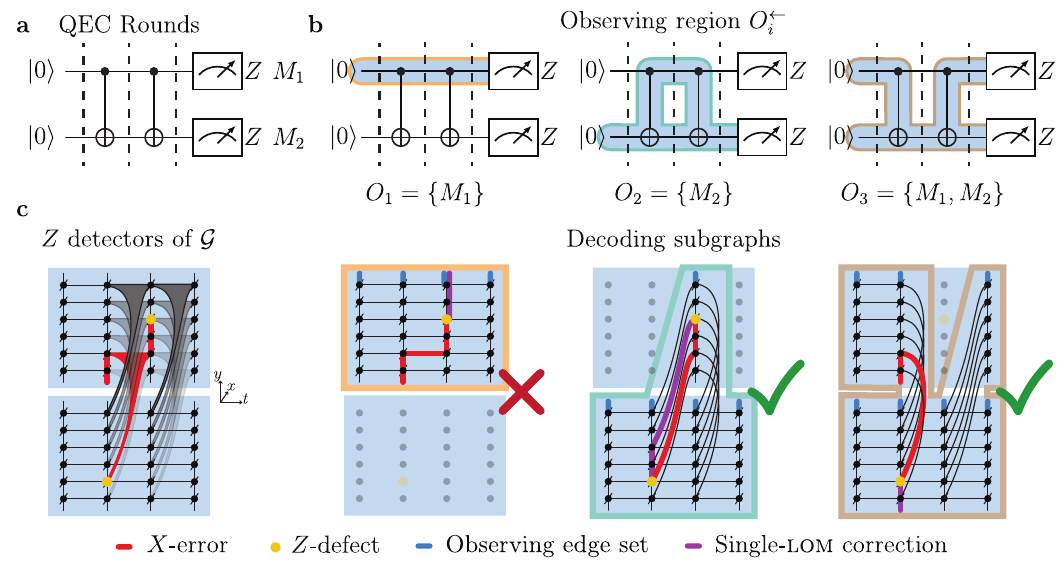}
    \caption{An example of a weight-5 error in the $d=7$ surface code in which the decoded outcomes for the reliable observables $O_{1}$, $O_{2}$ and $O_{3}=O_{1}O_{2}$ are inconsistent. (a) The circuit consists of two consecutive CNOT gates, with both qubits initialized and measured in the $Z$-basis. The logical action of the circuit is trivial here and only serves as a simple example. (b) The observing regions of the three observables. (c) The $Z$-detectors in the decoding hypergraph $\mathcal{G}$. For simplicity, only a slice of $\mathcal{G}$ is shown for a fixed value of the spatial coordinate $x$. When each of the observables is decoded by the single-\textsc{lom} decoder, it predicts a flip in the observable $O_{1}$ but not in $O_{2}$ or $O_{3}$, despite the fact that $O_{3}=O_{1}O_{2}$. This is only possible because the weight of the error---five---is greater than $d/2=7/2$. Note that in this figure and throughout, we will refer to measurement errors of $Z$-stabilizers as $X$-errors (and vice versa for $X$-stabilizer errors) for simplicity.}\label{fig:inconsistent_SLOM}
\end{figure*}

One can easily generalize the preceding definition of the $\textsc{lom}$ decoder, and run single-\textsc{lom} decoder instances over any independent generating set of reliable observables---that is, any minimal set of reliable observables that generates all possible \added{reliable} observables in the circuit. The independence here is desirable since the single-\textsc{lom} decoder may not return consistent results for observables that are dependent on each other \added{if there are more than $d/2$ errors in the $\textsc{lom}$ decoding volume}. For example, in \cref{fig:inconsistent_SLOM}, the single-\textsc{lom} decoder predicts a flip to $O_{1}$ but not to $O_{2}$ or $O_{3}$, despite the fact that $O_{3}=O_{1}O_{2}$. Hence, if in the final measurements of some algorithm, the final outcome is some \textsc{xor} of final measurements, it is advisable to include this as an observable in the generating set to minimize its logical error rate, see the numerical evidence for this choice of decoding in Section \ref{sec:pt}.
This freedom to decode a circuit using any set of (independent) reliable observables is also relevant for handling fragile observables as discussed in the next section.

%To ensure that a valid correction for the \textsc{lom} decoder exists, we therefore need to pick an independent generating set of reliable observables to decode which we then use to infer the remaining dependent product observables. In the case of \cref{fig:inconsistent_SLOM}, this choice of observables will change the outcome of the \textsc{lom} decoder since it will infer the value of the third observable from the product of the two observables that are directly decoded by single-\textsc{lom}. It is worth noting that such an example in a reliable circuit requires at least $d/2$ basic errors, since the single-\textsc{lom} decoder is guaranteed to correct less than $d/2$ basic errors correctly for all reliable observables. 
Before going into this, we note that more realistic circuit-level noise will introduce new hyperedges in the decoding subgraph $G_{O}$ for the observable $O$ and can make $\mathcal{H}_O$ an observing hyperedge set. However, such hyperedges are decomposable via splitting into a \textit{constant} number of edges that are already present in the decoding subgraph $G_{O}$, see \cref{sec:circuit_level_slom}, \added{or decoding $G_O$ can be done with other hypergraph decoding techniques, see Section \ref{sec:discuss}}.

%hence one expects a threshold for our decoder in the presence of circuit-level noise.

\subsubsection{Handling fragile observables}\label{sec:fragile}

Let's first understand what the problem is when we decode fragile observables using the single-\textsc{lom} decoder, before we argue that for any circuit we only need to decode an independent generating set of reliable observables.
%Fragile observables pose a problem only when two fragile observables are decoded ``inconsistently'', and then used to infer the decoded outcome of a reliable observable. 
The issue arises due to a combination of two factors. The first of these is a transversal logical reset providing a fragile (open) time-boundary (defined in \cref{sec:frag-time}) that is followed by a non-trivial logical gate after only $O(1)$ QEC rounds. Second, we decode the corresponding fragile time-boundary multiple times by \textit{independent} instances of the single $\textsc{lom}$ decoder. If either of these factors are removed, no issues arise for fragile observables. The issue is closely related to those explored in other recent works~\cite{algo-FT,zhang2025}. 

%To make contact with this---particularly that of Ref.~\cite{algo-FT}---we will compare how fragile observables are handled by the \textsc{lom} decoder to a minimum-weight decoder.

\begin{figure}
    \includegraphics[width=0.99\linewidth]{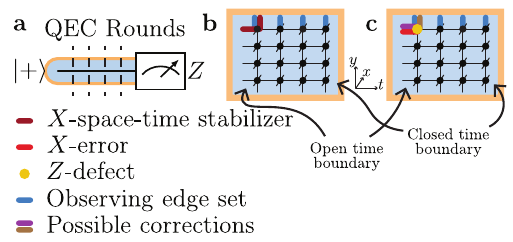}
    \caption{A simple example of a fragile observable. (a) A $\ket{+}$ reset followed by a $Z$-measurement, with four QEC rounds in between for ease of visualization. (b) The $Z$-detectors of the decoding hypergraph $\mathcal{G}$, with an $X$-space-time stabilizer and the observing edge set of the observable containing the $Z$-measurement. Because the observable is fragile, the observing edge set has odd overlap with the edges contained in the $X$-space-time stabilizer. (c) There are two equal-weight corrections to the weight-one measurement error shown, one of which flips the observable while the other one does not. In this simple circuit, it does not matter which decision the decoder makes, because the observable has 50-50 measurement statistics anyway.}\label{fig:fragile_observable}
\end{figure}

Recall that a fragile observable $O$ is such that $O^{\leftarrow}$ anticommutes with a reset stabilizer $S$, see Definition~\ref{def:fo} and Lemma~\ref{lem:propa}. Because $O^{\leftarrow}$ anticommutes with $S$, this also means that the corresponding logical observing region $\overline{O}^{\leftarrow}$ anticommutes with a weight-2 space-time stabilizer at the start of the circuit, at the fragile time-boundary, as shown in \cref{fig:fragile_observable}(a, b)\footnote{A different definition of the observing edge set, for example, one that includes space-like edges in the bulk of the surface code rather than just on the spatial boundary, can increase the minimum weight of the space-time stabilizer (say to three), but it will always be of constant weight.}. This leads to a somewhat paradoxical statement: applying the weight-2 space-time stabilizer causes an undetectable flip in the decoded outcome of the fragile logical observable $\overline{O}$. Since it is a space-time stabilizer, applying it does not change the state at that time-step in the circuit, and therefore it cannot change the measurement statistics of any observable. As a result this operator does not constitute a \textit{logical} fault. It does, however, still flip the decoded outcome of $\overline{O}$---this is possible since the statistics of $O$ are 50-50. To illustrate this point, suppose that a weight-one error occurred on one of the two edges contained in the space-time stabilizer, as shown in \cref{fig:fragile_observable}(c). Then, a decoder has two equally-likely corrections it can make, one of which flips the observable and the other of which does not.

At this point, such an ambiguity is not a problem, since the decision the decoder makes does not change the measurement statistics of the observable. A problem does arise, however, when the same fragile time boundary is decoded multiple times by independent instances of a decoder, and the decoder makes a \textit{different} correction in each of these iterations, while the outcomes of the logical observables are supposed to be correlated. 
%In this case, it is possible that the decoder has effectively created a logical error from just a weight-1 physical error. 
Care must therefore be taken to ensure that the decoder decodes all fragile observables in a consistent way.

\begin{figure*}
    \includegraphics[width=0.9\linewidth]{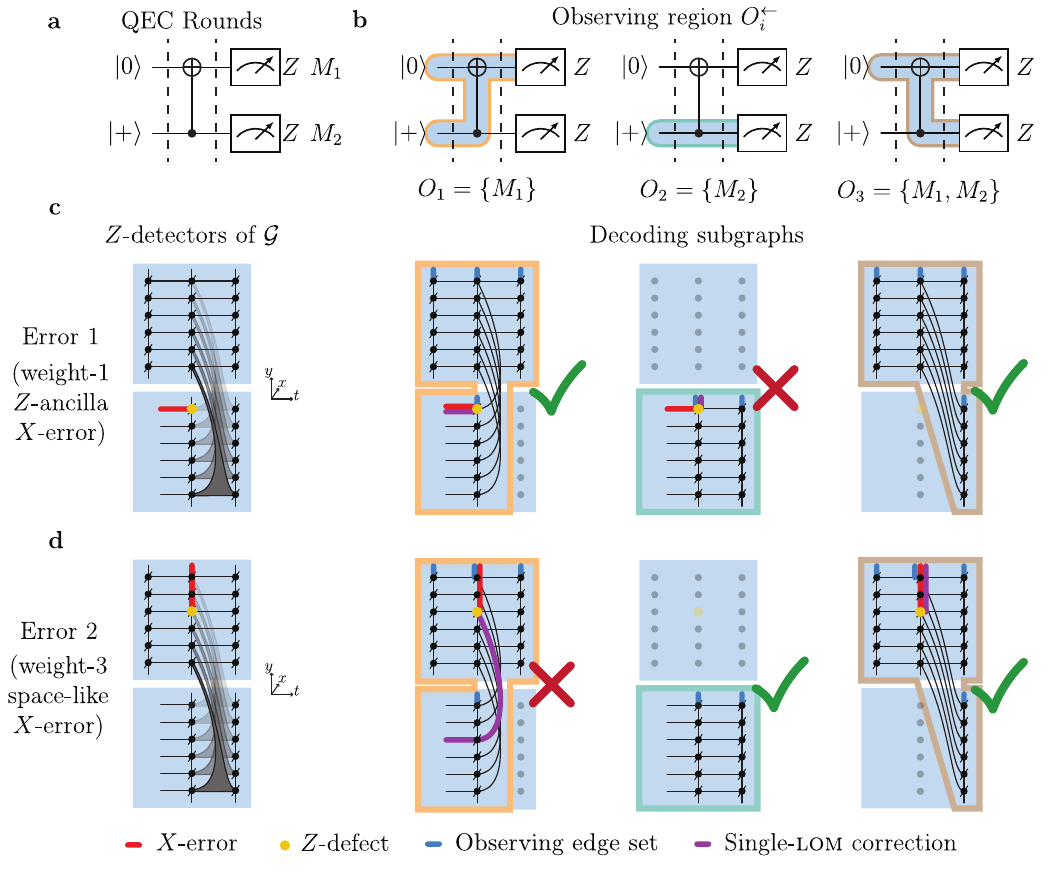}
    \caption{Examples in the $d=7$ surface code of the single-\textsc{lom} decoder applied to potentially hazardous error configurations around a fragile time-boundary. In the bare circuit there are three observables that single-\textsc{lom} decoder can be ran on: $O_{1}$ and $O_{2}$ are fragile, while $O_{3}$ is not fragile and moreover has deterministic $+1$ statistics. 
    We show two examples of errors in the $d=7$ surface code that do \textit{not} lead to a logical error in $O_{3}$ when $O_{3}$ is decoded directly by the single-\textsc{lom} decoder, but \textit{can} lead to a logical error when $O_{1}$ and $O_{2}$ are decoded separately using single-\textsc{lom} decoders to infer $O_{3}$. 
    %Note that in each picture of the decoding hypergraph $\mathcal{G}$ we only display a slice of the hypergraph perpendicular to the $x$-spatial direction. 
    Error 1 is a weight-1 measurement error that causes an error in $O_{3}$ if the decoders for $O_{1}$ and $O_{2}$ make different choices about which correction to make---which will occur 50\% of the time if minimum weight corrections are chosen with uniform probability. Meanwhile Error 2 is a weight-3 space-like $X$-error that will deterministically cause a decoding error in $O_{1}$ and $O_{2}$, this logical error is even constant weight as the distance of the code increases.}\label{fig:Bell_state_fragile_observables}
\end{figure*}

To illustrate the problem, consider a pair of fragile observables $O_{1}$ and $O_{2}$ whose observing regions both border the same fragile time boundary but whose product is \emph{not} fragile. A simple example is given by the circuit
\begin{equation}\label{eq:Bell_state_fragile_observables}
\vcenter{\hbox{\includegraphics{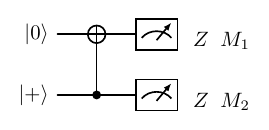}}}
    %\begin{quantikz}
    %    \lstick{$|0\rangle$}&\targ{}&\meter[label style={yshift=-0.66cm,xshift=1.1cm}]{Z\;\; M_{1}}\\
    %    \lstick{$|+\rangle$}&\ctrl{-1}&\meter[label style={yshift=-0.66cm,xshift=1.1cm}]{Z\;\; M_{2}}\\
    %\end{quantikz}
\end{equation}
in which the observables $O_{1}=\{M_{1}\}$ and $O_{2}=\{M_{2}\}$ are both fragile, but their product $O_{3}=\{M_{1},M_{2}\}$ is not. Moreover, in this circuit, $O_{3}$ has a deterministic $+1$ measurement outcome. %(although note that in general, reliable observables may still be stochastic). 
Now, suppose that we use two instances of single-\textsc{lom} decoders to decode $O_{1}$ and $O_{2}$, and take their product to infer the outcome of $O_{3}$. Because $O_{1}$ and $O_{2}$ are fragile, a low-weight error ($<d/2$) can cause them to flip. For example, the decoder may correctly infer $O_{1}$ but incorrectly infer $O_{2}$. This then leads to a flip---and this time a \textit{genuine} logical error---to the product operator $O_{3}=O_{1}O_{2}$, even though $O_{3}$ is not fragile. In \cref{fig:Bell_state_fragile_observables}, we show two examples of error patterns that can lead to a logical error in $O_{3}$ if $O_{1}$ and $O_{2}$ are decoded using the single-\textsc{lom} decoder. We can get around this issue by running the \textsc{lom} decoder as follows. It first randomly selects an outcome for $O_{1}$, then decodes the reliable observable $O_{3}$ using the single-\textsc{lom} decoder, then uses the product $O_{2}=O_{1}O_{3}$ to infer the value of $O_{2}$. This is possible here because both $O_{1}$ and $O_{2}$ are 50-50 random, but correlated through the outcome of $O_{3}$.

In the circuit in \cref{eq:Bell_state_fragile_observables} the solution to the problem is fairly straight-forward because the pair of fragile observables was measured at the same time; however this need not be the case. For example, the first fragile observable $O_{1}$ could condition a Clifford gate that needs to be committed to before continuing the circuit, as shown in the circuit
\begin{equation}\label{eq:2_T_gate_circuit}
    %\begin{quantikz}[column sep=8.5pt]
    %    \lstick{$|+\rangle$}&\ctrl{1}&\gate{S}\wire[d][1]{c}&&&&&\ctrl{1}&\gate{S}\wire[d][1]{c}&\gate{S^{\dag}}&\meter[label style={yshift=-0.53cm,xshift=0.56cm}]{X}\gategroup[1,steps=1,style={opacity=0},label style={label
%position=below,anchor=mid,yshift=-0.25cm}]{$O_{4}{=}\{M_{3}\}$}\\
    %    \lstick{$|T\rangle$}&\targ{}&\meter[label style={yshift=-0.53cm,xshift=0.56cm}]{Z} \gategroup[1,steps=1,style={opacity=0},label style={label position=below,anchor=mid,yshift=-0.25cm}]{$O_{1}{=}\{M_{1}\}$}&\setwiretype{n}&&&\lstick{$|T\rangle$}&\targ{}\setwiretype{q}&\meter[label style={yshift=-0.53cm,xshift=0.56cm}]{Z}\gategroup[1,steps=1,style={opacity=0},label style={label position=below,anchor=mid,yshift=-0.25cm}]{$O_{2}{=}\{M_{2}\}$}&\setwiretype{n}&
    %\end{quantikz}
    \vcenter{\hbox{\includegraphics{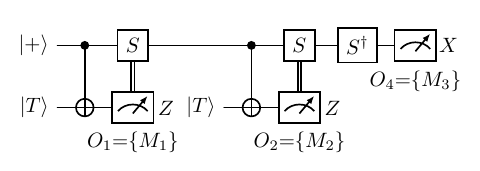}}}
    .
\end{equation}
In this simple example, the observables $O_{1}$ and $O_{2}$ are fragile, while their product $O_{3}=\{M_{1},M_{2}\}$ is reliable (note that the outcome of $O_{3}$ is not deterministic due to the presence of $\ket{T}$-state injections). Note also that the circuit in \cref{eq:2_T_gate_circuit} simply amounts to preparing $S^{\dag}TT\ket{+}=\ket{+}$ before measuring in the $X$-basis, so that the final observable $O_{4}$ is deterministically $+1$. \added{Note that $O_{4}$ may or may not be fragile depending on how many of the conditional $S$-gates have been applied.}

\cref{eq:2_T_gate_circuit} causes an issue both for the \textsc{lom} and a minimum-weight decoder, because the decoder needs to be run, first to decide on an outcome for the fragile observable $O_{1}$, and then a second time to decide on an outcome for the fragile observable $O_{2}$. If the two instances of the decoder give different corrections in the vicinity of the fragile time-boundary---for example, due to a weight-1 error string similar to the one in \cref{fig:bad_decomposition_plain_MWPM}(c)---then this will cause a logical error in the product observable $O_{3}$. This is inherently a problem because a logical error to $O_{3}$ means that an error has occurred somewhere in the observing region $O_{3}^{\leftarrow}$, and every possible location has a non-trivial effect on the state of the system. Indeed, in \cref{eq:2_T_gate_circuit}, an error to $O_{3}$ is consistent with an $X$-error after the first $T$-injection, meaning that the final state is $S^{\dag}TXT\ket{+}=S^{\dag}\ket{+}$, which has 50-50 measurement statistics when measured in the $X$-basis.

There are multiple solutions to this problem. Perhaps the simplest is to wait $\Theta(d)$ rounds after the transversal logical reset, allowing the state to reach the ``quiescent'' state~\cite{Fowler_2012}, and perform an initial window of decoding that commits to a correction in the first $\Theta(d)$ rounds. This works because all of the subsequent instances of the decoder will use the same correction in the vicinity of the fragile time-boundary. In the case of the minimum-weight decoder, Ref.~\cite{algo-FT} proposes a more efficient solution: an additional ``consistency check'' in the decoder, in which the second instance of the decoder checks whether its correction is consistent with the first and makes an adjustment if necessary before continuing. Related issues around $T$-injections in the context of a windowed decoder can also be solved by using an additional ancilla per $T$-injection as proposed in Ref.~\cite{zhang2025}. Here, we provide yet another solution that is tailored to the \textsc{lom} decoder.

%We also briefly remark that if some kind of windowed decoder is used, the issue can only arise before the correction for the first $d$ QEC rounds has been committed to, since after this point every later iteration of the decoder uses the same assignment of errors around the fragile time boundaries. We will also see this later in \cref{sec:window} when we consider a windowed version of the \textsc{lom} decoder. For the \textsc{lom} decoder, however, the issue can arise at any point in the circuit, since each subsequent instance of the decoder uses no information from previous iterations of the decoder and no error patterns are committed to.

To decode the circuit in \cref{eq:2_T_gate_circuit} using the \textsc{lom} decoder, we use essentially the same procedure as for \cref{eq:Bell_state_fragile_observables}. We begin by applying the circuit up to and including the measurement $M_{1}$. Conventional wisdom would now say that we need to decode the observable $O_{1}$ and commit to applying the conditional $S$ gate or not. However, because $O_{1}$ is fragile, it can be flipped by a space-time stabilizer without causing a logical error. There is therefore no need to apply a decoder to $O_{1}$, and we can uniformly at random select an outcome for $O_{1}$ without decoding it, regardless of what the ``true'' outcome of the measurement was. Then, we continue executing the circuit up to and including the second measurement $M_{2}$. The observable $O_{2}=\{M_{2}\}$ is also fragile and has a 50-50 measurement outcome. However, the product $O_{3}=O_{1}O_{2}$ is not fragile and must be decoded. We therefore decode the observable $O_{3}=\{M_{1},M_{2}\}$, which is ``peeled'' off the fragile time-boundary, and infer the value of $O_{2}$ from the product $O_{2}=O_{1}O_{3}$. 
%This ensures that \textit{if} we guessed the wrong outcome for $O_{1}$, that the error that occurred in the circuit is consistent with a trivial space-time stabilizer ``error'' that occurs at the fragile time boundary. 
We can then finish running the circuit, but we now have to interpret the final observable $O_{4}$. As mentioned, $O_{4}$ may or may not be fragile depending on how many of the conditional $S$-gates have been applied. If it is not fragile, it can be decoded directly; if it is fragile, we first decode the product observable $O_{1} O_{4}=\{M_{1},M_{3}\}$, and then multiply its outcome with the outcome of $O_{1}$ to obtain the outcome of $O_{4}$. In all cases, only a basic error of weight $>d/2$ can cause a flip in the outcome of $O_{4}$.

We can generalise the procedure as follows.
Each time a measurement $M$ occurs in the circuit, we first check whether the observable $O=\{M\}$ is fragile or not. If $O$ is reliable, then we simply decode it using the single-\textsc{lom} decoder and continue. If on the other hand $O$ is fragile, we check whether there exists a product with another, earlier fragile observable $O'$ \added{(which can itself be a product of individually measured observables)} such that $OO'$ is reliable. If no such $O'$ exists, we uniformly at random select an outcome for $O$ and continue. If such an $O'$ does exist, then we decode $OO'$ and use the outcome of $O'$ to infer the outcome of $O$. Continuing this way through the circuit allows us to reproduce the measurement statistics of the original circuit fault-tolerantly, without ever decoding a fragile observable. Note that there is no guarantee on the minimum weight of such an operator $O'$, which we will find later to cause problems with the efficiency of the windowed-\textsc{lom} decoder in \cref{sec:window_efficiency}.

In many cases, we do not need to \textit{actually} randomly sample the measurement outcomes of the fragile measurement. Take, for example, the fragile observable $O_{1}$ in \cref{eq:2_T_gate_circuit}. We are not interested directly in the measurement statistics of $O_{1}$, and only use it to decide whether the conditional $S$ gate is applied or not. However, the $T$-injection gadget is \textit{deterministic} after the conditional $S$ gate is applied: it will always apply a $T$ gate to the input state. Therefore, it doesn't matter whether we choose the outcome to be $+1$ or $-1$, and we can therefore choose it to be $+1$ without randomly sampling the result. Note that such a trick only works for the first few $T$-injections in a circuit whose corresponding observables are fragile.

\added{
What we have described above is only one way of explaining why fragile observables do not need to be decoded. Here we briefly outline an alternative, entirely equivalent way of understanding this procedure. It turns out that for a fragile observable $O$ that is modeled using a coin toss above, any $S$ gates that are conditioned on $O$ can actually be replaced by a Pauli gate conditioned on $O$. A simple example of this is the $S$ gate conditioned on $O_{1}$ in \cref{eq:2_T_gate_circuit}: because the $T$ gate is being applied to the $\ket{+}$ state, the conditional $S$ gate can be replaced by a conditional $X$ gate. One can generalize this argument to general circuits and show that this replacement can be done for precisely the same observables that we modeled with a coin toss above. One can then track this conditional $X$ gate in software through the rest of the algorithm, possibly updating future observables to include $M_{1}$ when they would be flipped by the conditional gate (for example, $O_{2}$ in \cref{eq:2_T_gate_circuit}). This last process turns out to be exactly equivalent to the procedure described above of multiplying fragile observables together to make new reliable observables.
}

\subsection{Issues with hierarchical matching decoding}
\label{sec:hier}

In contrast to the \textsc{lom} decoder introduced above, recent work~\cite{sahay2024, guernut2024, chen2024transversal} has used a different strategy to decode transversal Clifford gates in the surface code. We refer to this decoder as the \textit{hierarchical} matching decoder (it has also been called a sequential or ordered matching decoder). In this section, we explain why such an approach is \textit{not} fault-tolerant in general, hence justifying our design of the \textsc{lom} decoder. It is worth noting that we do not rule out the possibility that these issues with a hierarchical matching decoder could be overcome; the goal here is simply to point out what issues need to be solved. 

We describe the hierarchical matching decoder in the simplified setting of non-conditional Clifford circuit under the basic error model, as this is sufficient to break its fault-tolerance. The hierarchical matching decoder works by splitting the decoding graph $\mathcal{G}$ into a number of \textit{track} subgraphs $T_{\ell}=(V_{\ell},E_{\ell})$ for $\ell=1,2,\dots,L$, each of which is a graph and is hence matchable. The idea is to decode each track $T_{\ell}$ sequentially, starting at $\ell=1$, and committing to a subset of hyperedges in $\mathcal{G}$ at each step. 
More precisely, we define each track by specifying a vertex subset $V_{\ell}\subseteq\mathcal{V}$. This vertex subset must satisfy the following criterion:
\begin{criterion}[Hierarchical Track Vertex Assignment]\label{crit:hier}\,
    \begin{enumerate}[label=(\alph*)]
        \item Every vertex $v\in\mathcal{V}$ is contained in exactly one track vertex subset $V_{\ell}$,
        \item For every edge $e\in\mathcal{H}$ with weight $|e|=2$, both endpoints $v,v'$ of $e$ are contained in the same track $v,v'\in T_{\ell}$, 
        \item For every hyperedge $h\in\mathcal{H}$ with weight $|h|=3$, there exists $\ell,\ell'$ such that $\ell<\ell'$, $|h\cap V_{\ell}|=2$, and $|h\cap V_{\ell'}|=1$.
    \end{enumerate} 
\end{criterion}
With \cref{crit:hier}, the tracks are defined as
\begin{definition}[Hierarchical Tracks]\label{def:hier}
    For a set of vertex subsets $\{V_{\ell}\}_{\ell=1,\dots,L}$ satisfying \cref{crit:hier}, the track subgraph $T_{\ell}$ is given by $T_{\ell}=(V_{\ell},E_{\ell})$ where the track edges are given by
    \begin{equation}
        E_{\ell}=\Big\{h\cap V_{\ell}:h\in\mathcal{H}, |h\cap V_{\ell}|\leq2\Big\}.
    \end{equation}
\end{definition}
We also define the hyperedge subset $H_{\ell}$ as the set of hyperedges that correspond to the edges in $E_{\ell}$. 

\begin{figure}
    \includegraphics[width=0.9\linewidth]{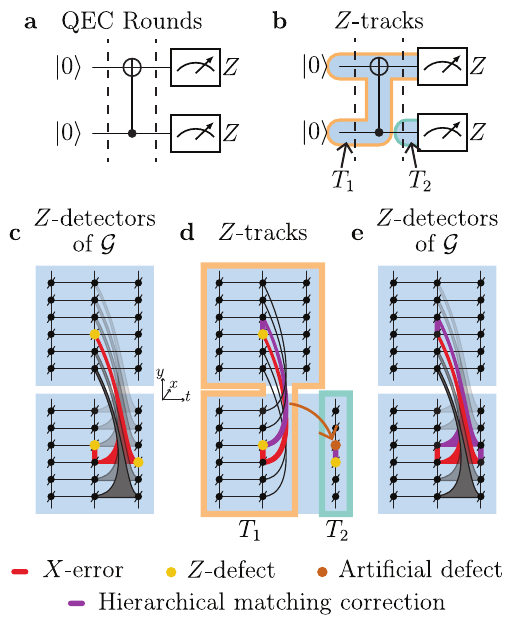}
    \caption{An example of successful correction by a hierarchical matching decoder. (a) The circuit that we decode. For this example, we only consider $X$-errors and the $Z$-detectors of the decoding graph. (b) The circuit locations corresponding to the tracks $T_{1}$ and $T_{2}$. (c) The $Z$-detectors of the full decoding hypergraph $\mathcal{G}$, with an example weight-2 $X$-error pattern. Again, only a slice of the decoding hypergraph is shown for simplicity. (d) The decoding hypergraph is split into the tracks $T_{1}$ and $T_{2}$. After matching is run on $T_{1}$, the decoder makes the correction in purple, which is not exactly the same as the error pattern that occurred. When the correction from $T_{1}$ is committed to in the full hypergraph, this introduces an ``artificial'' defect in $T_{2}$. This is then matched in $T_{2}$ with the left-over original defect from $\mathcal{G}$. (e) The corrections from $T_{1}$ and $T_{2}$ are lifted to a correction in the full hypergraph $\mathcal{G}$. Here we can see that the combination of the error and the correction give a space-time stabilizer, similar to Fig.~1 from SM~\cite{supp}, but this time in the bulk of the decoding graph.}\label{fig:hierarchical_example}
\end{figure}

With a set of tracks defined by \cref{def:hier}, the hierarchical decoder works by building a correction to the full hypergraph $\mathcal{G}$ ``hierarchically''; that is, by decoding each track $T_{\ell}$ sequentially, starting at $\ell=1$, and committing a subset of hyperedges in $\mathcal{G}$ at each step. A simple example is shown in \cref{fig:hierarchical_example}. By the end of the algorithm we will have built a correction $\vec{h}\subseteq\mathcal{H}$ of the full hypergraph. Note that \cref{crit:hier}(c) guarantees that the first track $T_{1}$ is always matchable. Therefore, we perform matching on $T_{1}$, returning a track correction $\vec{e}_{1}\subseteq E_{1}$. We then commit this correction to the full hypergraph $\mathcal{G}$; that is, for every hyperedge $h_{1}\in H_{1}$ we commit to its inclusion in the correction $\vec{h}\subseteq\mathcal{H}$ depending on whether the corresponding edge is in the track correction $\vec{e}_{1}$. Then, we update the remaining defects in the graph, flipping them if they are the endpoint of a hyperedge in the correction $\vec{h}$. By definition, this means that there will be no more defects in the first track, but the update could introduce or remove defects in the later tracks (we call these ``artificial'' defects). We then continue to the second track $T_{2}$. Now, note that all hyperedges $h$ with $|h\cap V_{2}|=1$ must satisfy $|h\cap V_{1}|=2$ and therefore must already be committed to. We can then decode the real and artificial defects in $T_{2}$ using matching on the subgraph with vertices $V_{2}$ and edges $E_{2}$. Repeating this process for all $\ell$ provides a correction to the hypergraph $\mathcal{G}$, from which the logical action can be inferred. 

At this point it is useful to compare the hierarchical matching decoder to the \textsc{lom} decoder defined in Section~\ref{sec:logical_observable_decoder}. Indeed, the two decoders are closely related: the region of the hypergraph $\mathcal{G}$ decoded in each iteration of the single-\textsc{lom} decoder always forms a valid choice for the \textit{first} track $T_{1}$ of a hierarchical matching decoder because the matchable subgraph $G_O$ for a logical operator $O$ is a valid first track choice. The difference is that the hierarchical matching decoder then attempts to use this correction from $T_{1}$ as a starting point for a correction to the full hypergraph $\mathcal{G}$, while the \textsc{lom} decoder does not use the result of $T_{1}$ at all when decoding other regions of the hypergraph. In fact, the reverse is also (almost) true: any valid choice of first track $T_{1}$ corresponds to the matchable subgraph $G_{O}$ corresponding to the backpropagation of some operator $O$, although this operator may not be an actual observable (for example, the operator to be backpropagated could have an $X$ operator immediately preceding $Z$-measurement). This point can be seen by noting that after fixing a choice of which time-like edges immediately following a logical gate $\overline{L}$ are included in $T_{1}$, there is only one valid choice of time-like edges immediately preceding $\overline{L}$ that can be included in $T_{1}$. By inspection of \cref{fig:hyperedges}, this unique valid choice corresponds to the backpropagation of the corresponding logical Pauli operator through $\overline{L}$.
%In terms of the \textit{total} space-time cost of the classical computation involved in decoding, the hierarchical matching decoder will take fewer resources than the \textsc{lom} decoder because the hierarchical matching decoder only decodes each part of the decoding hypergraph once. However, the hierarchical matching decoder has limited ability to be parallelised (depending on the circuit of course), whereas the \textsc{lom} decoder can be parallelised with each single-\textsc{lom} decoder running at the same time.

We note that some previous work~\cite{sahay2024,chen2024transversal} only considered situations where the first track only traverses a single logical qubit; this is however not sufficient for decoding arbitrary Clifford circuits. Take, for example, a pair of alternating CNOT gates anywhere in a circuit
\begin{equation}
    %\begin{quantikz}
    %    &\ctrl{1}&\targ{}&\\
    %    &\targ{}&\ctrl{-1}&
    %\end{quantikz}
    \vcenter{\hbox{\includegraphics{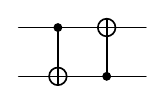}}}
    .
\end{equation}
Here it is not possible to choose a track $T_{1}$ that traverses only a single logical qubit. It is therefore necessary to consider more general choices of tracks \footnote{One can show that it is necessary in some circuits to have more than $L=2$ tracks, and that it is sufficient for arbitrary circuits to have $L=2k$ tracks where $k$ is the number of logical qubits in the circuit, leaving aside issues with decoding near fragile time-boundaries which can be solved. Because the focus here is to show that hierarchical matching does \textit{not} work in general, we do not provide a proof of these statements.}, and this brings in the possibility of the tracks containing loops that we now discuss.

%The problems we have found with the hierarchical matching decoder can be described by three general categories: fragile time boundaries, loops, and snakes. The first of these---fragile time boundaries---has a simple solution, while for the latter two a general solution to the problems is not so clear. The problem of loops is particularly serious: given a fixed circuit and track allocation, it is possible for a \textit{constant} number of basic errors to cause a logical error in decoding.
%The problem with the hierarchical matching decoder is that there are circuits for which it cannot correct basic errors of weight $<d/2$, as we explain in the next section 

\subsubsection{Time-like loops}\label{sec:hier_loops}

The most serious problem with the hierarchical decoder that we have found concerns the presence of time-like loops. 
Examples of time-like loops are shown in \cref{fig:hierarchical_loop}(a)(i--iii) in an alternating-CNOT experiment. There are three choices of first $Z$-track shown in \cref{fig:hierarchical_loop}(a) corresponding to the three single-\textsc{lom}s one can run in the circuit, all of which contain a time-like loop.

More precisely, we say that there is a loop in a track $T_{\ell}$ if there exists a set of time-like edges in $E_{\ell}$ such that together they do not trigger any defects in $T_{\ell}$. If a non-final track $T_{\ell}$ with $\ell<L$ contains such a loop, then it can produce artificial defects in later tracks that do not appear in pairs, causing the hierarchical decoder to perform high-weight corrections to match the defects to spatial boundaries. In the wrong configuration, this means that constant-weight errors can lead to logical errors in the later tracks when decoded.

\begin{figure*}
   \includegraphics[width=0.9\textwidth]{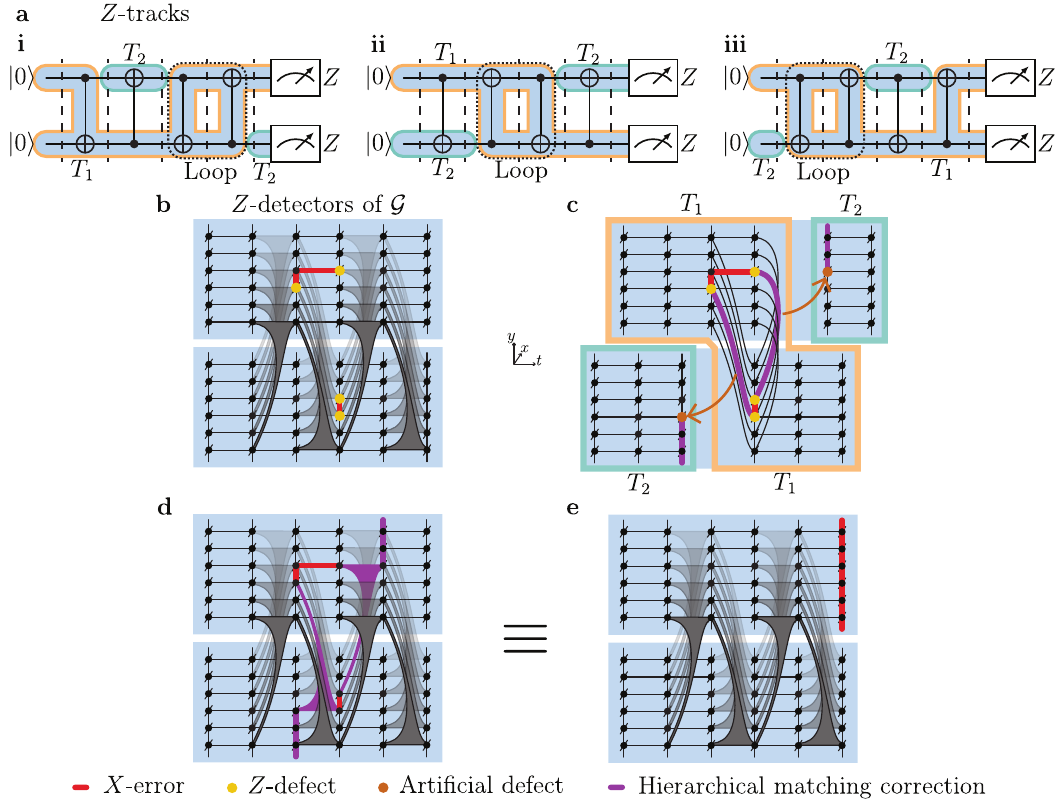}
    \caption{An example of a failure of hierarchical matching due to the presence of a loop in the first track. (a) The circuit consists of four alternating CNOT gates. Looking only at the $Z$-stabilizers, all three possible choices of tracks (i--iii) contain a loop in the first track. %(a-f) Follow the same structure as in \cref{fig:hierarchical_fragile}. 
    (b--e) For the choice of tracks in (a)(ii), a weight-3 physical error in the $d=7$ surface code resulting in a logical $X$-error immediately before the first measurement.}\label{fig:hierarchical_loop}
\end{figure*}

The presence of loops depends both on the circuit and the choice of tracks within the circuit. In the example in \cref{fig:hierarchical_loop}, a weight-3 error in the $d=7$ surface code leads to a logical error, although in this circuit the error is of constant weight and would cause a logical error no matter the distance. 
%Although the example shown is logically trivial
Similar examples can also be constructed in other non-trivial circuits by, for example, including a $T$-injection between two of the CNOT gates.
In general, the larger the loop, the larger the weight of the error required to cause a logical error. Any solution to using hierarchical matching will have to address the decoding of loops---perhaps by avoiding them using circuit identities or clever track design, or else by decoding them in some other way---in order to be fault-tolerant.

We briefly remark that time-like loops are not the \textit{only} impediment to fault-tolerance with a hierarchical decoder: low-weight errors arising from fragile time-boundaries and time-like snakes similar to what we will discuss later in \cref{sec:window} can also occur. An example of the issue of fragile time-boundaries is shown in \cref{fig:hierarchical_fragile} in Appendix \ref{sec:frag-hier}.

% \subsubsection{Snakes}\label{sec:hier_snakes}
% The final issue we have found with the hierarchical matching decoder is perhaps the most subtle, and requires a number of features of the circuit and the track allocation to work. Moreover, the examples typically involve a relatively high-weight physical error, which means that for small enough distances the circuits do not pose a problem. For example, the smallest example that we found of a physical error with weight $w<d/2$ that hierarchical matching is not guaranteed to correct has weight 14 in the $d=29$ surface code. As such, this problem is potentially less serious than the loops discussed above, where such examples exist at much lower distances; however, there is no guarantee that we found the smallest such example. Because of the potentially less serious nature of the problem, and because this is the only issue that we also find in our windowed decoder below, we defer a more detailed discussion of the issue to \cref{sec:window}.

%%%%%%%%%%%%%%%%%%%%%%%%%%%%%%%%%%%%%%%%%%%%%%%%%%%%%%%%%%%%%%%%%%%%%%%%%%%%%%%%%%%%%%%%%%%%

\section{Numerical simulations}
\label{sec:num}

\subsection{Experiments performed}

Here we describe the numerical simulations conducted to characterize the performance of our logical observable decoding strategy on sequences of transversal Clifford gates. We use Stim~\cite{gidney2021stim} to perform the stabilizer simulations and PyMatching~\cite{Higgott_2025} for the \textsc{mwpm} optimization. We have written a wrapper package in Python that builds the encoded circuit for each experiment under the specified noise model, available in~\cite{surface_sim}, and the decoder software can be found in~\cite{lomatching}. 

The physical implementation of the fold-transversal $\overline{H}$ and $\overline{S}$ gates is shown in Fig.~\ref{fig:fold_trans_gates}. For the QEC round of the surface code, we use the parallel schedule of Ref.~\cite{versluis2017scalable} with an ancilla reset after each measurement, see the circuits in Section~V in the SM~\cite{supp}. We have used a longer schedule than the ``standard'' one~\cite{Fowler_2012, Tomita_2014} to use the same primitive gate set---namely CZ and $R_Y(\pi/2)$ gates---as the current superconducting qubit and neutral atom platforms~\cite{wintersperger2023neutral}. The detectors (or checks) correspond to the vertices in the hypergraph and are constructed by tracking the code stabilizers through the fold-transversal operations~\cite{algo-FT}, according to what we call the pre-gate frame, see Section~\ref{sec:noise_detectors_graph} and Section~II in the SM~\cite{supp}.

The logical circuits used to benchmark the decoder are divided into two classes. Both classes correspond to Clifford circuits without conditional gates because Stim is a stabilizer simulator and it does not have out-of-the-box support for conditional gates (except for Pauli gates, which can be tracked in software). The experiments only include reliable observables because all fragile ones are ignored by \textsc{lom} decoding, see Section~\ref{sec:fragile}. 

The circuits in the first class focus on a single type of gate $G \in \{I, S, H, \text{CNOT}\}$ and are depicted in Fig.~\ref{fig:experiment_circuits}(a). 
For each experiment, the considered logical gate is repeated $d+1$ times on an unrotated surface code of distance $d$, and each time it is followed by one QEC round. This class of circuits is denoted as \textit{repeated-gate experiments}. In the case $G=I$, these circuits correspond to memory experiments, as realized in recent experiments~\cite{google2024_belowthreshold,Krinner_2022}. Such similarity allows us to extract a ``threshold'' for each transversal gate and compare it to the idling or memory experiment. 

The input states to the circuits are $\ket{0}^{\otimes n}$ ($Z$-basis experiment) and $\ket{+}^{\otimes n}$ ($X$-basis experiment), with $n=2d^2-2d+1$ for a distance $d$ code, followed by a single QEC round. Since the distance $d$ will be odd, $G^{d+1}=I$, so the ideal final state will be the same as the initial state. We measure the final logical state in the corresponding logical basis, by measuring all data qubits in such basis. Figure~\ref{fig:experiment_circuits}(a) shows the experiments done in the $Z$-basis, although we simulate both $Z$- and $X$-bases. 

\subsubsection{Arbitrary two-qubit Clifford circuits} \label{sec:tq_clifford_exp_description}

The second class of experiments aims to benchmark our decoder on one- and two-qubit circuits built using fold-transversal Clifford gates. The motivation for running this test is to verify that the decoder's performance does not depend on any particular structure of the decoding hypergraph, and that our decoder should be useful for experiments running logical randomized benchmarking on one or two logical qubits~\cite{combes2017, lacroix}. 

Logical randomized benchmarking would use random sequences of Clifford gates $C_m \circ C_{m-1} \ldots \circ C_1$ followed by their inverse $(C_m \circ \ldots \circ C_1)^{-1}$, and compile each $C_i$ into $H,S$ and CNOT. Instead, here we consider a Clifford gate $C$, and run $C^{-1} \circ C$ where we compile $C$ into $H$, $S$ and CNOT, and separately compile $C^{-1}$ into $H$, $S$ and CNOT. 
In principle, one could do this for the whole two-qubit Clifford group, $C\in \mathcal{C}_2$, but since $|\mathcal{C}_2| = 11520$, it would be needlessly computationally expensive. 

Instead, as logical Paulis are tracked in software, we consider the two-qubit Clifford group modulo Paulis, i.e.~$C \in {\rm Sp}(4, \mathbb{F}_2)$~\cite{calderbank1998quantum} which contains 720 elements. In order to further save time, we remove unnecessary SWAP gates from $C$, as follows. We iterate through all $C \in {\rm Sp}(4, \mathbb{F}_2)$ to create a list, and let Stim compile each gate $C$ into $H$, $S$ and CNOT using Gaussian elimination~\cite{gidney2021stim} \footnote{Note that we do not use the decomposition from Ref.~\cite{grier2022classification} as the hypergraphs would have a particular structure across all circuits and thus this would ruin the purpose of the two-qubit Clifford experiments.}. Then, if some $C'=\mathrm{SWAP} \cdot C$ where $C$ is already on the list, we remove $C'$ from the list, only keeping 360 Clifford circuits in total. For each $C$ on the shortened list, we compute $C^{-1}$ and compile this again into $H$, $S$, and CNOT. The maximum CNOT-$H$-$S$ depth of the compilation of $C^{-1} \circ C$ is $\ell=14$. For a fair comparison, if the depth falls short of 14, it is increased to 14 by adding logical idling gates. For most of the circuits $C^{-1} \circ C$, the compilation of $C$ and $C^{-1}$ into $H$, $S$ and CNOT is such that $C^{-1}$ does not undo each $H$, $S$ or CNOT in the order in which they are applied in $C$, ensuring that the gate sequence and hence the decoding hypergraph does not have a special structure. 
 
Finally, for each choice of Clifford $C$, we have two associated circuits: one in the $Z$-basis and one in the $X$-basis. In the discussion of the numerical results, we index the top (bottom) logical qubit as 1 (2). 

\added{We want to note that, although we could have used the deep logical circuits from Ref.~\cite{cain2024} (as done in Refs.~\cite{turner2025, cain+:upcoming}), their logical measurements are applied as noiseless multi-qubit Pauli product measurements, which are different than standard transversal $X$- and $Z$-basis measurements. Due to the importance of open-time boundaries in the \textsc{lom} decoder, we have preferred to use our arbitrary two-qubit Clifford circuits for benchmarking our decoder. }

\begin{figure*}[tb]
    \centering
    \includegraphics[width=0.99\textwidth]{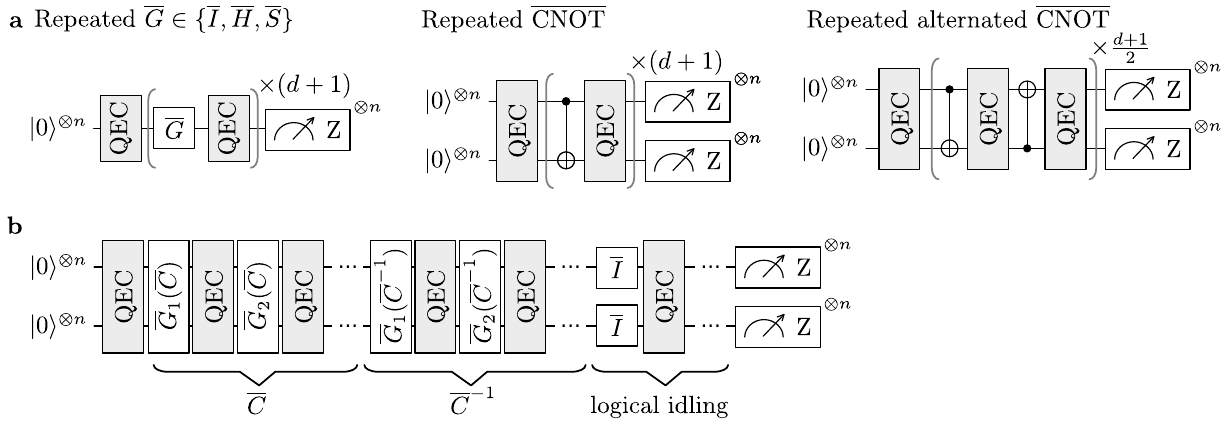}
    \caption{Logical circuits used to benchmark the decoder in the $Z$-basis: (a) repeated-gate experiments, (b) two-qubit Clifford experiments. The QEC box corresponds to a single QEC round. The gates $G_i(C)\in \{I, H, S, \text{CNOT}\}$ for $i=1,2,\ldots $ are the compilation of the Clifford gate $C$. The logical idling at the end of the circuit in (b) is used to fix the length to be independent of the Clifford $C$. The physical implementation of each block is described in Section~V in the SM~\cite{supp}.}
    \label{fig:experiment_circuits}
\end{figure*}

\subsection{Decoding phenomenological and circuit-level noise}\label{sec:circuit_level_slom}

Quantum device noise is more complex than the basic error model defined in Section~\ref{sec:noise_detectors_graph}; in particular, realistic errors can trigger more than three detectors, leading to new hyperedges. In this section, we show how these new hyperedges can be pre-processed before running the \textsc{lom} decoder. Unavoidably, the logical performance is diminished because correlation information is discarded when decomposing these new hyperedges into multiple edges in order to run a matching decoder. However, these hyperedges can be decomposed into a constant number of edges and therefore do not pose a fundamental fault-tolerance problem, unlike the weight-3 time-like hyperedges in Section~\ref{sec:noise_detectors_graph}.
We use two noise models in our numerical simulations, phenomenological depolarizing noise and SI1000 circuit-level depolarizing noise, defined as follows. In the phenomenological depolarizing noise model:
\begin{itemize}
    \item one applies a Pauli error drawn from $\{X, Y, Z\}$ with probability $p/3$ to each data qubit independently, before a logical gate, before a logical idling step, and before each single QEC round, 
    \item one applies an $X$ error with probability $p$ before a physical qubit $Z$ measurement. 
\end{itemize}
The reason to both insert Pauli noise before the logical gate and after the logical gate (i.e.~directly before the QEC round) is that these Pauli errors have different effects in the decoding hypergraph, see Section~II in the SM~\cite{supp}. 
We choose depolarizing noise in the phenomenological setting instead of independent $X$ and $Z$ errors~\cite{Dennis_2002}, since it is symmetric with respect to $X, Y, Z$ and our logical Clifford gates transform these Paulis into each other.

We use the SI1000 circuit-level noise model~\cite{gidney2021fault} to benchmark our decoder in conditions similar to future experimental realizations of logical circuits. This model is inspired by the superconducting qubit error probabilities, with two-qubit gate infidelities ten times higher than single-qubit gates and five times lower than measurement errors. Although the superconducting qubit platform does not currently have the long-range connectivity required for fold-transversal gates, the mentioned infidelity factors are also similar for neutral atoms, with infidelities of $\sim 0.03\%$ for single-qubit gates, $\sim 0.5\%$ for two-qubit gates, and $1-5\%$ for measurement operations~\cite{evered2023high, rodriguez2024experimental, graham2023midcircuit, deist2022mid}. Therefore, we believe this noise model is a good proxy for near-term realizations of logical experiments with transversal gates. 

A natural method of pre-processing the new hyperedges follows the standard procedure for memory experiments~\cite{gidney2021stim}. It decomposes the new hyperedges in terms of errors from the basic error model and updates the error probabilities of these basic errors accordingly. We note that there always exists a decomposition into $O(1)$ basic errors for any error in our phenomenological and circuit-level noise models since any circuit-level error can be propagated to a set of $O(1)$ basic errors. Since the \textsc{lom} decoder can correct any error of weight less than $d/2$, it follows that the \textsc{lom} decoder can decode any error of weight less than $d/c$ with $c\geq 2$, with the constant $c$ determined by the decomposition of circuit-level errors into basic errors. 

In particular, we have applied Algorithm~3 from Ref.~\cite{delfosse2023} to the decoding subgraph to obtain the hyperedge decompositions, although other methods are available~\cite{gidney2021stim}. An alternative method is to simply remove the new hyperedges from the decoding hypergraph. It is a reasonable option if the probability update from the hyperedge decomposition is small, or if it is similar for all edges so it leads to just an overall weight rescaling not impacting the matching preference. We have compared this option with the standard method for the repeated-gate experiments and see a relative difference of $<10\%$ in the logical error probability. In our numerical results, we are then using this alternative method for its simplicity. 

\subsubsection{Circuit distance of the repeated-gate experiments} \label{sec:d_circ_rep_exp}

Prior to decoding any experiments, as a sanity check, we have also computed the circuit distance $d_{\mathrm{circ}}$ for the entire repeated-gate experiment circuits for $d=3$ and $d=5$, assuming circuit-level noise, by solving the equivalent MaxSAT problem of finding the lowest-weight undetectable logical error, see Section~IV in the SM~\cite{supp} and Ref.~\cite{qec-util}. We numerically found that $d_{\mathrm{circ}}=d$ suggesting that a minimum-weight decoder would correct up to $< d/2$ errors, hence the circuits implementing the logical gates are fault-tolerant. \added{The $d=7$ computation involves more than 1600 variables and 1700 conditions, and} was stopped because it did not finish even after more than 20 hours. Since we are computing the circuit distance of {\em the entire circuit}, the decoding volume of the repeated-gate experiments grows as $\Theta(d^3)$, making this problem computationally challenging. Similarly, for the two-qubit experiments, the decoding volume is large because the considered circuits are deep, thus computing their circuit distance is computationally expensive.

In addition to the circuit distance, we have also computed the lowest-weight undetectable error pattern---we call it $\vec{e}_{\rm min}$---which leads to a decoding failure in the repeated-gate experiments {\em when decoding with the single-\textsc{lom} decoder}. Note that this error can have weight less than $d$ since we are now using a circuit-level error model instead of the basic error model, and because the single-\textsc{lom} decoder discards syndrome information since it only considers the decoding subgraphs $G_O$. As mentioned, the circuit-distance of the \textsc{lom} decoder can be reduced by a constant factor, since we know that the \textsc{lom} decoder can correct any basic error of weight less than $d/2$ and any circuit-level error can be expressed as $O(1)$ basic errors.
%As argued before this error pattern does not have to be a logical Pauli (up to stabilizers) as we are not considering all the detectors in the experiment when running single-\textsc{lom}. 
For $d=3,5,7$, with phenomenological noise, we have numerically obtained $|\vec{e}_{\rm min}| = 3,5,7$ respectively across all repeated-gate experiments, verifying that our decoder is circuit-distance preserving. For circuit-level noise and $d=3,5$, we get $|\vec{e}_{\rm min}| = 3,5$ except for the repeated-$\overline{S}$ experiment in the $X$-basis where $|\vec{e}_{\rm min}| = 2, 4$. 
The weight of this error $\vec{e}_{\rm min}$ is one lower due to a ``hook'' error and its error propagation from the long-range CZs in the physical implementation of the $\overline{S}$ gate, see Fig.~2 from SM~\cite{supp}. We discuss the effect of this distance reduction in the numerics.

\subsubsection{Logical failures} \label{sec:logical_failure}

%The output of the decoder is the prediction of the considered observable in the circuit being flipped due to the presence of errors. The ``raw" outcome of an observable is the product of physical qubit outcomes with support on the measured logical Pauli, and the error-corrected outcome is the ``raw" outcome plus the decoder's prediction of a logical flip. 

We consider a logical failure to have occurred whenever any of the error-corrected logical outcomes from the experiment do not match the prepared logical input state ($+1$ eigenstate of logical $Z$ or $X$). The logical error probability, $\overline{P}$, is the probability of having a logical decoding failure. %The variable $\overline{P}_X$ ($\overline{P}_Z$) refers to the logical error probability for the experiment on the $X$ ($Z$) basis. 
We estimate $\overline{P}$ using Monte-Carlo sampling, and we stop when reaching $5 \cdot 10^4$ logical decoding failures or $5\cdot 10^9$ samples, whichever is achieved first. The uncertainty in the logical error probability is calculated using the Wilson score interval with a 95\% confidence level~\cite{Wilson01061927, heussen2024dynamical}, which is more reliable at $\overline{P} \rightarrow 0$ than using a normal approximation or Wald interval~\cite{brown2001interval}. The threshold values have been estimated using the method from Ref.~\cite{hillmann2024single}, which is robust when the logical error probabilities around the threshold are close to their saturation value of $1 - 0.5^k$ with $k$ the number of logical qubits in the experiment.

\subsection{Numerical failure of splitting-hyperedge decoding}
\label{sec:fail_split}

To test the performance of the splitting hyperedge decoder discussed in \cref{sec:frag-time}, we run the repeated-gate experiments for $G \in \{I, S, \mathrm{CNOT}\}$ using the fold-transversal implementation of these gates and the $d=3,5,7$ unrotated surface codes, under circuit-level depolarizing noise. We decode these numerical experiments in both pre-gate and post-gate frames using a splitting-hyperedge \textsc{mwpm} decoder, in which the hyperedges in the decoding graph are attempted to be decomposed into edges using Stim with 
\begin{verbatim}stim.Circuit.detector_error_model(
    decompose_errors=True, 
    ignore_decomposition_failures=True,
)
\end{verbatim}
The logical error probability is shown in Fig.~\ref{fig:threshold_repeated_exp_matching}. 
We have also used the ``splitting-hyperedge'' methods from Ref.~\cite{delfosse2023} and obtained similar logical performance. 

As discussed in Section~\ref{sec:noise_detectors_graph}, this decoding strategy is not fault-tolerant because there exist weight-2 error patterns that lead to a logical error when decoded, for any code distance. Such bad performance is observed in Fig.~\ref{fig:threshold_repeated_exp_matching}, with the logical error probability scaling as $p$ for small $p$ for distances 3, 5, and 7, and saturating for large $p$ at $\overline{P} \approx 1/2$ for experiments on a single logical qubit and $\overline{P} \approx 3/4$ for experiments on two logical qubits. Note that this is worse than the $O(p^2)$ scaling predicted in \cref{sec:noise_detectors_graph} due to the presence of circuit-level noise. This $O(p)$ scaling is observed for both pre-gate and post-gate frames. Therefore, the splitting-hyperedge \textsc{mwpm} decoder cannot handle the hyperedges present when performing only 1 QEC round after each logical gate and has motivated us to develop the decoding method in this paper to decode across transversal gates. 

We emphasize that if logical gates are separated by $\Theta(d)$ QEC rounds, it could be possible to decode each gate in an individual decoding window using the decoding algorithms from Ref.~\cite{chen2024transversal} for the $\overline{S}$ gate and from Ref.~\cite{sahay2024, wan2024, guernut2024} for the $\overline{\mathrm{CNOT}}$ gate. Or, alternatively, Ref.~\cite{cain2024} shows how the decoding hypergraph can be modified in this situation so that it can be decoded with a splitting-hyperedge \textsc{mwpm} decoder. However, these decoding methods slow down the logical circuit execution by a factor $\Theta(d)$, thus not exploiting the advantage of fast logic that comes from performing transversal gates.

\begin{figure*}[htb]
    \centering
    \includegraphics[width=0.9\textwidth]{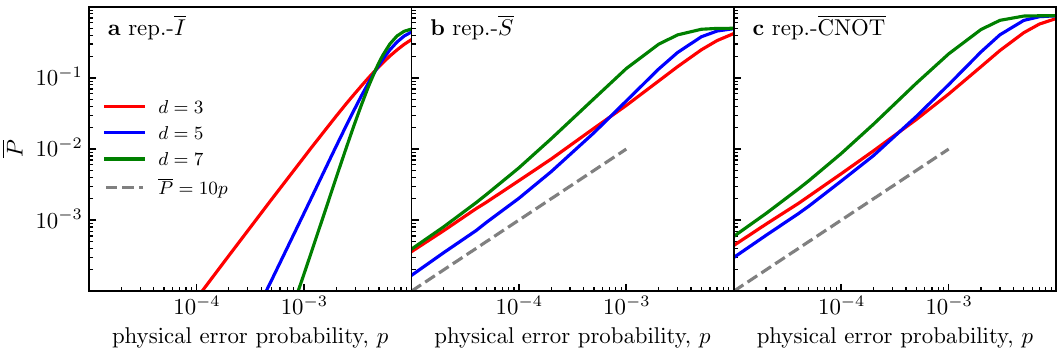}
    \caption{Logical error probability of the `splitting-hyperedge' \textsc{mwpm} decoder as a function of the physical error probability for some repeated-gate experiments, showing a scaling of $\overline{P}$ with $p$ and no evidence of a threshold. The experiments are run on unrotated surface codes with circuit-level noise, in the $\overline{Z}$-basis, using the pre-gate frame. The results for the post-gate frame are statistically identical. To decode, we use Stim to decompose hyperedges into edges and then run PyMatching, which we call a splitting-hyperedge \textsc{mwpm} decoder. \added{The 95\% confidence intervals are shown as shaded regions but they are too small to be visible. }}
    \label{fig:threshold_repeated_exp_matching}
\end{figure*}

%\FloatBarrier

\subsection{Decoding across logical two-qubit Clifford gates}

This section contains the numerical results of our \textsc{lom} decoder for the two classes of experiments. The circuits have been decoded using both the pre-gate and post-gate detector frames. We have found that the logical error probabilities are statistically identical for both frames in all considered scenarios, thus the discussion and figures only report the results in the pre-gate frame. The reason for such similar performance is that the decoding hypergraphs of both frames differ neither in the number nor the weight of the hyperedges, but only in which vertices participate in each hyperedge,
%with respect to the hyperedges associated with measurement errors, 
i.e.~they both have weight-3 as in Fig.~\ref{fig:hyperedges}, but just differ in which three vertices form a hyperedge. 

We have not implemented the windowed-\textsc{lom} decoders which will be defined in \cref{sec:window} and we leave their numerical benchmarking as future work. 

%The decoding hypergraphs built in the pre-gate and post-gate frame different slightly with respect to the hyperedges associated with measurement errors, i.e. they have weight-3 in both cases as in Fig.~\ref{fig:hyperedges}, but differ in which three vertices form a hyperedge. Therefore, we expect the performance of our decoder to be the same for both frames under basic and phenomenological noise. 

\subsubsection{Repeated-gate experiments} \label{sec:results_rep_exp}

The repeated-gate experiments correspond to memory experiments with logical gates instead of idling between QEC rounds. Such a similarity allows us to benchmark our decoder on each logical gate individually by comparison to the memory, repeated-$\overline{I}$,  experiment. 
Our decoder for the non-trivial repeated-gate experiments obtains a very similar performance to the repeated-$\overline{I}$ experiments, see Fig.~\ref{fig:threshold_repeated_exp} for both phenomenological and circuit-level noise (and Fig.~4 from SM~\cite{supp} for additional results). We observe threshold behavior and overall good logical error suppression.
We also run the same experiments with 2 and 3 QEC rounds between the $d+1$ logical gates, but we see worse performance (similar to results in Ref.~\cite{cain2024}): this is not surprising as the total sequence of physical gates is simply longer in these experiments with more interspersed QEC rounds.

The experiments with worse performance are the repeated-$\overline{S}$ in the $\overline{X}$-basis and the two repeated-$\overline{\mathrm{CNOT}}$ experiments. Regarding the repeated-$\overline{S}$ experiment, a contribution to the difference between the $\overline{X}$- and $\overline{Z}$-bases is that the number of circuit-level errors that the \textsc{lom} decoder can correct is smaller by 1 in the former, as described in Section~IV in the SM~\cite{supp}. Another factor is the difference in the probability of a logical Pauli error that can flip the measurement outcome for both bases. For surface codes, the minimum weight of Pauli $\overline{Y}$ is larger than that of Pauli $\overline{X}$ and $\overline{Z}$ ($2d-1 > d$), hence it is more probable to have $\overline{X}$ or $\overline{Z}$ errors than $\overline{Y}$ errors. For the repeated-$\overline{S}$ experiment in the $\overline{Z}$-basis, the observable to protect throughout the circuit is always $\overline{Z}$, thus we are susceptible to only $\overline{X}$ and $\overline{Y}$ errors. However, in the $\overline{X}$-basis, the $\overline{S}$ gates make the observable change from Pauli $\overline{X}$ to Pauli $\overline{Y}$ and vice-versa. Since $\overline{Y}$ is susceptible to both $\overline{X}$ and $\overline{Z}$ errors, this leads to a worse performance when carrying out the experiment in the $\overline{X}$-basis. In the other experiments, the logical gates do not map $\overline{X}$ nor $\overline{Z}$ to $\overline{Y}$, thus this effect is not present, see Section~VI in the SM~\cite{supp}. 

%, which is susceptible to $\overline{X}$ and $\overline{Z}$ errors. Since having an $\overline{X}$ or $\overline{Z}$ error is more likely than having an $\overline{X}$ or $\overline{Y}$ error, the performance is worse in the $\overline{X}$ basis. This difference increases with distance because the probability of a $\overline{Z}$ error versus a $\overline{Y}$ error grows with distance. 

Regarding the repeated-gate experiments with $\overline{\mathrm{CNOT}}$ gates, there are two observables to decode: $\overline{Z}_1$ and $\overline{Z}_2$ for the $\overline{Z}$-basis experiment and $\overline{X}_1$ and $\overline{X}_2$ for the $\overline{X}$-basis experiment. 
Due to the Pauli propagation through the $\overline{\mathrm{CNOT}}$ gates, one of the two observables will have support on both logical qubits at some point in the circuit. This observable is going to be susceptible to logical Pauli errors happening on both logical qubits, compared to just one of them for the memory experiment. Therefore, the repeated-gate experiments with $\overline{\mathrm{CNOT}}$ gates perform worse than the memory experiment. The probability of incorrectly inferring each one of the individual observables is consistent with these arguments and shown in Fig.~\ref{fig:threshold_repeated_exp_cnot}. 

We note that the repeated-$\overline{I}$ and $\overline{\mathrm{CNOT}}$ experiments in the $\overline{Z}$-basis have slightly lower performance than in those in the $\overline{X}$-basis under circuit-level noise, see Fig.~\ref{fig:threshold_repeated_exp}(f),(i)--(j). Although one might expect the opposite due to the extra layer of $H$ gates during the initialization of $\ket{+}^{\otimes n}$ (see the circuits in Section~V in the SM~\cite{supp}). Refs.~\cite{zheruithesis, o2024compare} explain this discrepancy with the order of the CZs during the QEC round which affects $\overline{X}$ differently than $\overline{Z}$. This is not observed in the repeated-$\overline{H}$ experiment as it symmetrizes $\overline{X}$ and $\overline{Z}$. 

\begin{figure*}[htb]
    \centering
    \includegraphics[width=0.4\textwidth]{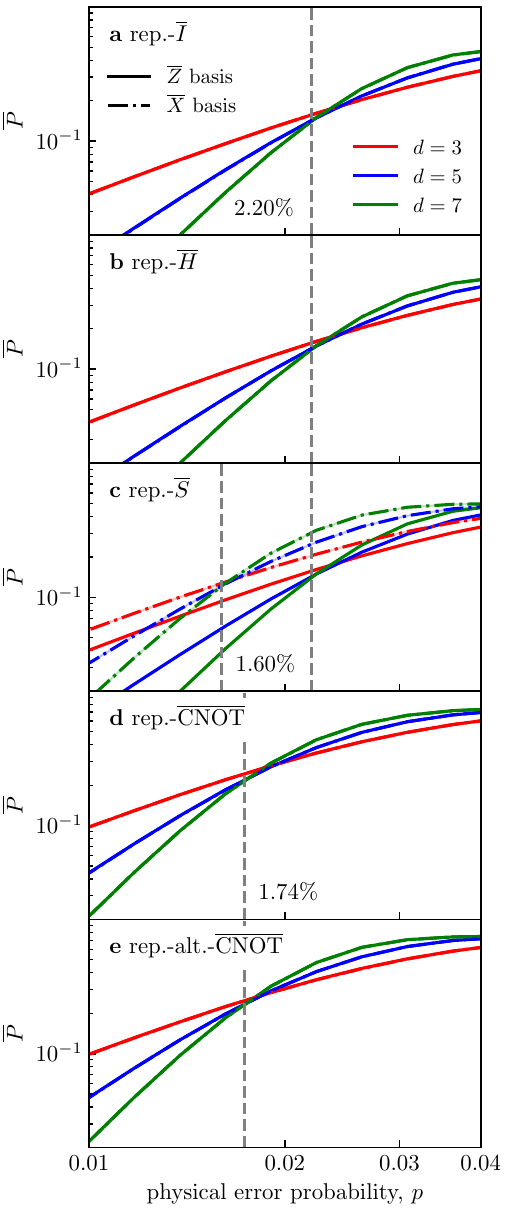}
    \includegraphics[width=0.4\textwidth]{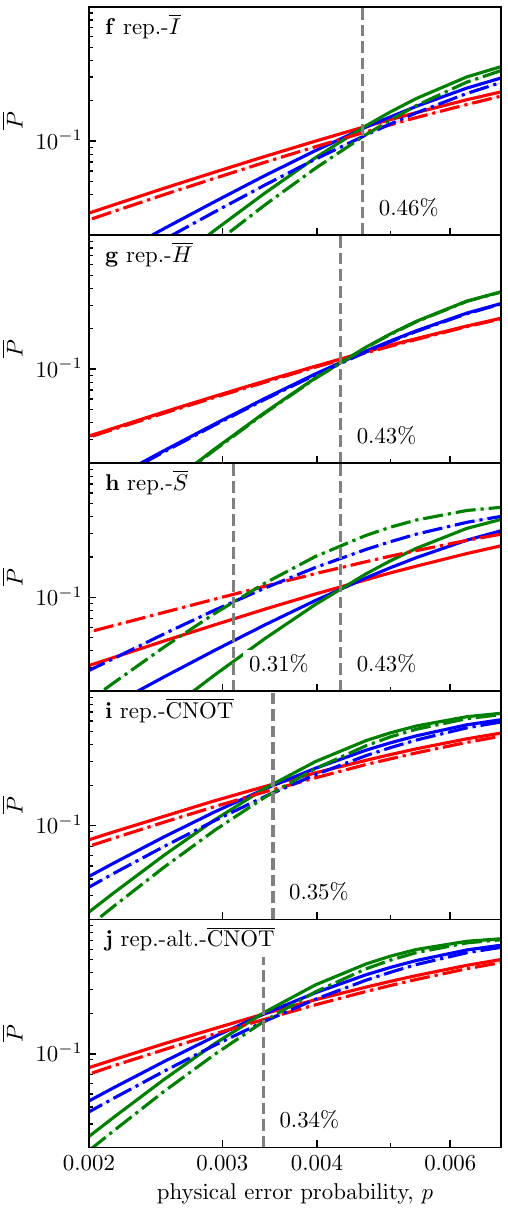}
    \caption{Logical error probability of the \textsc{lom} decoder as a function of the physical error probability for the repeated-gate experiments under (a--e) phenomenological depolarizing noise and (f--j) circuit-level noise for the pre-gate frame. In the experiments, each logical operation is followed by one QEC round, thus $d+2$ rounds are performed, see Fig.~\ref{fig:experiment_circuits}(a). The vertical dashed gray lines indicate the crossing of $\overline{P}$ for the $d=5$ and 7 surface codes. \added{The 95\% confidence intervals are shown as shaded regions but they are too small to be visible. }}
    \label{fig:threshold_repeated_exp}
\end{figure*}

\begin{figure*}[htb]
    \centering
    \includegraphics[width=0.9\linewidth]{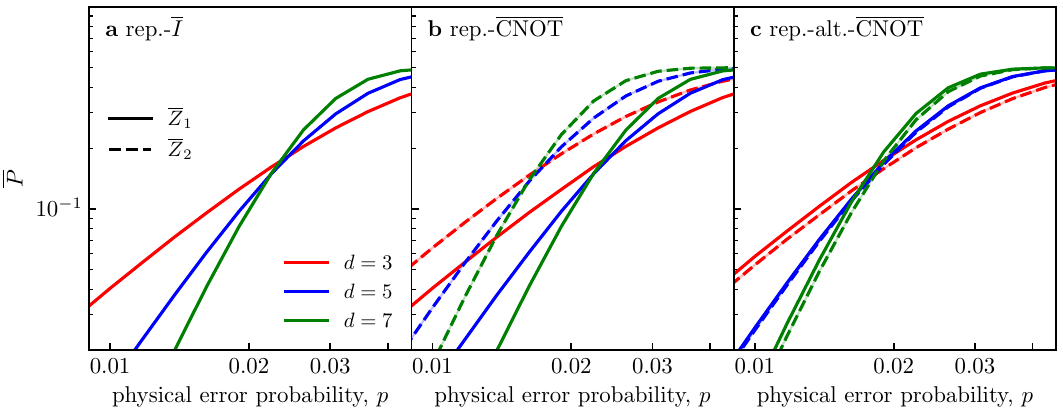}
    \caption{Focusing on single logical observables for the two-qubit repeated-gate experiments under phenomenological noise. As one may expect, the performance of $\overline{Z}_1$ in the repeated-$\overline{\mathrm{CNOT}}$ experiment in (b) matches the performance of the repeated-$\overline{I}$ experiment in (a) while $\overline{Z}_2$, being sensitive to logical $X$ errors on both qubits, performs worse. For the alternating-CNOT experiment, the logical error rates of $\overline{Z}_1$ and $\overline{Z}_2$ are symmetrized. \added{The 95\% confidence intervals are shown as shaded regions but they are too small to be visible. }}
    \label{fig:threshold_repeated_exp_cnot}
\end{figure*}

%\FloatBarrier
\subsubsection{Preferential treatment}
\label{sec:pt}
We show numerical evidence for the recommendation mentioned at the end of Section \ref{sec:reliable} about how to build the generating set of final observables. In particular, we mentioned that, if the final outcome of an algorithm is some \textsc{xor} of final measurements, it is advisable to include this as a (reliable) observable in the generating set to minimize its logical error rate. We consider the case of the repeated-alternated-$\overline{\mathrm{CNOT}}$ experiment in the $\overline{Z}$ basis (whose initial logical state is $\ket{\overline{00}}$) under phenomenological depolarizing noise and use two methods for obtaining the error-corrected $\overline{Z}_1\overline{Z}_2$ outcome, see Fig.~\ref{fig:threshold_cnot_z1z2}. The first one corresponds to running a single-\textsc{lom} decoder for the $\overline{Z}_1\overline{Z}_2$ observable, while in the second one the $\overline{Z}_1$ and $\overline{Z}_2$ observables are decoded independently with single-\textsc{lom} decoders and then the error-corrected $\overline{Z}_1\overline{Z}_2$ outcome is obtained by taking their XOR. The first method has a slightly better performance. 
%Such a difference in performance is expected as exemplified in Fig.~\ref{fig:Bell_state_fragile_observables}.  

\begin{figure}[htb]
    \centering
    \includegraphics[width=0.85\linewidth]{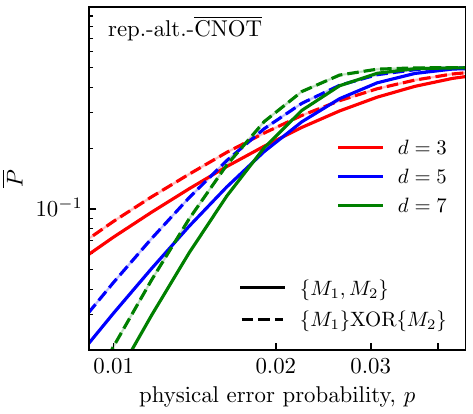}
    \caption{Performance of the \textsc{lom} decoder for the $\overline{Z}_1\overline{Z}_2$ observable of the repeated-alternated-$\overline{\mathrm{CNOT}}$ when using two different methods of obtaining the logical correction. The ``$\{M_1, M_2\}$'' method corresponds to running a single-\textsc{lom} decoder for the $\overline{Z}_1\overline{Z}_2$ observable. The ``$\{M_1\} \mathrm{XOR} \{M_2\}$'' method corresponds to running single-\textsc{lom} decoders for the $\overline{Z}_1$ and $\overline{Z}_2$ observables separately and then obtaining the error-corrected $\overline{Z}_1\overline{Z}_2$ outcome by taking their XOR. The circuit is run with phenomenological depolarizing noise. \added{The 95\% confidence intervals are shown as shaded regions but they are too small to be visible. }}
    \label{fig:threshold_cnot_z1z2}
\end{figure}

\subsubsection{Arbitrary two-qubit Clifford experiments}

In this section, we show that our decoder maintains its performance when decoding all two-qubit Clifford experiments under phenomenological and circuit-level depolarizing noise. In Fig.~\ref{fig:threshold_clifford_exp}, we report the average (solid line) logical error probabilities across all circuits, as well as the best and worst performing ones (shaded region). For phenomenological noise, we observe the same scaling as for the repeated-gate experiments (i.e. $p^{\lceil (d+1)/2 \rceil}$), while for circuit-level noise, the average and worst cases seem to scale as $p^{\lceil (d+1)/2 \rceil - 1}$ due to the presence of $\overline{S}$ gates for which the circuit distance can be reduced by 1 as shown in Section~IV in the SM~\cite{supp}. The exponential suppression of the physical errors shows that our decoder can decode any two-qubit Clifford unitary and does not exploit any particular structure of the decoding hypergraph. 

The crossings of the average $\overline{P}$ [Fig.~\ref{fig:threshold_clifford_exp}(a),(c)] are hard to see because of the large depth of the two-qubit Clifford experiments. The logical error probability is very close to $\overline{P}_{\mathrm{max}} = 3/4$ at the crossings, which has also been reported by Ref.~\cite{cain2024} for these types of (deep) logical circuits. We compute the thresholds using the method described in Ref.~\cite{hillmann2024single}, which has been designed for this situation of $\overline{P}$ saturating to its maximum close to the crossings. The threshold values for the average $\overline{P}$ across all circuits under phenomenological noise are 2.9576\added{(2)}\% and 2.8212\added{(2)}\% for the $\overline{Z}$ and $\overline{X}$ basis (resp.), while for circuit-level noise they are 0.5036\added{(1)}\% and 0.4963\added{(1)}\% (resp.).
We note that the crossing between $d=5$ and 7 is at a higher physical error probability than the repeated-$\overline{I}$ (memory) experiment for both noise models: these results are not comparable as in the repeated-gate experiments the depth of the logical circuit increases with $d+1$ while for the two-qubit Clifford experiments it is simply fixed at 14. 

Apart from the average logical error probability, we are also interested in the performance difference between the best and worst circuits. We focus the discussion on phenomenological noise, because the location and number of noise channels are the same for all two-qubit Clifford experiments, ensuring a fair comparison across circuits. As expected, the best-performing circuit corresponds to the memory experiment while the worst one is
\begin{widetext}
\begin{equation} \label{eq:worse_clifford_circuit}
    %\begin{quantikz}[column sep=8pt, row sep=20pt]
    %    \lstick{$|+\rangle$}\slice{}&\gate{H}&&&\targ{}&\gate{H}&\ctrl{1}&\gate{H}&\gate{S}\slice{}&\gate{S}&&&\targ{}&\gate{H}&\gate{S}\slice{}&\meter{X} \\
    %    \lstick{$|+\rangle$}\slice{}&\gate{S}&\gate{H}&\gate{S}&\ctrl{-1}&&\targ{} &&&\gate{S}&\gate{H}&\gate{S}&\ctrl{-1}&&&\meter{X}
    %\end{quantikz}
    \vcenter{\hbox{\includegraphics{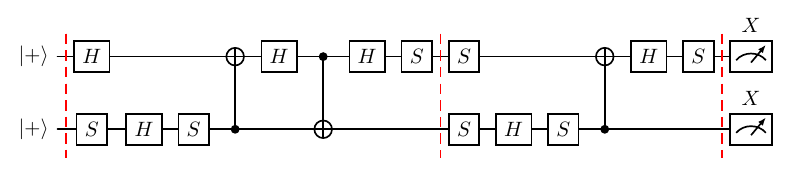}}}
    .
\end{equation}
\end{widetext}
In this latter circuit, the observables are propagated to both qubits and transformed into Pauli $\overline{Y}$, making them more prone to errors, as discussed in Section~\ref{sec:results_rep_exp}. The factor $\overline{P}_{\mathrm{worst}} / \overline{P}_{\mathrm{best}}$ at a fixed physical probability grows with the code distance, with values 5, 12, and 22 for $d=3, 5, 7$ respectively at $p = 2 \cdot 10^{-3}$. At such a low physical error probability, $\overline{P}$ is dominated by $C \cdot p^{\lceil \frac{d + 1}{2} \rceil}$. The entropic prefactor $C$ depends on the number of lowest-weight error patterns leading to a logical failure. The larger the size of the observing region, the faster $C$ grows with the distance. Therefore, the factor $\overline{P}_{\mathrm{worst}} / \overline{P}_{\mathrm{best}} \sim C_{\mathrm{worst}} / C_{\mathrm{best}}$ increases with the distance at a fixed physical probability.

\begin{figure*}[p]
    \centering
    \includegraphics[width=0.4\textwidth]{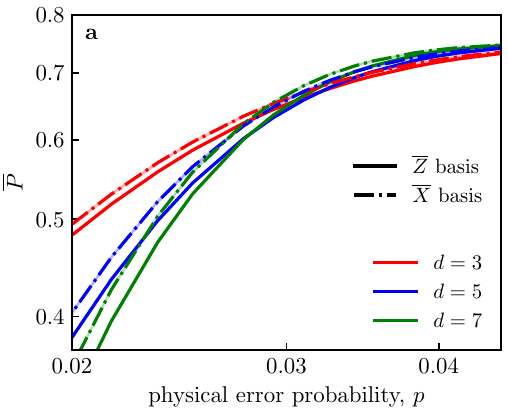}
    \includegraphics[width=0.4\textwidth]{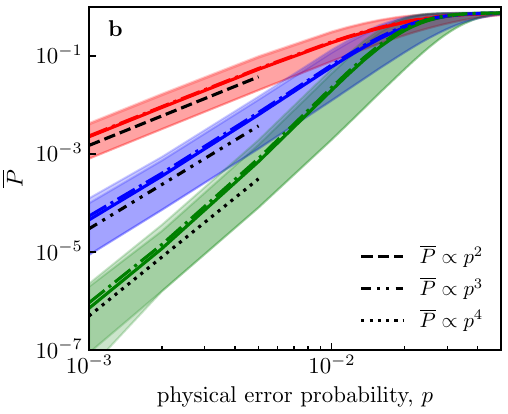}
    \includegraphics[width=0.4\textwidth]{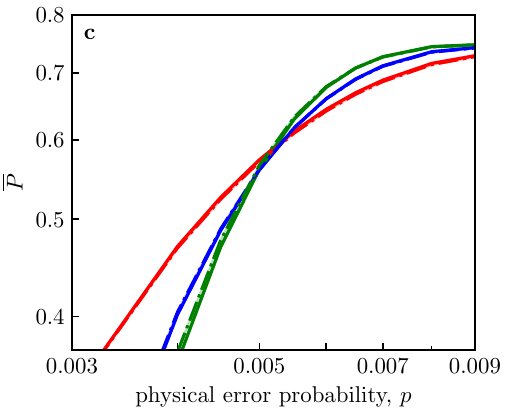}
    \includegraphics[width=0.4\textwidth]{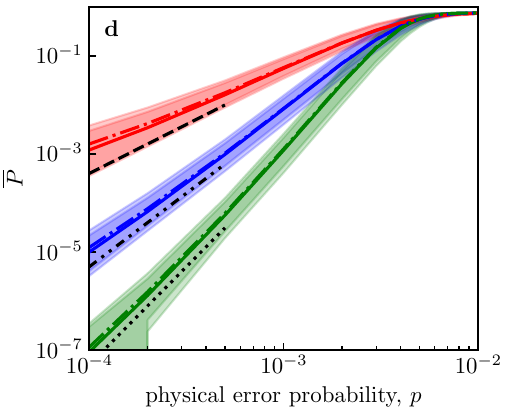}
    \caption{Logical error probability of the \textsc{lom} decoder as a function of the physical error probability for the two-qubit Clifford experiments in the pre-gate frame under (a, b) phenomenological and (c, d) circuit-level noise. 
    The solid (resp. dashed) line corresponds to the average over all experiments in the $\overline{Z}$- (resp. $\overline{X}$-) basis. \added{The shaded regions in (a, c) are the 95\% confidence intervals, but they are too small to be visible.} The shaded regions in (b, d) show the best and worst performance across all circuits. }
\label{fig:threshold_clifford_exp}
\end{figure*}

\subsubsection{Comparison with minimum-weight decoding}

We compare the performance of our decoder to minimum-weight decoding, which we implemented in~\cite{mle_decoder} by solving the equivalent mixed-integer (linear) programming problem~\cite{landahl2011, cain2024} with the Gurobi solver~\cite{gurobi}. 
Because minimum-weight decoding is computationally expensive, we only decode the best and worst performing two-qubit Clifford circuit, that is, a two-qubit memory experiment and circuit~\ref{eq:worse_clifford_circuit} respectively, under circuit-level noise, see Fig.~\ref{fig:mle_comparison}. For the memory experiment, both decoders perform similarly with just a factor of two difference because minimum-weight decoding does not have to decompose the hyperedges and can process them directly. For the worst-performing circuit, the performance difference increases at lower physical error probabilities due to the worse scaling of the \textsc{lom} decoder in the presence of $\overline{S}$ gates. We expect the performance of the \textsc{lom} decoder to be closer to minimum-weight decoding when running belief propagation as a pre-step to update the decoding hypergraph probabilities.

\begin{figure*}[p]
    \centering
    \includegraphics[width=0.75\linewidth]{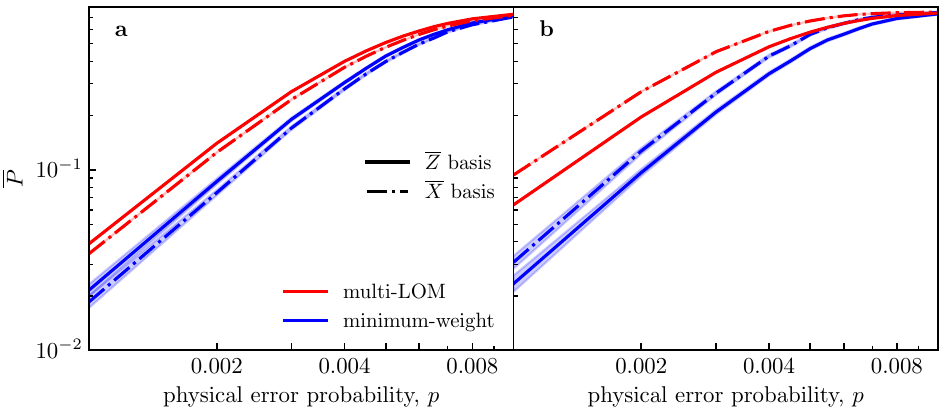}
    \caption{Performance comparison between \textsc{lom} and (approximate) minimum-weight decoding for (a) the two-qubit memory experiment and (b) circuit~\ref{eq:worse_clifford_circuit} for the distance-3 unrotated surface code under circuit-level noise. The solid (dashed) line corresponds to the experiment in the $\overline{Z}$- (resp. $\overline{X}$-) basis. \added{The shaded regions show the confidence interval with 95\% confidence level. Note that for the \textsc{lom} decoder, the interval is too small to be visible. } }
    \label{fig:mle_comparison}
\end{figure*}

\subsubsection{\added{Deep Clifford circuits}}

\added{This section contains a benchmark of the \textsc{lom} decoder applied to deep Clifford circuits~\cite{cain2024}, giving a direct numerical comparison with decoding via the Ghost Protocol in~\cite{turner2025}. The goal of this section is to show that one can easily enhance the performance of the \textsc{lom} decoder by taking into account correlations (e.g. due to $Y$ errors) by running belief propagation (BP) in the \textit{full hypergraph} as a pre-processing step to optimize error probabilities in the \textsc{lom} decoding graphs. The Ghost Protocol decoder already tries to account for these correlations directly. In our \textsc{bp+lom} decoder, if the belief propagation does not converge after $n_{\rm BP}$ iterations, then \textsc{lom} is executed using the updated error probabilities from the last belief-propagation iteration. }

\added{A preparation step involves initializing four surface code logical qubits in $\ket{\overline{+}}^{\otimes 4}$ followed by 32 logical gate layers, where each layer is composed of random transversal $\{H, X, Y, Z\}$ gates followed by random transversal CNOTs, see Ref.~\cite{cain2024} for more information. One then decodes the multi-Pauli (logical) stabilizers of the corresponding four (logical) qubit state, which should ideally have outcome +1. We consider a decoding failure to have occurred whenever any of the error-corrected logical outcomes from the circuit do not match the ideal outcome +1. 
%Note that the measuring step differs from our model described in Sections~\ref{sec:ft-surface} and \ref{sec:circuit_level_slom} in two aspects: (1) the logical multi-Pauli measurements are noiseless, and (2) they are not transversal measurements. 
The noise model used in the simulations of these deep Clifford circuits is the one from Ref.~\cite{turner2025}, with the same physical error rate $p=10^{-3}$ across all code distances, and the random circuit that is implemented is identical to the one in \cite{turner2025}.}

\added{The performance of the \textsc{lom}, \textsc{bp+lom}, and Ghost Protocol decoders on deep Clifford circuits with a single QEC round per gate layer is shown in Fig.~\ref{fig:deep_logical_circuits}. As expected, \textsc{bp+lom} is more accurate than \textsc{lom} due to belief propagation taking into account correlations, with its performance increasing with the number of iterations. 
In particular, very few belief-propagation iterations can reduce the decoder failure probability by an order of magnitude at high distances, showing that little correlation information can heavily improve the \textsc{lom} performance. Finding a faster and more efficient method for this purpose is an open question and beyond the scope of this work. 
}

\begin{figure}
    \centering
    \includegraphics[width=0.99\linewidth]{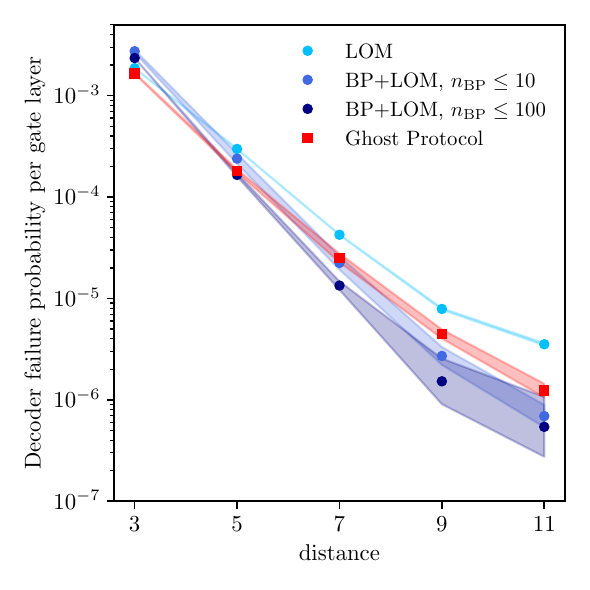}
    \caption{\added{Decoder failure probability per gate layer of the \textsc{lom}, \textsc{bp+lom}, and Ghost Protocol~\cite{turner2025} decoders for the deep Clifford circuit with a single QEC round per gate layer and noise model from Ref.~\cite{turner2025}. The circuit has 32 gate layers, where each layer is composed of random transversal $\{H, X, Y, Z\}$ gates followed by random transversal CNOTs on each qubit. All distances have been simulated with the same noise strength $p=10^{-3}$. The parameter $n_{\rm BP}$ in \textsc{bp+lom} corresponds to the maximum number of belief-propagation iterations that are run. The shaded regions show the confidence interval with 95\% confidence level. Note that, for $n_{\rm BP}=100$ at distances 9 and 11, the confidence intervals are quite large as only $\sim 10$ decoding failures have been sampled due to longer decoding runtimes. 
    %If $n_{\rm BP}$ iterations are reached before belief propagation converges, then \textsc{lom} is executed using the updated error probabilities from the last belief-propagation iteration. 
    }}
    \label{fig:deep_logical_circuits}
\end{figure}

%\FloatBarrier

\section{Windowed logical observable matching decoding}
\label{sec:window}

The \textsc{lom} decoder discussed so far is not a scalable approach to decoding because each instance of the single-\textsc{lom} decoder is independent and will therefore need to decode the circuit all the way back to the initial time step. To tackle this problem we introduce a \textit{sliding window} version of the \textsc{lom} decoder that we call windowed-\textsc{lom} decoder for short. This works by splitting the circuit into sliding windows in time, with each window decoded by splitting it into multiple single-\textsc{lom} decoding instances and committing to a part of the correction that they make. This way, later iterations of the single-\textsc{lom} decoder can use the previously-committed corrections to reduce the space-time volume of the decoding subgraph. 

%Note that more recent work has proposed the use of so-called \textit{parallel} window decoders~\cite{skoric2023parallel} that solve the decoding backlog problem~\cite{terhal:review}; we do not consider such parallel decoders here.

\begin{table}[]
    \centering \colorred
    \begin{tabular}{c|c|c|c}
        Decoder & Efficient & Fault-tolerant & \makecell{Resets and\\measurements} \\\hline
        \makecell{Basic\\Sec.~\ref{sec:basic_window}} & \makecell{Yes*\\Sec.~\ref{sec:basic_windowed-LOM_efficiency}} & \makecell{No\\Sec.~\ref{sec:window-ft}} & Slow\\\hline
        \makecell{Two-step\\Sec.~\ref{sec:two-step_windowed-LOM}}& \makecell{No\\Sec.~\ref{sec:window_efficiency}} & \makecell{No\\Sec.~\ref{sec:window-ft}} & Fast\\\hline
        \makecell{Basic +\\ short-cut edges} & \makecell{Yes*\\Sec.~\ref{sec:basic_windowed-LOM_efficiency}}& \makecell{Conjectured\\App.~\ref{sec:short-cut_edges}} & Slow\\\hline
        \makecell{Two-step +\\ short-cut edges} & \makecell{No\\Sec.~\ref{sec:window_efficiency}} & \makecell{Conjectured\\App.~\ref{sec:short-cut_edges}} & Synchronized
    \end{tabular}
    \caption{Summary of the windowed-\textsc{lom} decoder variants introduced in this section. Here, ``efficient'' means that the decoder runs in polynomial time in $d$; the asterisk here is to indicate that the basic windowed-\textsc{lom} decoder is not efficient for some unphysical circuits. ``Fault-tolerant'' means that the decoder can always correct basic errors of weight less than $d/2$. All the variants can handle fast gates and $T$-injections, and the last column indicates whether they can also handle fast resets and measurements.}
    \color{black}
    \label{tab:windowed-LOM_variants}
\end{table}

We wish for the windowed-\textsc{lom} decoder to satisfy three properties. First, that it is computationally efficient to run, growing at most polynomially in the distance $d$ of the code independent of the number of logical qubits and the depth of the circuit. Second, that it is able to correct any basic error pattern with weight $<d/2$. And finally, we wish for this to happen while maintaining the ability to perform fast resets, gates, and measurements. Unfortunately, we are unable to give a construction where all these properties are satisfied simultaneously. Instead, we provide a number of versions of the windowed-\textsc{lom} decoder, each of which has different computational complexity, fault-tolerance and requirements on the circuit\added{; these are summarized in \cref{tab:windowed-LOM_variants}}.

We begin pedagogically with the simplest version of the windowed-\textsc{lom} decoder that we call the ``basic'' windowed-\textsc{lom} decoder. In this setting, we make some fairly strict assumptions about the circuit; namely, that resets are not fast, but slow, using $\Omega(d)$ rounds of QEC to reach the ``quiescent'' state before starting to interact with other logical qubits. With this, the basic windowed-\textsc{lom} decoder is computationally efficient, but it cannot always correct up to $d/2$ errors. 

We solve the issue of slow resets in \cref{sec:two-step_windowed-LOM} by introducing a second step to the windowed-\textsc{lom} decoder to give what we call the \textit{two-step} windowed-\textsc{lom} decoder. However, the second step is not computationally efficient in general, and the two-step windowed-\textsc{lom} decoder suffers from the same issues of fault-tolerance as the basic windowed-\textsc{lom} decoder.

In both the basic and two-step windowed-\textsc{lom} decoders, the same two modifications need to be made for it to be able to correct $d/2$ basic errors. First, it requires the ``synchronization'' of resets and measurements to occur just before or after each so-called window boundary, respectively. Notably, this synchronization is \textit{not} necessary for the measurements occurring in $T$-injections which can occur at any time in the circuit, but it does mean that algorithms with lots of other resets and measurements \added{(such as running parity-check circuits of some code using logical surface-code qubits)} will either be slowed down or will require additional idling logical qubits. And second, it requires the addition of extra edges in each single-\textsc{lom} decoder that we call ``short-cut'' edges. With these modifications, we conjecture that the windowed-\textsc{lom} decoder can always correct any number of basic errors $<d/2$.

%However, a windowed variant of the \textsc{lom} decoder could have problems with fault-tolerance, as committed corrections from one single-\textsc{lom} decoder are used in subsequent iterations of the single-\textsc{lom} decoder. Intuitively, the windowed-\textsc{lom} decoder could suffer from similar issues as the hierarchical decoder discussed in \cref{sec:hier}. However, unlike for the hierarchical decoder, we are able to find a modification to the windowed-\textsc{lom} that guarantees it is always able to correct any error pattern of weight $<d/2$. This includes only one modification to the circuit itself: the ``synchronisation'' of all measurements and resets (\textit{apart from} those in $T$-injections) to occur at the same time as the decoding window boundaries.
%We note that such a statement does not, however, guarantee the existence of a threshold, which we leave as an open problem.

This section is structured as follows. First, we briefly review the general structure of sliding window decoders, before explaining how the basic windowed-\textsc{lom} decoder works. Then, we discuss the problems with the basic windowed-\textsc{lom} decoder and their respective solutions, first discussing how fast resets compromise the computational efficiency of the windowed-\textsc{lom} decoder, and then why the basic windowed-\textsc{lom} decoder cannot correct up to $d/2$ basic errors. We leave a thorough optimization of each variant of the windowed-\textsc{lom} decoder to future work.

\subsection{Sliding window decoders}\label{sec:sliding_window}

We review how the sliding-window matching decoder works for a surface code memory experiment~\cite{Dennis_2002}, as shown in \cref{fig:memory_window} (or see Fig.~2 in~\cite{skoric2023parallel}). Each window consists of a {\em commit} region followed by a {\em buffer} region, with both regions having width $>d/2$ in time. Each window has a backwards time-boundary at the start of the commit region, and a forwards time-boundary at the end of the buffer region. The backwards time boundary of each window is closed---i.e. it has no time-like edges connected to the boundary node---because we know the past, while the forwards time-boundary is open---it has time-like edges connected to the boundary node, similar to a fragile time-boundary, as we don't know the future. We call the boundary between the commit and the buffer regions the \emph{center} of the window, which coincides with the backwards time-boundary of the next window. We label the times associated with the backwards time-boundary, the center of the window, and the forwards time-boundary $t_{\text{prev}}$, $t_{\text{center}}$, and $t_{\text{current}}$ (respectively).
%The center of the window corresponds to the backwards time-boundary of the next window.

The decoder is then run sequentially on overlapping windows to build a correction in the full hypergraph. 
%Importantly, because the forwards time-boundary is open, we cannot accurately infer the correction that occurs close to this boundary. For this reason, once we have obtained a correction to the first window, one only commits to the parts of the correction that are in the commit region of the window, discarding any correction we have made in the buffer region. 
%After all, all the defects in the buffer region of the current window will be seen again in the commit region of the next window.
Each time we commit to a correction, we update the defects in the decoding graph that still need to be corrected by flipping the vertices (defect $\leftrightarrow$ no-defect) that are the endpoints of edges in the committed correction. This guarantees that after committing to the correction, there are no more defects in the commit region of the window. This procedure can introduce new defects in the rest of the graph that we call \textit{artificial} defects which will be seen and decoded by the next window. 
%Importantly, the backwards time-boundary of the following window is always guaranteed to be closed. 
%Proceeding in this way sequentially through the circuit allows us to build a correction to the full decoding graph $\mathcal{G}$.

\begin{figure}
    \includegraphics[width=0.9\linewidth]{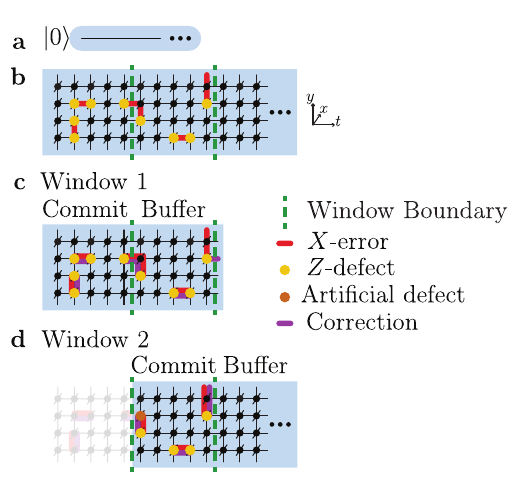}
    \caption{The operation of a sliding window matching decoder on a $Z$-basis memory experiment, shown in (a), in the $d=5$ surface code. (b) The $Z$-detectors in the first rounds of the full decoding graph with an arbitrarily chosen set of $X$-errors. The decoding problem is split into overlapping windows, the first two of which are shown in (c) and (d), including their respective buffer and commit regions and the correction that they apply. Note that only the correction in the commit region of window 1 is committed to, giving rise to an artificial defect in window 2.}\label{fig:memory_window}
\end{figure}

\subsection{The basic windowed-\textsc{lom} decoder}\label{sec:basic_window}

%Consider an arbitrary circuit that includes $T$-injections and potentially-fragile observables, along with any non-windowed decoder that can decode your circuit (for example, a minimum-weight decoder). The task of designing a sliding window decoder based on the non-windowed decoder is twofold.First, to avoid the scalability issues, it must periodically commit to previous corrections so that future iterations of the hypergraph decoder do not have to decode arbitrarily far back in time. In particular, at each time $t$ that we chose to run the decoder, the sliding-window decoder will commit to the correction that the non-windowed decoder makes up to a time $t-d$ in the circuit. This way, each iteration of the decoder only has to look at an $O(d)$ time-slice of the circuit. The sliding window decoder must not commit to corrections that are within $d$ of the present time to prevent low-weight errors causing a logical error. 

We now define the basic windowed-\textsc{lom} decoder for a circuit $\mathcal{C}$ with fast gates and measurements, but slow resets. In particular, we assume that each reset is followed by $\Omega(d)$ rounds of QEC to allow it to reach the ``quiescent'' state. During this $\Omega(d)$ delay, before the windowed-\textsc{lom} decoder is run, a round of normal sliding-window decoding is performed that commits to the correction in the vicinity of the fragile time-boundary. As a result, we assume that there are no open time-boundaries arising from resets in the windowed-\textsc{lom} decoder, allowing us to safely decode all observables regardless of whether they are fragile or not. Remember that we had to handle fragile observables separately in Section \ref{sec:fragile} for the \textsc{lom} decoder, thus making reliable observables potentially be of high-weight, supported on many logical qubits.

\begin{figure*}
    \includegraphics[width = 0.9\linewidth]{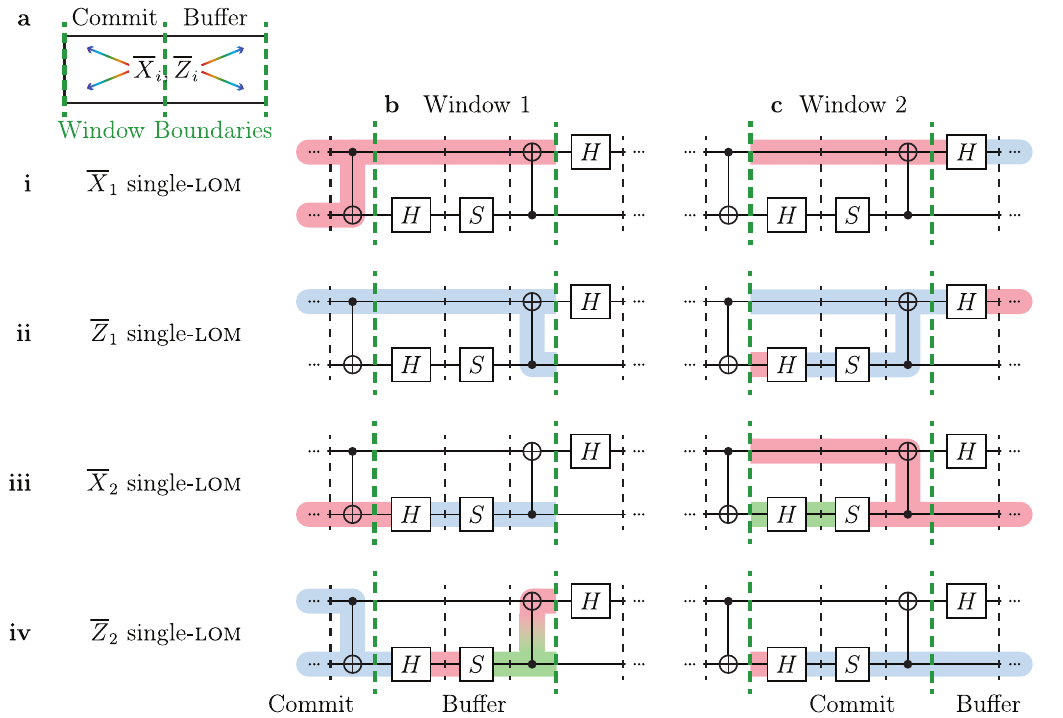}
    \caption{(a) A visual representation of the windowed-\textsc{lom} decoder, including both the buffer and commit regions. In each window, a single-\textsc{lom} decoding instance is run on the track obtained by forward- and backward-propagating a single-logical-qubit $\overline{X}$ or $\overline{Z}$ from the center of the window. (b,c) A detailed example of the single-\textsc{lom} tracks in two windows of the windowed-\textsc{lom} decoder implemented on the $d=3$ surface code, where red, green and blue regions represent $X$, $Y$ and $Z$ Pauli support respectively.}\label{fig:window_example}
\end{figure*}

The job of a windowed decoder is threefold:
\begin{enumerate}
    \item to infer the artificial defects in the following window. This only requires fixing the correction to the time-like edges \textit{at} the time $t_{\text{center}}$, but not before;
    \item to update the ``logical Pauli frame'' of the computation up to time $t_{\text{center}}$; that is, for each logical Pauli operator, we must commit to whether an error has occurred before time $t_{\text{center}}$ in its observing region; and
    \item to commit to all the logical measurement outcomes of the circuit in order to perform any gates that are conditioned on these outcomes. This must be done immediately to allow for fast $T$-injections.
\end{enumerate}

To execute these three jobs, we structure the windows as follows. Just like in the memory experiment, in the windowed-\textsc{lom} decoder we define a set of window boundaries separated by $>d/2$ rounds of QEC. As we execute the circuit, we run a window of the windowed-\textsc{lom} decoder each time we reach a window boundary, with the buffer and commit regions defined in the same way as in \cref{sec:sliding_window}.

In particular, the basic windowed-\textsc{lom} decoder must commit to the time-like edges and the logical Pauli frame. To do this, we run $2k$ independent instances of the single-\textsc{lom} decoder, where $k$ is the number of qubits that are active at the center of the window, see \cref{fig:window_example}. Each single-\textsc{lom} decoder is obtained by propagating a single-logical-qubit $\overline{X}$ or $\overline{Z}$ operator at $t_{\text{center}}$ \emph{forwards and backwards} through the window. Once a correction is obtained, we commit to the time-like edges \emph{at} the time $t_{\text{center}}$, but not before. This is well-defined because at each time $t_{\text{center}}$, each time-like edge is, by definition, contained in only one single-\textsc{lom}. Moreover, we can use the single-\textsc{lom} decoder to update the logical Pauli frame of the observable. That is, based on whether the correction intersects an even or odd number of times with the observing edge set in the commit region of the current window, we decide whether or not the logical Pauli operator has flipped between $t_{\text{prev}}$ and $t_{\text{center}}$. From this, we can determine whether the logical Pauli operator has been flipped through the \textit{whole} circuit up until $t=t_{\text{center}}$ by looking at whether the (backpropagation of) the Pauli operator was flipped in any of the previous windows. It is important that each single-\textsc{lom} has support only on a single-logical-qubit $\overline{X}$ or $\overline{Z}$ operator at the center of the window to ensure fault-tolerance, as will be discussed in \cref{sec:window-ft}.

However, in addition to running windows of decoders, we immediately run additional single-\textsc{lom} decoders for every logical measurement that occurs. This is important because the outcome may condition a Clifford gate. To do this, we simply run a single-\textsc{lom} decoder on the measurement back to the last point in the circuit where the correction has been committed to. To put this in context, the most-recent window boundary that we have passed represents the forwards time-boundary of the most-recently-decoded window, and the correction has been committed up to the second-most-recent window boundary. Note, of course, that because we have slow resets, even measurements that are fragile can be decoded in this way, because there are no more fragile time boundaries from the slow resets as described above. We then use this single-\textsc{lom} decoder to commit to the measurement outcome, without committing to any edges in the correction. This is advantageous in the presence of $T$ gates, where the subsequent conditional $S$ gate can thus be performed without delay. The windowed-\textsc{lom} decoder therefore does not require any additional ancilla qubits to perform fast $T$ gates, in contrast to the windowed minimum-weight decoder considered in Ref.~\cite{zhang2025}.

\subsubsection{\added{Efficiency of the basic windowed-\textsc{lom} decoder}}\label{sec:basic_windowed-LOM_efficiency}

The basic windowed-\textsc{lom} decoder is highly parallelizable but is limited by the maximum time it takes to decode a single logical observable. The maximum space-time decoding volume of each single-\textsc{lom} decoder is bounded by the number of logical qubits that a single-logical-qubit Pauli operator propagates to in the forwards- or backwards direction within the window. To quantify this, we define $f(t)$ as the maximum number of qubits that a single-qubit Pauli operator can spread to in a circuit of depth $t$. Each single-\textsc{lom} propagates from a single qubit through a circuit of depth $t=\Theta(d)$.  If there are no limitations on the circuit, then $f(t)=2^{t}$, and the volume of the decoding graph grows exponentially in $d$, as $O(d^{3}2^{\Theta(d)})$. Note that for scrambling circuits, which could be Clifford circuits, such spread is maximized~\cite{Nahum_2018}. 

However, in any physical implementation of the circuit, we must have $f(t)=O(t^{D})$ for some integer $D$. This is true even in platforms without fixed connectivity such as neutral atoms, because there is a limit to the distance that an atom can travel in preparation for a CNOT gate before it accumulates enough noise that a QEC round needs to be performed. In such physical cases, the volume of the decoding graph grows only polynomially as $O(d^{3+D})$.

We remark that just because the decoder is efficient does \textit{not} mean that it avoids the decoder back-log problem as described in~\cite{terhal:review}, which instead would require some kind of parallel windowed decoder~\cite{skoric2023parallel}.

\subsection{\added{The two-step windowed-\textsc{lom} decoder}}\label{sec:two-step_windowed-LOM}

In \cref{sec:basic_window}, we argued that the basic windowed-\textsc{lom} decoder is computationally efficient when the circuit contains only slow resets, whenever the circuit is implemented on a physical device. It is worth noting that slow resets may not be much of a hindrance for many applications. For example, if the desired algorithm is written in terms of Clifford and $T$ gates with resets and measurements only occurring at the start and end of the algorithm respectively, then slow resets and synchronized measurements only \textit{add} an $O(d)$ time cost to the start and the end of the algorithm. This cost is likely to be small compared to the time it takes to execute the gates in the circuit.

Nevertheless, in this section we explain how to modify the windowed-\textsc{lom} decoder to handle fast resets using a second step in each window, and why this comes at the cost of the efficiency of the decoder. The resulting two-step windowed-\textsc{lom} decoder can decode circuits with fast resets, measurements, and gates, but is computationally inefficient \added{in general}. We leave solving this efficiency issue as an open problem.

\added{
The key issue that the two-step windowed-\textsc{lom} decoder needs to solve is how to handle fragile observables. Recall that the basic windowed-\textsc{lom} decoder avoids this problem because the low resets mean that even fragile observables have no open time-boundaries on resets, and therefore can be safely decoded. Now, with fast resets, we must adapt the techniques developed in \cref{sec:fragile} to work for the windowed-\textsc{lom} decoder. In particular, we need to be careful about decoding any observables that are \textit{commit-region fragile}: that is, the observable anticommutes with a reset occurring in the commit region between $t_{\mathrm{prev}}$ and $t_{\mathrm{center}}$. Indeed, if the observable anticommutes with a reset that occurs before the backwards time-boundary $t_{\mathrm{prev}}$, then the corresponding single-\textsc{lom} will not contain the open time-boundary caused by that reset and it can therefore be decoded normally. We will refer to observables that are not commit-region fragile as commit-region reliable.

Dealing with commit-region fragile observables is complicated because of the multi-faced nature of the windowed-\textsc{lom} decoder: each window needs to commit both to the time-like edges at $t_{\mathrm{center}}$ and to the logical Pauli frame flips in the commit region (we will return to the third job of the windowed-\textsc{lom} decoder---committing to logical measurement outcomes---later).
}
Importantly, the single-\textsc{lom} decoding instances that are used to commit to the time-like edges must \textit{only} have support on a single-logical-qubit $\overline{X}$ or $\overline{Z}$ operator at the center of the window. This is necessary to provide a unique assignment of the time-like errors, and to avoid time-like loops as we will discuss in \cref{sec:window-ft}.
\added{
On the other hand, the single-\textsc{lom} decoding instances that are used to commit to the Pauli frame flips cannot be commit-region fragile. These two conditions cannot be satisfied simultaneously in general, necessitating the use of two steps in the decoder as we now explain.
}

%We call these single-\textsc{lom} instances the \textit{first step} of the window. However, we must now also take into account the possibility that the logical observable of one of the single-\textsc{lom} decoders might be fragile \textit{in the commit region of the current window}\added{, meaning that the observable anticommutes with a reset occurring in the commit window after $t_{\rm prev}$}. Note that if an observable anticommutes with a reset, but the reset occurs before the backwards time-boundary $t_{\text{prev}}$, then the logical Pauli frame of the fragile observable has already been committed to up to $t_{\text{prev}}$. Such observables can therefore be decoded normally.

\added{
In the first step of the two-step windowed-\textsc{lom} decoder, we decode observables that commit to the time-like edges at $t_{\text{center}}$ that have support only on a single-logical-qubit $\overline{X}$ or $\overline{Z}$ at $t_{\text{center}}$. This step is the same as in the basic windowed-\textsc{lom} decoder, except for the fact that we do not commit to the logical Pauli frame flips in the commit region. Then, in the second step, we decode observables that commit to the logical Pauli frame flips, in particular by picking an (independent) generating set of commit-region reliable observables.
}
For each generating commit-region reliable observable, we use a single-\textsc{lom} defined by the back-propagation of that observable that spans \textit{only} the commit region of the window; that is, from $t=t_{\text{prev}}$ to $t_{\text{center}}-1$. This single-\textsc{lom} has closed boundaries on both the forwards and backwards time-boundaries, where the artificial defects on the forwards time-boundary are inferred from the committed time-edges in the first step of the windowed-\textsc{lom} decoder. The outcome of this single-\textsc{lom} decoder can then be used to update the logical Pauli frame in a way that is consistent with the time-like edges committed to in the first step. An independent generating set of commit-region fragile observables can then be assigned random values based on a coin toss to complete the logical Pauli frame assignment.

\added{
This procedure can be simplified somewhat if some of the observables from the first step are already commit-region reliable, or can be made commit-region reliable by combining with measurements that occur in the commit region. In this case, the logical Pauli frame can already be committed to in the first step, and these observables do not need to be decoded again in the second step.
}

Finally, we return to the issue of the decoding of logical \textit{measurements} in the presence of fast resets. Recall that we decode these logical measurements \added{immediately after they occur} so we can apply any gates that are conditioned on the measurement outcome. \added{As a result, the single-\textsc{lom} spans from $t_{\text{prev}}$ (the time up to which the most-recent window has committed to) to the time of the measurement; in this context we call this the commit region of the single-\textsc{lom}, although recall that it only commits to the logical Pauli frame flips of the measured observable and not to any time-like edges.} Similarly to in the \textsc{lom} decoder, we need to take into account the possibility that the observable is commit-region fragile but ``correlated'' with another observable that is commit-region fragile.

To handle this, we follow a similar procedure to the one in \cref{sec:fragile}. In particular, each time a measurement $M$ occurs in the circuit, we first check whether the observable $O=\{M\}$ is commit-region fragile or not. If $O$ is commit-region reliable, then we simply decode it using the single-\textsc{lom} decoder and use the previously committed logical Pauli frame to obtain the outcome $O$. If $O$ instead is commit-region fragile, we check whether there exists a product with another, earlier or simultaneous observable $O'$ that is also commit-region fragile, such that $OO'$ is commit-region reliable. If no such $O'$ exists, we uniformly sample an outcome for $O$ and continue. If such an $O'$ does exist, then we decode $OO'$ and use the outcome of $O'$ and the previously committed logical Pauli frame to infer the outcome of $O$.

\subsubsection{\added{Efficiency of the two-step windowed-\textsc{lom}} decoder}\label{sec:window_efficiency}

Now that we have described the action of the two-step windowed-\textsc{lom} decoder, we explain why the decoder is not efficient in general. The issue is with the second-step single-\textsc{lom}s that are obtained by the back-propagation of a multi-logical-qubit Pauli operator. As foreshadowed in \cref{sec:fragile}, there is no general upper bound on the weight of this operator, even for low-depth local circuits. Take, for example, the following repetition-code style circuit
\begin{equation}
    %\begin{quantikz}
    %    &&\ctrl{1}&&\\
    %    &\lstick{$\ket{0}$}\wireoverride{n}&\targ{}&\targ{}&\meter{}\\
    %    &&\ctrl{1}&\ctrl{-1}&\\
    %    &\lstick{$\ket{0}$}\wireoverride{n}&\targ{}&\targ{}&\meter{}\\
    %    &&\ctrl{1}&\ctrl{-1}&\\
    %    \setwiretype{n}&&\push{\vdots}&\push{\vdots}&
    %\end{quantikz}
    \vcenter{\hbox{\includegraphics{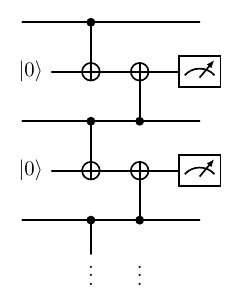}}}
    ,
\end{equation}
which is of constant depth and local in 1D. If we have a window boundary immediately after these logical measurements, the lowest-weight reliable observable that contains $\overline{X}_{1}$ is the product of \textit{all} remaining $\overline{X}$ operators. The volume of the corresponding second-step single-\textsc{lom} therefore scales with the total number of logical qubits in the circuit. \added{On the other hand, it may possible that in certain circuits, such as those obtained by concatenating a surface code with another QEC code, that the weight of these reliable observables can be bounded.} We leave it as an open problem how to solve this efficiency issue with the two-step windowed-\textsc{lom} decoder.

\subsection{Issues with fault-tolerance}
\label{sec:window-ft}

The basic and two-step windowed-\textsc{lom} decoders both suffer from a number of issues relating to its fault-tolerance that we now address in turn. In particular, each issue has the potential to cause the windowed-\textsc{lom} decoder to be unable to correctly decode an error of weight $<d/2$. Intuitively, these arise because we are using the outcomes of a single-\textsc{lom} decoder to infer artificial defects that are then seen by other single-\textsc{lom} decoders, and we bump into issues that are thus similar in nature to what we may encounter for a hierarchical decoder. However, for both the basic and two-step windowed-\textsc{lom} decoders, we are able to solve each of these problems with some appropriate modifications, and we \textit{conjecture} that with these modifications the windowed-\textsc{lom} decoder is able to correct $<d/2$ basic errors. We note that this discussion is \textit{not} the same as asking whether the windowed-\textsc{lom} decoder does, or does not, have a threshold either with or without the modifications, which we leave as open problems.

\added{\subsubsection{Time-like loops}}

%The first ``issue'' that we describe is, actually, not a problem for the windowed-\textsc{lom}. 
First, \added{we explain that the windowed-\textsc{lom} decoder does \textit{not} fail due to the presence of time-like loops, unlike the hierarchical decoder \cref{sec:hier_loops}.}
%The problem could then arise because the error+correction loop in the first track could travel around the time-like loop, leaving a pair of artificial defects in two disjoint parts of the track that could no longer be matched to each other. In the right configuration, this means that constant-weight errors can lead to logical errors in the hierarchical decoder.
It is true that time-like loops may be present in each single-\textsc{lom} track, \textit{and} it is also true that edges from the single-\textsc{lom} are used to infer artificial defects in the next window. However, by definition, in any single-\textsc{lom} that is used to commit to time-like edges, any such time-like loop cannot traverse the center of the window from which the artificial defects are inferred, since at this point each single-\textsc{lom} only has support on a single logical qubit $\overline{X}_i$ or $\overline{Z}_i$. Time-like loops therefore do not lead to constant-weight logical decoding errors in the windowed-\textsc{lom} decoder. This is important because the remaining issues that we discuss below are not as serious as that of time-like loops: the smallest number of basic errors required to cause a logical error in the windowed-\textsc{lom} decoder increases with $d$, even if it is less than $d/2$.

\begin{figure}[h!]
    \includegraphics[width=0.9\linewidth]{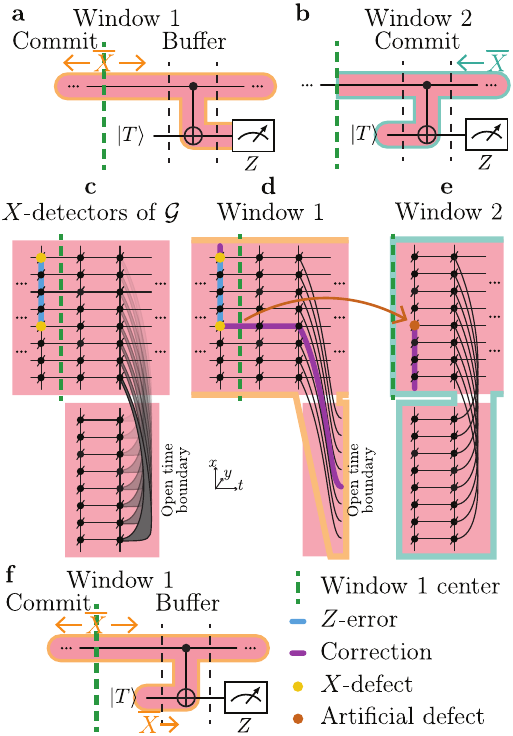}
    \caption{(a) An example of a fragile time-boundary in the forwards direction, in which a logical $X$ operator at the window boundary propagates forwards to the logical $Z$ measurement that it anti-commutes with. Since the window boundary of the windowed-\textsc{lom} decoder is too close (less than $d/2$ time-edges away) to the logical $Z$ measurement, a low-weight error can subsequently cause a logical error in a single-\textsc{lom} decoder in window~2, shown in (b). (c) An example of a weight-4 error in the $d=9$ surface code that will cause a logical error. In general the weight of the error is $d/4+O(1)$. (d) The correction that the single-\textsc{lom} decoder shown in (a) makes involves matching to the fragile time-boundary, causing an artificial defect in window~2. (e) The correction of this artificial defect in a single-\textsc{lom} in window~2 that leads to a logical error. Note that in window~2 we backpropagate the logical $X$ and therefore have no support on this fragile-time boundary. (f) A modification to the single-\textsc{lom} in (a) avoids this issue by ``peeling'' the single-\textsc{lom} off the fragile time-boundary and onto the closed time-boundary arising from the fault-tolerant state preparation of $\ket{\overline{T}}$.}\label{fig:window_fragile}
\end{figure}

\added{\subsubsection{Fragile time-boundaries}}

In contrast, the presence of fragile time-boundaries can cause an issue for the windowed-\textsc{lom} decoder because they can cause the time-like edges to be incorrectly inferred, as shown in \cref{fig:window_fragile}. Because we propagate operators forwards \textit{and} backwards in the windowed-\textsc{lom} decoder, this problem can even arise in the basic windowed-\textsc{lom} decoder \added{even though resets are slow} due to fragile time-boundaries on \textit{measurements}. In the worst case, this can allow errors of weight $d/4+O(1)$ to cause a logical error in the windowed-\textsc{lom} decoder. Note that this is an entirely separate issue to decoding fragile \textit{observables} (which are not present in the basic windowed-\textsc{lom} decoder due to the slow resets) because this problem relates to inferring the committed time-like edges instead of the committed logical Pauli frame.
%This is a very similar issue to what arises in hierarchical decoders as discussed in \cref{sec:hier_fragile}.

Solving this for the windowed-\textsc{lom} decoder is simple: we simply need to ensure that every committed time-edge is $>d/2$ time-like edges away from any fragile boundary in the single-\textsc{lom}. One way to guarantee this is by synchronizing all resets to occur only at the first time-step after each window boundary, and all measurements at the last time step before each window boundary. That way, all resets and measurements will be $>d/2$ away from the center of every single-\textsc{lom} that contains it. Of course, this is not necessary for the slow resets in the basic windowed-\textsc{lom} decoder.

However, it is possible in some cases to be more flexible than this, in particular in the case of $T$-injection circuits. When a single-\textsc{lom} involves propagating an $\overline{X}$ operator forwards through the $T$-injection circuit, it can have support on the fragile time-boundary of the $Z$-measurement, as shown in \cref{fig:window_fragile}(a). \added{This can lead to errors of weight $d/4+O(1)$ causing a logical error, as shown in \cref{fig:window_fragile}(b--e). The solution to this is to include an $\overline{X}$ operator arising from the $\ket{T}$ state preparation in the observable that we are forward-propagating through the circuit, as shown in \cref{fig:window_fragile}(f).}
%change the operator we forward-propagate through the $T$-injection to include an $\overline{X}$ operator immediately after the $\ket{T}$ preparation.
This way, the single-\textsc{lom} is ``peeled'' off the fragile time-boundary, instead depending on the non-fragile time-boundary resulting from the fault-tolerant preparation of the $\ket{\overline{T}}$ state. This peeling-off idea however does not work for arbitrary mid-circuit measurements, necessitating the synchronization of the remaining measurements and resets.

\added{\subsubsection{Time-like snakes}}

The final issue that we discuss is the circuit-dependent occurrence of time-like \textit{snakes}. Intuitively, the issue arises because a low-weight spatial error in one window can be decoded multiple times by different single-\textsc{lom}s, each of which leaves behind a pair of defects for the following window to decode---we call these \textit{left-over} defects as they may contain both artificial defects, which are due to the committed correction, as well as real defects that have not yet been corrected by a committed correction. 

If all these left-over defects appear in the decoding graph of the same single-\textsc{lom} in the second window, then the minimum-weight correction is no longer guaranteed to be low-weight. It is therefore possible that an error of weight $<d/2$ causes a logical decoding error. Actually realizing such a bad example also relies upon the pairs of left-over defects being ``far away'' in the second window; that is, the shortest number of time-like edges required to travel from one pair of left-over defects to another is long. For this to happen, the circuit needs to be such that there is a ``snake'' in the structure of the single-\textsc{lom}. We discuss these issues in more detail in \cref{sec:snakes}, with two explicit examples. In \cref{sec:simple_snake} we show a relatively small example in which the weight of the physical error scales as $2d/5+6$, although the smallest-distance example where the physical error is of weight strictly less than $d/2$ is only $d=65$. Meanwhile, in \cref{sec:complicated_snake}, we show an asymptotically worse example where the physical error weight scales sublinearly with $d$. Both of these examples motivate a solution to the problem of snakes.

Our solution to address the issue of snakes is to modify the decoding subgraph in each single-\textsc{lom} by adding ``short-cut'' edges between vertices that represent different logical qubits (defined precisely in \cref{sec:short-cut_edges}). Intuitively, the short-cut edges remove the problem of snakes because there is always a short time-like edge that links pairs of left-over defects. Moreover, each short-cut edge always connects vertices with the same spatial coordinates, and therefore does not inadvertently create other low-weight logical errors. We conjecture that the short-cut edges, together with the above ``synchronization'' of resets and measurements, are enough to guarantee that the windowed-\textsc{lom} decoder can correct all basic error patterns of weight $<d/2$. \added{It is worth noting, however, that these short-cut edges may reduce performance in practice, particularly if the issue of time-like snakes does not appear much in real circuits.} We leave theoretical and numerical investigations of thresholds to future work.

\section{Discussion}
\label{sec:discuss}

In this work we have made substantial progress towards an efficient, fault-tolerant matching-based decoder for algorithms implemented in the rotated surface code with fast transversal Clifford gates, fast $T$-injections, fast resets and fast measurements. The \textsc{lom} decoder constitutes a matching-based decoder that can correct all basic error patterns of weight $<d/2$, and numerically performs well under repeated- and random-Clifford gate circuits. The basic windowed-\textsc{lom} decoder adds to this by being efficient, but comes at the cost of slowing down the resets and synchronizing the non-$T$-injection measurements in the circuit. Although the basic and/or two-step windowed-\textsc{lom} decoders may perform well in practice, we still leave open the fundamental question: is it possible to design an efficient matching-based decoder that does not require these slow resets? 

Apart from this general open question, there are also a number of other possible directions to further study (windowed) \textsc{lom} decoders. 

%We expect that the logical performance of the \textsc{lom} decoder can be improved by first running a belief-propagation algorithm on the full decoding hypergraph before executing the \textsc{lom} decoder. Belief propagation updates the weights of all hyperedges taking into account the correlations as it uses the full decoding hypergraph, \added{such as correlations from $Y$ errors}. Then, the weights of the edges are updated while removing the hyperedges using the standard method. This method is known to improve the logical performance of memory experiments~\cite{higg:BM-paper, higgott:BM} and we expect the same effect in the presence of transversal logical gates. 
We have shown that the logical performance of the \textsc{lom} decoder is improved by first running a belief-propagation algorithm on the full decoding hypergraph before executing the \textsc{lom} decoder. Belief propagation updates the weights of all hyperedges taking into account the correlations as it uses the \textit{full} decoding hypergraph, \added{such as correlations from $Y$ errors}. \added{However, there could be faster and more efficient methods than belief propagation for providing correlation information to the \textsc{lom} decoder.}

It is natural to consider whether the \textsc{lom} decoder applies when more than one $\overline{S}$ and/or $\overline{\mathrm{CNOT}}$ gate is performed between QEC rounds. In this more general setting, it is possible for weight-4 hyperedges to appear in the decoding hypergraph $\mathcal{G}$ that can remain weight-4 in the decoding sub(hyper)graph $G_O$ for an observable $O$, even for the basic error model, hence not reducing the problem to matching, as we discuss in Section~III from SM~\cite{supp}. 
%It may be possible to adapt the \textsc{lom} method to make the detector frame depend on the logical observable being decoded, so as to avoid this issue.

Although we have focused on the unrotated surface code, the \textsc{lom} decoder can be also applied to rotated surface and color codes (with an appropriate matching-based color code decoder, see e.g.~Ref.~\cite{lee2025color}). An interesting question would be to determine the minimum number of circuit-level errors that cause a logical error for the \textsc{lom} decoder applied to these codes, to see if the exact circuit distance is achieved or not. Recent work~\cite{chen2024transversal} has shown a more efficient implementation of the $\overline{S}$ gate for the rotated surface code: the logical gate is implemented inside the syndrome extraction circuit, which changes the structure of the decoding hypergraph. 
Further work is thus required to numerically decode this implementation with the \textsc{lom} decoder.

Multiple future directions are necessary also to evaluate the practicality of the windowed-\textsc{lom} decoder. It is likely that one can extend the windowed-\textsc{lom} decoders to parallel windowed-\textsc{lom} decoders to improve the speed of real-time decoding and to provide a solution to the back-log decoding problem~\cite{skoric2023parallel,window_tan2023, riverlane:aps}. Numerical simulations are also needed to study the existence of a threshold and to compare the performance of the windowed decoder variants. Moreover, we have considered $T$-state preparation as a black box in this work; investigating the interplay of the windowed-\textsc{lom} decoder with an actual magic state distillation protocol is important future work.

On a fundamental level, the fault-tolerance issue of time-like snakes that we uncovered for the windowed-\textsc{lom} decoder, and for which we proposed the short-cut edge solution, has a somewhat fundamental character which could also come up for other fast-logic decoding strategies and different codes. The issue arises because multiple independent observers (i.e.~decoders) find and commit to causes (i.e.~corrections) for events (i.e.~defects) in overlapping bounded regions of space-time. The observers pass on information about these commitments to a following observer (i.e.~leaving left-over defects), with the following observer seeing a subsequent space-time region. Since the commitments of the previous observers were independent, and due to the causal structure in the subsequent space-time region, the next observer, given the previous commitments it has to accept, may deduce a cause that is less likely than the actual cause, leading to logical decoherence. More information thus needs to be passed from the independent multiple observers to the next observers about ``what could have happened before'' and our proposed short-cut edge solution is a blanket expression of this.
Phrased in this way, it is clear that the issue is not about the nature of the defects or edges versus hyperedges\added{, but of a more fundamental character, related to independent space-time bounded observers}.
However, notably, decoding strategies in which independent observers commit to causes in non-overlapping regions of space-time do not suffer from this issue: for example, a windowed parallel hyperedge decoder where space-time is split up into non-overlapping commit regions, that is followed by another parallel decoding step which completes the set of commitments to a full commitment~\cite{skoric2023parallel}. Since our aim is to construct a general windowed efficient {\em matching} decoder for fast logic, we are led (or perhaps even forced) into using independent observers which commit to causes in overlapping space-time regions. 
%Decoding being like constructing a gauge field theory for observed phenomena, the appropriate theory should be dependent on the independence and space-time boundedness of observers which construct 'the theory', i.e. there should not be "a single theory"

% - Checking everything still works for other magic state injection circuits

% - Check the performance of the "mid-cycle-S-gate" implementation as an alternative for the S gate and check if the circuit distance is preserved in this case

\section*{Data and software availability}
The data and code used to obtain the numerical results in this work are publicly available in~\cite{data-project}. As mentioned in the main text, the code uses two Python libraries developed during this project: \texttt{surface-sim}~\cite{surface_sim}, for building the encoded circuits, and \texttt{lomatching}~\cite{lomatching}, containing the \textsc{lom} decoders. 

\section*{Acknowledgments}
This work is supported by QuTech NWO funding 2020-2028 – Part I “Fundamental Research”, project number 601.QT.001-1, financed by the Dutch Research Council (NWO), and the OpenSuperQPlus100 project (no. 101113946) of the EU Flagship on Quantum Technology (HORIZON-CL4-2022-QUANTUM-01-SGA). We thank Boris Varbanov and Zherui Wang for initial discussions on decoding fold-transversal gates for the surface code. We thank Timo Hillmann, Alex Kubica, Mark Turner, and the authors of~\cite{cain+:upcoming} for interesting discussions. We acknowledge the use of the DelftBlue supercomputer~\cite{delftblue} for decoding the experiments. 

M.S.P. and M.H.S. contributed equally to this work, with M.S.P. focusing on numerics and M.H.S. focusing on theory, and B.M.T. provided motivation, writing, and supervision.

%\newpage
\appendix

\section{Hierarchical matching decoding with fragile time boundaries}\label{sec:frag-hier}

\begin{figure}[htb]
     \includegraphics[width=0.9\linewidth]{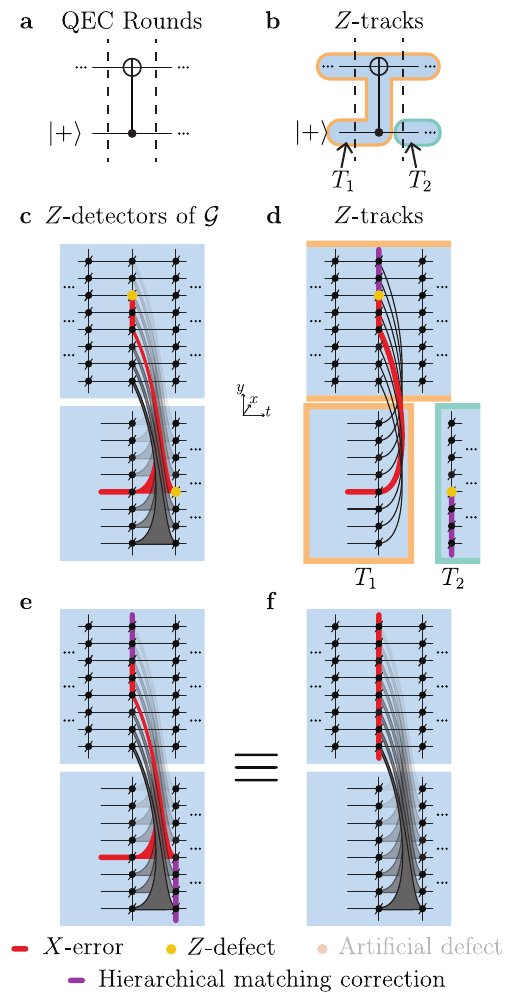}
     \caption{An example of a failure of hierarchical matching due to the presence of a fragile time-boundary in the first track. (a-e) The circuit, tracks, error, action of the hierarchical matching decoder, and the final correction for the error respectively, similar as in \cref{fig:hierarchical_example}. The error$+$correction in (e) can be multiplied by space-time stabilizers to the error string in (f), which corresponds to a logical $X$-error on the top qubit in (a). The physical error here has weight $4<9/2=d/2$. Note also that although we have suggested that the circuit extends in both time directions in the figure, this example still works if, say, the first qubit is initialized in $\ket{0}$.}\label{fig:hierarchical_fragile}
\end{figure}

The hierarchical matching decoder defined in Section \ref{sec:hier} can run into issues when decoding around fragile time boundaries that are {\em completely separate} to the issues discussed with the $\textsc{lom}$-decoder and fragile observables in \cref{sec:fragile}. The issue in the hierarchical matching decoder occurs when the fragile time boundary is \textit{not} decoded in the last track $T_L$. When this is the case, one can construct errors of weight $d/4+O(1)$ that cause a logical error on an adjacent logical qubit. This is shown in \cref{fig:hierarchical_fragile} for a weight-4 error in the $d=9$ surface code. Intuitively, this happens because a single defect is created close to the fragile time boundary in the full hypergraph $\mathcal{G}$ but is decoded only in a later track that cannot be matched to the fragile time boundary, causing a high-weight spatial correction instead. This problem has a simple solution: simply ensure that all fragile time-boundaries are contained in the last track $T_{L}$. Indeed, when the fragile time boundary in \cref{fig:hierarchical_fragile} is included in track $T_{2}$ instead of $T_{1}$, the hierarchical matching decoder indeed corrects the error \cref{fig:hierarchical_fragile}(c) without a logical error.

\begin{figure*}
    \includegraphics[width=0.9\textwidth]{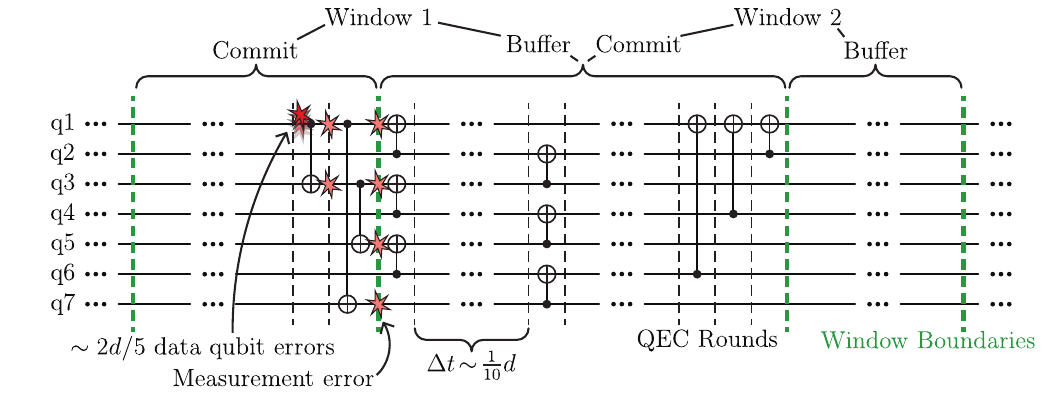}
    \caption{A circuit $\overline{\mathcal{C}}$ with 7 logical qubits, each encoded in the unrotated surface code of distance $d$. An error of weight $\sim 2d/5$ can cause a logical error when decoded by the windowed-\textsc{lom} decoder, in the absence of short-cut edges. The error occurs at seven locations in the circuit: the first set of errors consists of $\sim 2d/5$ $X$-errors on data qubits immediately before the first CNOT gate, while the remaining errors are single measurement errors on $Z$-stabilizer measurements during a QEC round. No other physical errors occur. The precise configuration of the $X$ errors is shown in~\cref{fig:window_snake_2}. The error will be decoded in two windows (1 and 2) of width $>d$, whose commit and buffer regions are labeled above the circuit. We will show in~\cref{fig:window_snake_2,fig:window_snake_3,fig:window_snake_4} that a logical error will be inferred for the $\overline{Z}_{1}$ operator at the end of the circuit.}\label{fig:window_snake_1}
\end{figure*}

\section{Time-like snakes in the windowed-\textsc{lom} decoder}
\label{sec:snakes}

In this appendix we go into detail about the issue of time-like snakes, short-cut edges, and the conjecture that the windowed-\textsc{lom} decoder can correct up to $d/2$ basic errors. The discussion applies both to the one-step and two-step windowed-\textsc{lom} decoders. We begin in \cref{sec:simple_snake} by showing a relatively simple example of how a time-like ``snake'' can allow a low-weight error (still of weight $\Theta(d)$) to lead to a logical decoding error. Then, in \cref{sec:complicated_snake} we show a more elaborate example where the weight of the error grows sublinearly in $d$, further motivating a need to solve this issue. We do so in \cref{sec:short-cut_edges}, where we describe more precisely the ``short-cut edges'' that we use to break-up the snake. The short-cut edges only connect vertices with the same spatial coordinates. Intuitively, these work by ensuring that any pair of defects within the same single-\textsc{lom} can be matched by a string of edges of weight depending only on the space-time coordinates of the defects, and not on the rest of the logical structure of the circuit, as shown in Lemma \ref{lem:short-cut_edge_metric}. The short-cut edges thus change the space-time metric, making it more logically-trivial like in a memory experiment.

%Finally in \cref{sec:proof} we provide a formal proof that these short-cut edges allow one to decode all basic error patterns of weight $<d/2$.

Of course, this discussion does not immediately imply the existence of a threshold. Indeed, the windowed-\textsc{lom} decoder without short-cut edges may still have a threshold even if it is not able to correct some errors scaling sublinearly in $d$; and the windowed-\textsc{lom} decoder \textit{with} short-cut edges does not necessarily have a threshold even though it is able to correct all basic errors with weight $<d/2$. It is therefore an open question whether a threshold exists for the windowed-\textsc{lom} decoder in both variants, and which variant performs better in practical scenarios.

\subsection{Simple example of a time-like snake}\label{sec:simple_snake}

\begin{figure}
    \includegraphics[width=0.9\linewidth]{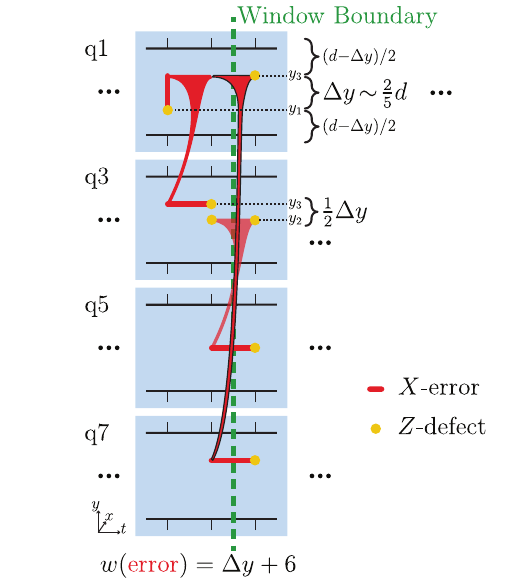}
    \caption{A simplified representation of the decoding hypergraph $\mathcal{G}$ showing the $X$-error pattern of weight $2d/5+6$ in the circuit in~\cref{fig:window_snake_1} that leads to a logical error when decoded by the windowed-\textsc{lom} decoder. Only the slice of the circuit in~\cref{fig:window_snake_1} where the physical errors occur, is shown in the figure. For ease of visualization the space and time coordinates are not to scale; each interval in time represents just one time-like edge.}\label{fig:window_snake_2}
\end{figure}

\begin{figure*}[htb!]
    \includegraphics[width=0.9\textwidth]{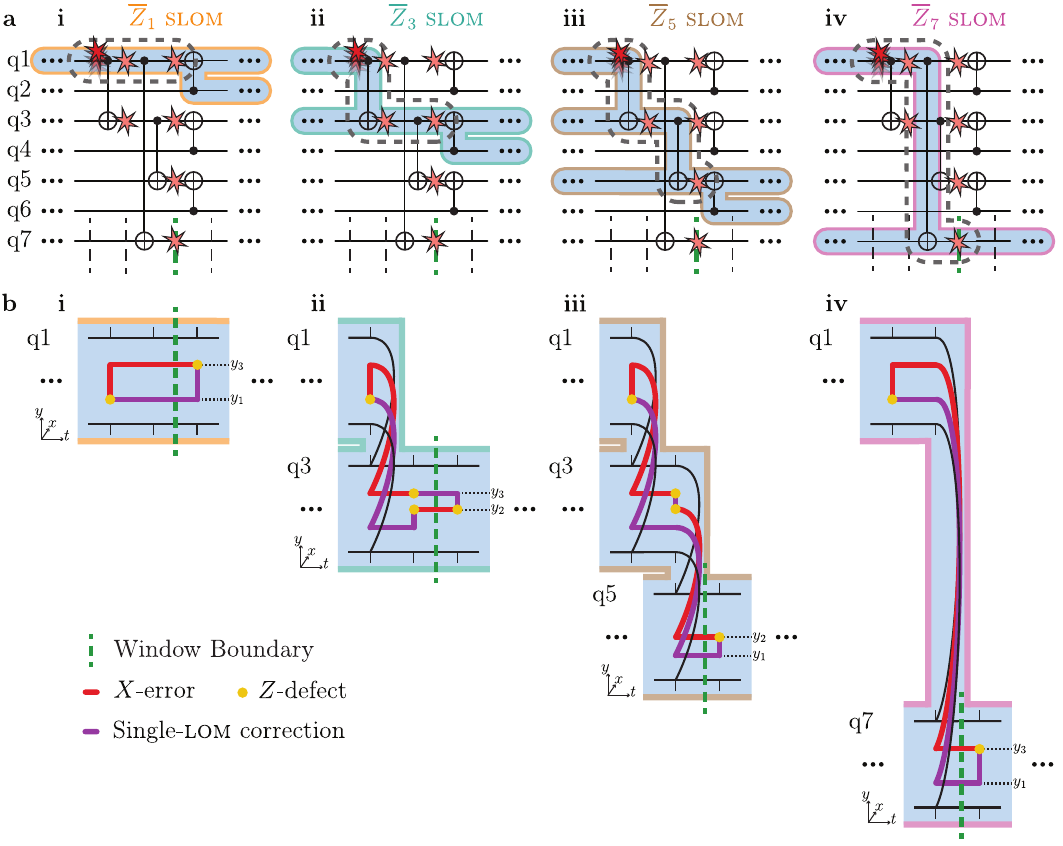}
    \caption{The first window of decoding the error from \cref{fig:window_snake_1,fig:window_snake_2}. We are only interested in four of the iterations of the single-\textsc{lom} (abbreviated \textsc{slom} in the figure), corresponding to the $\overline{Z}$ operators on logical qubits (i) 1, (ii) 3, (iii) 5 and (iv) 7 in the center of the window. (a) The propagation of each single-\textsc{lom} through the circuit. (b) The decoding subgraph of each single-\textsc{lom}, focusing only on the regions containing physical errors, as shown by the gray dotted regions in (a). In order to create a logical error in the second window, the correction in each single-\textsc{lom} decoder must follow the pattern shown in purple. Each correction is one of multiple minimum-weight corrections in each single-\textsc{lom}.}\label{fig:window_snake_3}
\end{figure*}

\begin{figure*}
    \includegraphics[width=0.9\textwidth]{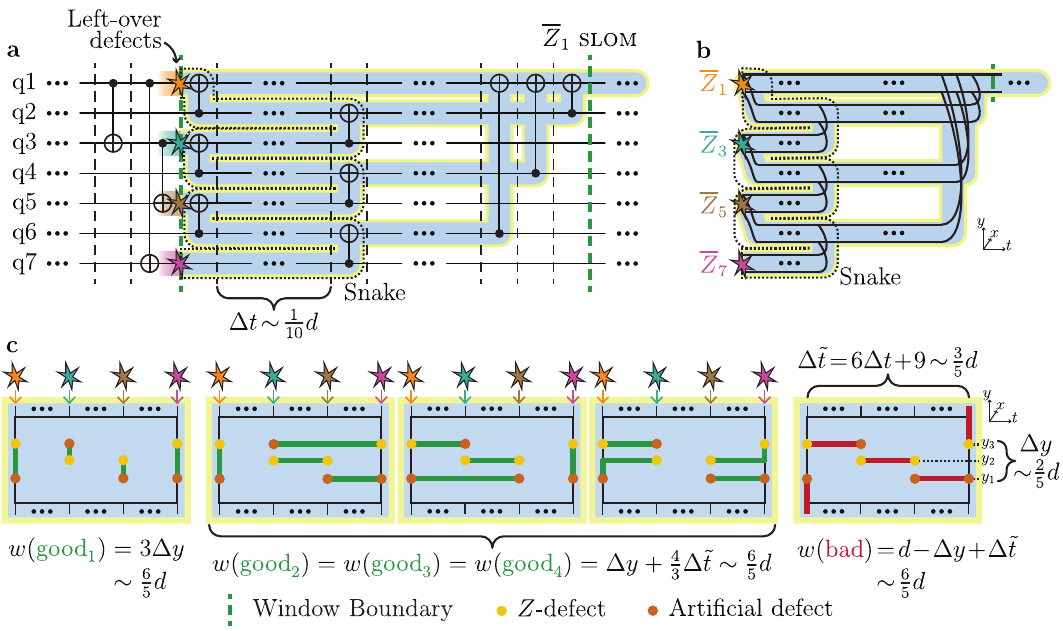}
    \caption{The second window of decoding the errors from \cref{fig:window_snake_1,fig:window_snake_2} along with the left-over defects from the first window of decoding in \cref{fig:window_snake_3}. We are only interested in the single-\textsc{lom} corresponding to $\overline{Z}_{1}$ in the center of the window. (a) The propagation of the $\overline{Z}_{1}$ single-\textsc{lom} through the circuit, with the locations of the left-over defects from the four single-\textsc{lom}s from the first window shown. (b) The decoding subgraph of the single-\textsc{lom}. Focusing only on the ``snake'' part of the subgraph and flattening it out gives the representation in (c), which shows five possible corrections to the left-over defects in the single-\textsc{lom}. The temporal length $\Delta \tilde{t}$ is the length of the snake; note that it includes a constant offset (nine in this example) due to the turns of the snake. The first four corrections shown are good because they correspond to corrections in the same logical class as the physical error that occurred, while the last one is bad because it causes a logical error equivalent to an $\overline{X}_{1}$ error in the middle of the second window. All five corrections have weight $\sim 6d/5$, despite being caused by a physical error of weight only $\sim 2d/5$.}\label{fig:window_snake_4}
\end{figure*}

\begin{figure*}
    \includegraphics[width=0.9\textwidth]{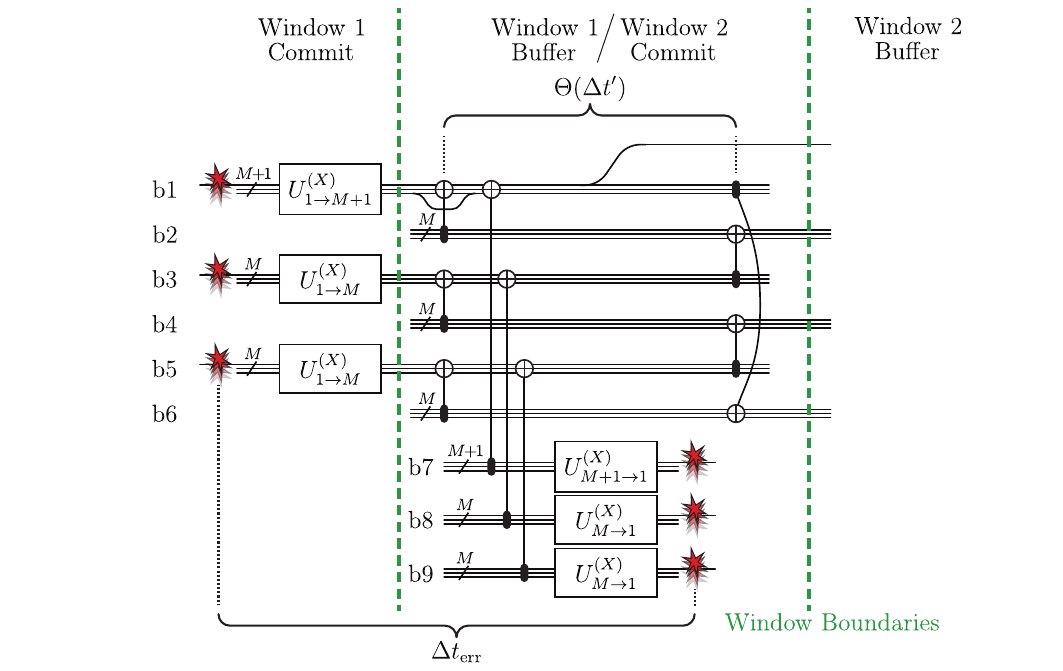}
    \caption{A circuit $\overline{\mathcal{C}}$ with $9M+2$ logical qubits in which an error which scales less than linearly in $d$ can cause a logical error when decoded by the windowed-\textsc{lom} decoder. We represent the logical qubits in nine blocks of $M$ or $M+1$ qubits labeled b1--b9. CNOT gates between the blocks represent independent logical CNOTs between the $m$th qubit in the first and the $m$th qubit in the second block. The $U$-gates are defined in \cref{eq:UX_defn}. Note that we also split off individual logical qubits from block 1 in some of the CNOTs, either from the top of the block or from the bottom. $X$-data-qubit-errors occur in 6 circuit locations, the coordinates of these errors are shown in detail in \cref{fig:window_snake_6}.}\label{fig:window_snake_5}
\end{figure*}

We begin by showing our relatively simple example of how a time-like ``snake'' can allow a low-weight error to lead to a logical decoding error. In particular, in \cref{fig:window_snake_1,fig:window_snake_2,fig:window_snake_3,fig:window_snake_4} we show how an error of weight $2d/5+6$ can lead to a logical error in the windowed-\textsc{lom} decoder. In this example there are multiple minimum-weight corrections that each single-\textsc{lom} decoder can make, one such choice in each single-\textsc{lom} leads to the logical decoding error \footnote{Stronger examples where the \textit{unique} minimum-weight correction in each single-\textsc{lom} leads to a logical error, can also be constructed, but just require another $O(1)$ physical errors.}. For simplicity, we moreover only consider $X$ errors that are spread using CNOT gates, which requires 7 logical qubits in this example; although nothing precludes the possibility that similar structures could be constructed with fewer logical qubits by making use of both $X$- and $Z$-errors spread using $S$ gates.

In this example we consider two windows of decoding, where the artificial defects left behind from the first window of decoding cause a logical Pauli flip in one of the single-\textsc{lom}s in the second window. There are three features of the circuit and error pattern that make this possible:
\begin{enumerate}
    \item the ``proliferation'' of defects: multiple single-\textsc{lom}s in the first window {\em see} the same set of $\Delta y$ spatial errors, i.e.~meaning the defects of these errors occur in the decoding subgraphs of these single-\textsc{lom}s, each leaving behind a pair of artificial defects. This means that in the second window, a correction of weight $>\Delta y$ is required to match the observed defects.
    \item a time-like ``snake'', i.e.~the logical structure of the circuit in the second window implies that the proliferation of defects is seen by a next single-\textsc{lom} decoder. This decoder has the opportunity to match these defects in novel ways, in particular partially using time-like edges (measurement errors). Specifically, we require that any time-like correction that is matching between different pairs of artificial defects (left by different previous decoders) are also long (but not \emph{too} long, as discussed below).
    \item a ``descending staircase'' pattern of defects in the second window, i.e.~some of the artificial defects are left behind at different spatial coordinates, causing a pattern of defects where the bad correction is always as short as the shortest good correction.
\end{enumerate}

We show the structure of the circuit in \cref{fig:window_snake_1} and the placement of the errors in the decoding hypergraph in \cref{fig:window_snake_2}. \cref{fig:window_snake_3} shows how the proliferation of defects occurs through each of the single-\textsc{lom}s in the first window. Note also that each single-\textsc{lom} applies a different correction that will create the descending staircase pattern of defects in the next window. \cref{fig:window_snake_4} then shows the snake in the second window. The key feature of the snake is that the shortest time-like path between pairs of left-over defects in the single-\textsc{lom} shown in \cref{fig:window_snake_4}(a,b) is $\Theta(d)$. To see why this is necessary, we must consider the possible good and bad corrections that the single-\textsc{lom} may make as shown in \cref{fig:window_snake_4}(c). If the snake is too short, then $\Delta\tilde{t}=6\Delta t+9$ will be small, meaning that good corrections 2, 3 and 4 will be shorter than the bad correction. Meanwhile if the snake is too \textit{long} and $\Delta\tilde{t}$ too large, then good correction 1 will be shorter than the bad correction. Therefore only when $\Delta\tilde{t}$ is intermediate in size---roughly $3d/5$---can the bad correction be shorter in length than all the good corrections.

It is an interesting question to determine the smallest error of weight $<d/2$ that can cause a logical decoding error. For the example shown in \cref{fig:window_snake_1,fig:window_snake_2,fig:window_snake_3,fig:window_snake_4}, this is $d=65$, which works with the parameters $\Delta y=26$ and $\Delta t=5$ such that the error has weight $\Delta y+6=32<65/2$ and all the corrections in \cref{fig:window_snake_4}(c) have weight 78. It is therefore unlikely that this particular example will cause practical problems in the surface code, and it may even be the case that below some minimum distance it is impossible for there to be a snake that causes such a problem. However, we have no proof that more elaborate examples do not exist for smaller distances, and it is therefore desirable to find a solution to this problem which can be tested on practical examples later.

\subsection{An asymptotically worse example}\label{sec:complicated_snake}

To construct an example where the weight of the error which causes a logical error scales slower than $\Theta(d)$, we use the more elaborate circuit and set of errors shown in \cref{fig:window_snake_5,fig:window_snake_6,fig:window_snake_7}, which represent a generalization and modification of the smaller circuit analyzed above, in \cref{sec:simple_snake}. The three features of the example in \cref{sec:simple_snake}---the proliferation of defects, the time-like snake, and the descending staircase of defects---are still present. Just like in \cref{sec:simple_snake}, there are two windows in the circuit that are relevant, where the artificial defects from the first window of decoding cause a logical error in the second window. The weight of the physical error depends on the function $f(t)$ that determines the maximum spreading of a Pauli operator in a circuit of depth $t$ as discussed in \cref{sec:window}. In the unrealistic worst-case where there are no spatial restrictions on the gates that can be implemented with $f(t)=2^t$, we find that the weight of the physical error which creates a logical error can scale as slowly as $\Theta\big(\log_{2}(d)\big)$; however, in realistic practical scenarios where $f(t)=\Theta\big(t^{D}\big)$ for some integer $D$, the weight of the physical error instead can scale as $\Theta\big(d^{1/(D+1)}\big)$.

The circuit itself is shown in \cref{fig:window_snake_5} and involves nine {\em blocks} of logical qubits, each of which contains $M$ or $M+1$ logical qubits. Here $M$ is a parameter that will scale as some function of $d$ that we will determine later. The location of the spatial errors and the corrections made by the single-\textsc{lom}s in the first window are shown in \cref{fig:window_snake_6}, and the correction that causes a logical  error in the second window is shown in \cref{fig:window_snake_7}.

To understand how this works, we explain how this circuit enables the proliferation of defects, a time-like snake, and a descending staircase of artificial defects to cause a logical error. The proliferation of defects occurs in the first window of decoding, where the spatial errors on the first qubit in, for example, blocks 1 and 7 are seen by all $M+1$ single-\textsc{lom}s derived from $\overline{Z}$-operators in block 1. Key to this is the formal circuit $U^{(X)}_{1\rightarrow N}$, which we define for any integer $N$ as any Clifford circuit such that
\begin{equation}\label{eq:UX_defn}
   U^{(X)}_{1\rightarrow N} \overline{X}^{\vphantom{(X)}}_{1}U^{(X)\dag}_{1\rightarrow N}=\overline{X}^{\otimes N}.
\end{equation}
We likewise define $U^{(X)}_{N\rightarrow 1}=U^{(X)\dag}_{1\rightarrow N}$. Importantly, any $\overline{Z}_{n}$ operator with $n=1,\dots,N$ anticommutes with $X^{\otimes N}$, and therefore $U^{(X)}_{1\rightarrow N}\overline{Z}_{n}U^{(X)\dag}_{1\rightarrow N}$ anticommutes with $\overline{X}_{1}$. In block 1 with $N=M+1$, this means that every single-\textsc{lom} derived from logical observable $\overline{Z}_{n}$ in block 1 at the center of the first window includes $\overline{Z}_{1}$ prior to the $U^{(X)}_{1\rightarrow M+1}$ gate, and thus {\em sees} the defects produced by the spatial errors on the first logical qubit in block 1. The same argument applied forwards means that every such single-\textsc{lom} also contains the $Z$-detectors for $\overline{Z}_{1}$ at the end of block 7, as shown in \cref{fig:window_snake_6}(a). The depth of the circuit $U_{1\rightarrow N}^{(X)}$ depends on $f(t)$ and, as we will see later, will govern the minimum weight of the physical error that is possible in this example.

Each of the $M+1$ single-\textsc{lom}s in block 1 thus sees the same set of spatial errors, which is shown in \cref{fig:window_snake_6}(b). Of course, each single-\textsc{lom} also has support on other parts of the circuit, but since no errors occur in those locations they are not relevant for us. Of particular importance is the number of QEC rounds between the two sets of spatial errors $\Delta t_{\text{err}}$, shown in \cref{fig:window_snake_6}(a), which depends on
the operator-spreading structure of the circuit $U^{(X)}_{1\rightarrow M+1}$ and hence on $f(t)$. By inspection of \cref{fig:hyperedges}, the number of time-edges separating the two spatial errors in any given single-\textsc{lom} may be larger than the number of QEC rounds separating them, but is always bounded by $\Delta t_{\text{err}} \leq \Delta t_{\textsc{slom}}\leq 2\Delta t_{\text{err}}$. The error shown in \cref{fig:window_snake_6}(b) has weight $2\Delta y$, while the correction has weight $2\Delta t_{\textsc{slom}}$. In order for the indicated correction (in purple) to be chosen in each single-\textsc{lom}, we therefore require $2\Delta y\geq 2\Delta t_{\textsc{slom}}$ for each single-\textsc{lom} in block 1. Alternatively, a simpler sufficient condition is $\Delta y\geq 2\Delta t_{\text{err}}$.
The same construction is repeated for the single-\textsc{lom}s in blocks 3 and 5 as shown in \cref{fig:window_snake_6}(c--d), albeit with the errors happening at different spatial coordinates. In total, the construction uses an error of weight $4\Delta y$, and a sufficient condition for the purple correction to be chosen is \begin{equation}\label{eq:Delta_y_sufficient_condition}
    \Delta y\geq 4\Delta t_{\text{err}},
\end{equation}
where the constant factor of 4 instead of 2 arises because the height of the loops in \cref{fig:window_snake_6}(c--d) is only $\Delta y/2$.

The next ingredient that is required is the presence of a time-like snake in a single-\textsc{lom} in the following window, which we show in \cref{fig:window_snake_7}(a). The most fool-proof way of constructing such a snake is to consider a single-\textsc{lom} that is in the second step of the two-step windowed-\textsc{lom} decoder. The single-\textsc{lom} is propagated backwards from a multi-logical-qubit Pauli operator at the center of window 2 given by a product of $\overline{Z}$ operators on the first qubit of the first block and all qubits in blocks 2, 4, and 6, as shown in \cref{fig:window_snake_7}(a). Such a second-step single-\textsc{lom} can arise when dealing with the fragile observables arising from fast resets and/or measurements in the commit region of window 2 (not shown in explicitly \cref{fig:window_snake_5,fig:window_snake_6,fig:window_snake_7}). We suspect that examples could also be constructed in the basic windowed-\textsc{lom} decoder where all single-logical observables are reliable, by including a unitary gate $H^{\otimes (3M+1)}U^{(X)}_{3M+1\rightarrow 1}H^{\otimes (3M+1)}$ on the first qubit of block 1 and all qubits in blocks 2, 4, and 6 immediately before the center of window 2. In this case, however, it is more complicated to prove \cref{lem:descending_staircase}, even though we suspect the lemma would still be true.

The snake, shown in \cref{fig:window_snake_7}(a), wriggles its way through the different blocks of the circuit, from the first qubit in blocks 1, 2, 3, 4, 5, and then 6, and then back to the second qubit in blocks 1, 2, 3, and so on, until it terminates on the last qubit of block 1. Moreover, the single-\textsc{lom} sees a pair of left-over defects for each of the single-\textsc{lom}s in \cref{fig:window_snake_6}, which are equally-spaced throughout the snake region. Importantly, the shortest time-like matching between defects in left-over defects is through the snake and therefore scales with $\Delta t'$ and $M$.

Finally, the descending staircase of defects can be seen in the flattened representation of the single-\textsc{lom} shown in \cref{fig:window_snake_7}(b), with two example good corrections and one example bad correction to this pattern of defects. We choose $\Delta t'=\Delta y/2$ as indicated in \cref{fig:window_snake_7}(b). This way, the two good corrections shown in \cref{fig:window_snake_7}(b) have the same weight. In fact, the structure of the descending staircase allows us to prove that at least one of three corrections shown in \cref{fig:window_snake_7}(b) is a minimum-weight correction to the defects.

\begin{lemma}\label{lem:descending_staircase}
    For the descending staircase of defects in \cref{fig:window_snake_7}(b) with $\Delta t'=\frac{1}{2}\Delta y$ and $\Delta y<d/2$, every correction to the defects by the multi-Pauli single-\textsc{lom} at the center of window 2 in \cref{fig:window_snake_7}(a) has weight $w$ lower-bounded as
    \begin{equation}
        w\geq \mathrm{min}\big((4M+2)\Delta y/2, d+(3M-2)\Delta y/2\big).
    \end{equation}
\end{lemma}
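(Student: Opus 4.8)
The plan is to prove the bound by a homological decomposition of corrections in the flattened single-\textsc{lom} decoding subgraph of \cref{fig:window_snake_7}(b), reducing the estimate to a shortest-path count in the ``snake metric.'' First I would fix coordinates on this subgraph, viewing it as an effectively two-dimensional lattice whose axes are the spatial coordinate of the code (of extent $d$) and the snake coordinate $\tilde t$, and record the positions of the left-over defects as the descending staircase of pairs produced by the first-window single-\textsc{lom}s from blocks $1$, $3$ and $5$. The key structural fact, which follows from the same reasoning as in the proof of \cref{thm:single-LOM_FT} and from the second-step single-\textsc{lom} having closed time-boundaries, is that the only weight-one boundaries available to any correction are the two spatial boundaries of the surface code. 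Consequently, fixing a reference correction $\vec c_{0}$ with $\partial\vec c_{0}$ equal to the defect set, the logical class of any correction $\vec c$ is governed entirely by whether the cycle $\vec c\oplus\vec c_{0}$ is contractible or instead traverses from one spatial boundary to the other, i.e.~realizes $\overline X_{1}$ at the center of window~2.

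Next I would split the argument into the two logical cosets. For a logically \emph{trivial} correction, $\vec c$ is itself an admissible pairing of the defects, and its weight is bounded below by the minimum-weight pairing in the snake metric; since consecutive staircase defects are separated by $\Delta y$ spatially and by $\Delta t'=\tfrac12\Delta y$ along the snake, the hypothesis $\Delta t'=\tfrac12\Delta y$ guarantees that each of the $\Theta(M)$ staircase steps is traversed optimally along the snake, and summing yields $w\geq (4M+2)\Delta y/2$. For a logically \emph{nontrivial} correction, the cycle $\vec c\oplus\vec c_{0}$ must contain a string joining the two spatial boundaries; since no short-cut edges are present, this string costs at least $d$ edges (using $\Delta y<d/2$ to ensure a single spatial crossing cannot undercut it). It then remains to show that, even granted this logical string, the correction cannot resolve the staircase of defects for less than $(3M-2)\Delta y/2$. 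I would establish this second baseline by a projection/cut argument: projecting the correction onto the snake axis and counting, for each inter-defect interval, the minimal number of time-like edges any correction in the nontrivial coset must cross, accounting for the $\Delta y/2$ savings on the loop segments of blocks $3$ and $5$ that the logical string is able to absorb.

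Combining the two cases, every correction lies in one coset or the other, so its weight is at least the smaller of the two bounds, giving $w\geq \min\big((4M+2)\Delta y/2,\; d+(3M-2)\Delta y/2\big)$, as claimed. The main obstacle I expect is the careful bookkeeping in the nontrivial coset: one must verify that the descending-staircase geometry forces the coefficient $(3M-2)$ \emph{exactly}, rather than merely $\Theta(M)$, which amounts to checking that the logical string can be routed so as to save on precisely the block-$3$ and block-$5$ loop segments and no others. I would handle this by flattening the snake as in \cref{fig:window_snake_7}(b) and tracking the candidate corrections drawn there, arguing that any other correction in a given coset can be reduced (by adding contractible cycles, which never decrease weight in this metric) to one of these explicit corrections, so that the coset minimum is attained by a correction whose weight is computed directly from the staircase coordinates.
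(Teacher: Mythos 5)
Your case decomposition is by logical coset (trivial versus nontrivial), but this is not the right split, and it leaves a genuine hole. The flaw is in your first branch: you assert that a logically trivial correction ``is itself an admissible pairing of the defects'' and hence has weight at least $(4M+2)\Delta y/2$. That is false. A logically trivial correction need not pair the defects among themselves through the snake: it can match defects to the spatial boundaries in pairs (for instance, the two defects nearest the top boundary each matched straight to that boundary, crossing the observing edge set twice in total and so staying in the trivial coset). Such a correction has weight roughly $2\cdot(d-\Delta y)/2+3M\Delta y/2 = d+(3M-2)\Delta y/2$, and precisely in the regime where the lemma is used ($d\leq (M+4)\Delta y/2$) this is \emph{smaller} than $(4M+2)\Delta y/2$, contradicting your claimed bound for the trivial coset. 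Nor is this correction covered by your nontrivial branch, since it contains no string joining the two spatial boundaries. The paper avoids this by splitting instead on $n_{\text{bdy}}$, the number of matchings to a spatial boundary (which must be even): the case $n_{\text{bdy}}=0$ yields $(2M+1)\Delta y=(4M+2)\Delta y/2$, and the case $n_{\text{bdy}}\geq 2$ yields $d+(3M-2)\Delta y/2$ by direct edge-weight counting together with $\Delta y<d/2$; the minimum in the statement is exactly the minimum over these two cases, not over logical cosets.

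A second, independent gap is in how you obtain the constant $(4M+2)$ in the boundary-free case. Saying that ``each staircase step is traversed optimally along the snake'' and summing only gives the naive per-edge bound of $(3M+1)\Delta y/2$ (there are $3M+1$ edges in any perfect matching of the $6M+2$ defects, each of weight at least $\Delta y/2$), which is weaker than $(4M+2)\Delta y/2$. The paper's argument needs a parity obstruction: each descending staircase contains an even number of defects, so the numbers of ``thick'' (inter-staircase, weight-$\Delta y$) edges on its two sides have equal parity, and matching the bottom-left vertex of the first staircase forces an odd number of thick edges between \emph{every} consecutive pair of staircases; hence at least $M+1$ edges of weight $\Delta y$ plus $2M$ edges of weight $\Delta y/2$, giving $(2M+1)\Delta y$. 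Your sketch contains no substitute for this counting. Finally, the closing claim that any correction ``can be reduced (by adding contractible cycles, which never decrease weight in this metric) to one of these explicit corrections'' is unjustified as stated---adding a cycle produces a different correction in the same coset, but there is no monotonicity of weight under such moves, so coset minima cannot be identified with the drawn candidates this way.
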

\begin{proof}
    We will structure this proof by considering the number of times $n_{\text{bdy}}$ the correction matches to the boundary (either the top or bottom spatial boundary). 
    All other edges in the matching must be between the remaining $6M+2-n_{\text{bdy}}$ non-boundary vertices, which is only possible if $n_{\text{bdy}}$ is even. The non-boundary matching graph of these non-boundary vertices $G_{\text{match}}=(V_{\text{match}},E_{\text{match}})$, which contains only the vertices that correspond to (artificial) defects, is shown in \cref{fig:window_snake_7}(c). Note that the number of vertices in $G_{\text{match}}$ is $|V_{\text{match}}|=6M+2$.

    We begin by considering the case where $n_{\text{bdy}}\neq 0$. Since the closest defect to the boundary is $(d-\Delta y)/2$, the weight of each edge in the matching connected to the boundary is at least $(d-\Delta y)/2$. Moreover, each edge in $G_{\text{match}}$ (i.e.~every edge that's \textit{not} connected to the boundary) has weight at least $\frac{1}{2}\Delta y$. Therefore, by counting the number of edges that are and are not connected to the boundary we have
    \begin{equation}\label{eq:w_lb_with_n_bdy}
        w\geq n_{\text{bdy}}d/2+\bigg(3M+1-\frac{3}{2}n_{\text{bdy}}\bigg)\Delta y/2.
    \end{equation}
    If $n_{\text{bdy}}\geq 2$, we can use $\Delta y<d/2$ to write
    \begin{subequations}
    \begin{equation}
        \begin{split}
        w\geq (n_{\text{bdy}}-2)d/2&+\bigg(3M+1-\frac{3}{2}(n_{\text{bdy}}-2)\bigg)\Delta y/2\\&\hspace{2.5 cm}+\bigg(d-\frac{3}{2}\Delta y\bigg)\\
        \geq(n_{\text{bdy}}-2)d/2&+\bigg(3M+1-\frac{3}{2}(n_{\text{bdy}}-2)\bigg)\Delta y/2,
    \end{split}
    \end{equation}
    and therefore we can substitute $n_{\text{bdy}}=2$ into \cref{eq:w_lb_with_n_bdy}, giving
    \begin{equation}
        w\geq d+(3M-2)\Delta y/2,\quad\text{if }n_{\text{bdy}}\geq 2.\label{eq:w_lb_with_bdy}
    \end{equation}
    \end{subequations}

    Now, if $n_{\text{bdy}}=0$, then the entire correction must use only the edges shown in \cref{fig:window_snake_7}(c). Consider each ``descending staircase'' of vertices $V_{\text{stair},m}\subset V_{\text{match}}$ for $m=1,\dots,M$, shown in the dotted regions. Edges of weight $\Delta y/2$ connect pairs of vertices in the same staircase $V_{\text{stair},m}$; to match vertices in different staircases requires a ``thick'' edge of weight $\Delta y$. For each $V_{\text{stair},m}$, there are thick edges to the left, connecting it with $V_{\text{stair},m-1}$, and thick edges to the right connecting it with $V_{\text{stair},m+1}$. Since there are an even number of vertices in each staircase, the number of thick edges to the left and right that are included in the matching must be either both even, or both odd. Moreover, looking at the first descending staircase $V_{\text{stair},1}$, in order for the matching to match the bottom-left vertex there must be an odd number of thick edges to the left that are included in the matching. This means that there must be an odd number of thick edges between every pair of staircases in $G_{\text{match}}$. Therefore, there are at least $M+1$ thick edges in the matching. Since from before there are at least of $3M+1$ edges in the matching, we have
    \begin{equation}\label{eq:w_lb_no_bdy}
        w\geq (M+1)\Delta y+2M\Delta y/2=(2M+1)\Delta y,\quad\text{if }n_{\text{bdy}}=0.
    \end{equation}
    Combining \cref{eq:w_lb_with_bdy,eq:w_lb_no_bdy} proves the result.
\end{proof}

\begin{figure*}[htb!]
    \includegraphics[width=0.9\textwidth]{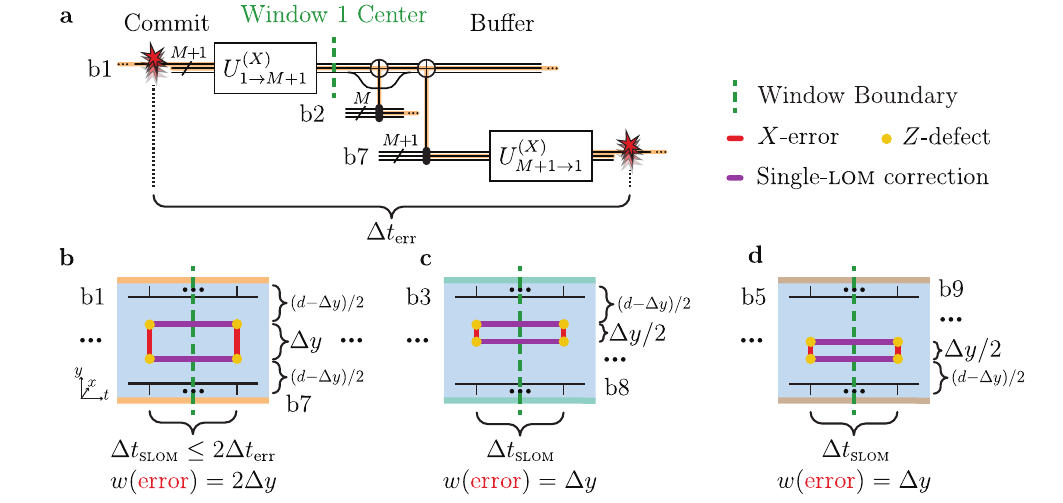}
    \caption{The first window of decoding the error from \cref{fig:window_snake_5}. Each $\overline{Z}_{i}$ single-\textsc{lom} with $i$ representing a qubit in blocks 1, 3, or 5 are relevant for our example. (a) The propagation of an arbitrary block-1-single-\textsc{lom} for $\overline{Z}_n$, $n=1,\ldots M+1$ through the circuit. Similar structures are followed for the other single-\textsc{lom}s in this window. (b--d) The decoding subgraphs of single-\textsc{lom}s in blocks 1, 3, and 5 (respectively), along with the errors and corrections in each single-\textsc{lom}. The artificial defects from each single-\textsc{lom} appear in the second window and cause a logical error in decoding, as shown in \cref{fig:window_snake_7}.}\label{fig:window_snake_6}
\end{figure*}

With these three ingredients in place, we can now determine the minimum asymptotic scaling of the weight of the physical error that causes the logical error shown in \cref{fig:window_snake_5,fig:window_snake_6,fig:window_snake_7}. From \cref{lem:descending_staircase}, the bad correction in \cref{fig:window_snake_7}(b)(iii) will be a minimum-weight correction to the error syndrome if
\begin{equation}
    d\leq (M+4)\Delta y/2,
\end{equation}
or, in asymptotic terms, we require $M\Delta y=\Omega(d)$. Moreover, from the corrections in the first window, we require
\begin{equation}
    \Delta y\geq 4\Delta t_{\text{err}}.\tag{\ref{eq:Delta_y_sufficient_condition}}
\end{equation}
However, $\Delta t_{\text{err}}$ is lower-bounded by the depth of the $U^{(X)}_{1\rightarrow N}$ circuits. In particular, in \cref{sec:window}, we defined the function $f(t)$ as the maximum number of qubits that a single-qubit Pauli operator can be propagated to in a circuit of depth $t$. In general, $f(t)=2^{t}$; however, in a physical setting, locality and efficiency requirements will limit this to $f(t)=\mathrm{poly}(t)$ for some polynomial in $t$. Therefore, the depth of $U^{(X)}_{1\rightarrow N}$ is lower-bounded by $f^{-1}(N)$. Therefore there exist circuits $U^{(X)}_{1\rightarrow M}$ such that $\Delta t_{\text{err}}$ scales as $O\big(f^{-1}(M)\big)$.
%no greater than
%\begin{equation}
%    \Delta t_{\text{err}}\leq 2f^{-1}\big(3(M+1)\big)+2.
%\end{equation}
%The factor of 3 comes from taking into account the compilation of the two layers of \textsc{cnot} gates in between the two $U^{(X)}_{1\rightarrow N}$ circuits in \cref{fig:window_snake_5}; after all, the \textsc{cnot} gates can only be local in space if each $U^{(X)}_{1\rightarrow N}$ circuit propagates the $\overline{X}_{1}$ operator to $N$ qubits that are spread out across $3N$ physical qubits.

\begin{figure*}
    \includegraphics[width=0.9\textwidth]{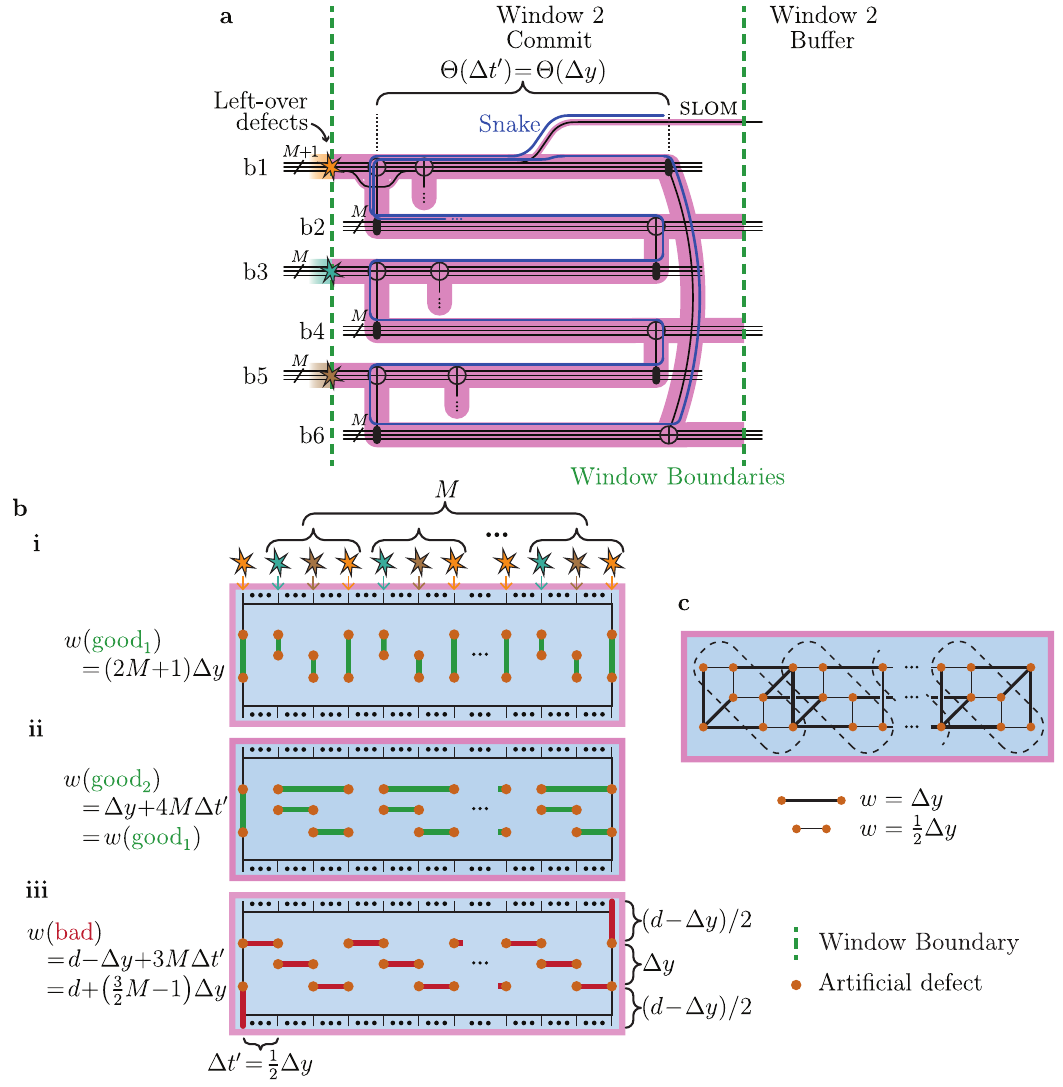}
    \caption{The second window of decoding the left-over defects from the errors and corrections in \cref{fig:window_snake_5,fig:window_snake_6}. We are only interested in the single-\textsc{lom} corresponding to $\overline{Z}_{1}$, shown shaded in (a). We also show in dark blue the structure of the snake, which wriggles from the first qubit in block b1, then to the first qubit in each of the blocks in turn down to b6, before returning to the second qubit in block b1, and so on. On blocks b1, b3, and b5 artificial defects are created at different spatial locations according to \cref{fig:window_snake_6}(b--d). (b) Some good and bad corrections. (c) The matching graph of the artificial defects, with edges to the boundaries removed, used in the arguments in \cref{lem:descending_staircase}. The weight of each edge is indicated by the thickness of the line and follows the Manhattan distance. The dashed ovals indicate each ``descending staircase'' of defects, as used in \cref{lem:descending_staircase}.}\label{fig:window_snake_7}
\end{figure*}

First, consider the worst-case unphysical situation where $f(t)=2^{t}$. In this case we can set
\begin{equation}
    \begin{split}
    M&=\Theta(d), \\ \Delta y&=\Theta(\Delta t_{\text{err}})=\Theta\big(f^{-1}(M)\big)=\Theta\big(\log_{2}(d)\big),
\end{split}
\end{equation}
such that the weight of the error is $\Theta\big(\log_{2}(d)\big)$ while the scaling still satisfies $M\Delta y=\Omega(d)$. In the more physical case where $f(t)=\Theta(t^{D})$ for some dimension $D$, we can likewise set
\begin{equation}
    \begin{split}
    M&=\Theta(d^{D/(D+1)}),\\\Delta y&=\Theta(\Delta t_{\text{err}})=\Theta\big(f^{-1}(M)\big)=\Theta\big(d^{1/(D+1)}\big).
\end{split}
\end{equation}
In both cases, the weight of the error $\Delta y$ which can cause a logical error as in \cref{fig:window_snake_7}, scales sub-linearly with $d$.

%It is worth emphasizing at this point that these results do not preclude the existence of a threshold, since this does not require the minimum weight of a logical error to scale as $\Theta(d)$. However, in trying to optimise the performance of the windowed-\textsc{lom} decoder, improving the worst-case scaling may improve the average results, motivating our conjectured solution that we describe below.

\subsection{Short-cut edges}\label{sec:short-cut_edges}

%\mhs{revisit strings up here}

We provide a precise definition of the short-cut edges in either the basic or two-step windowed-\textsc{lom} decoders for which we introduce the following notation. We write the full decoding hypergraph as $\mathcal{G}=(\mathcal{V},\mathcal{H})$, and the decoding subgraph for a single-\textsc{lom} as $G=(V,E)$. Moreover, every basic error $h\in\mathcal{H}$ that overlaps with the single-\textsc{lom} can be projected down to an edge $e\in E$. A subset of edges $s\subseteq E$ has a \textit{vertex boundary} $\partial s\subseteq V$ that includes all vertices that are the endpoints of an odd number of edges in $s$. We write $w(s)=|s|$ for the weight (i.e.~size) of the edge subset $s$.
%Moreover, for two edge subsets $s$ and $s'$ we denote their symmetric difference as $s\oplus s'=(s\cup s')\setminus(s'\cap s)$, that is the set of edges that are in one of $s$ or $s'$ but not both.
If a non-empty subset of edges has no more than two endpoints we call it an edge \textit{string}.

Recall that each detector $d(t,j,x,y)$ is labeled with a set of coordinates as in Table~I from SM~\cite{supp}. Here, $t\in\mathbb{Z}$ represents the time of the detector, $j$ represents the logical qubit and $x$ and $y$ are spatial coordinates. $X$-detectors have spatial coordinates satisfying $x\in\mathbb{Z}$ and $y\in\mathbb{Z}+\frac{1}{2}$ while for $Z$-detectors we have $x\in\mathbb{Z}+\frac{1}{2}$ and $y\in\mathbb{Z}$, with $0\leq x,y\leq d-1$.
In the statements below we will find it useful to have a different labeling of vertices $v(t,k,z,\tilde{z})$, given by $v(t,2j-1,x,y+1/2)\equiv d(t,j,x,y)$ for $X$-detectors and $v(t,2j,y,x+1/2)\equiv d(t,j,x,y)$ for $Z$-detectors. This way, $z,\tilde{z}\in\{0,1,\dots,d-1\}$ regardless of whether the vertex represents an $X$- or $Z$-detector.
%For a vertex $v\in V$ we also write the functions $t(v),k(v),z(v),$ and $\tilde{z}(v)$ that output the $t,k,z,$ and $\tilde{z}$ coordinates of $v$ respectively.

Now, consider any single-\textsc{lom} instance in the execution of the windowed-\textsc{lom} decoder. In particular, the single-\textsc{lom} decoders runs \textsc{mwpm} on a decoding subgraph $G=(V,E)$. We can moreover assume without loss of generality that $G$ is connected; after all, if $G$ is \textit{not} connected, then we can decode it in two independent single-\textsc{lom}s. For convenience, in each single-\textsc{lom} we introduce a set of \textit{boundary} vertices $V_{\text{bdy}}$. In particular, for each edge in the single-\textsc{lom} that is connected to the boundary vertex $v_{\text{bdy}}$ we define a separate boundary vertex that has its own unique set of $(t,k,z,\tilde{z})$ coordinates. We define the set of boundary vertices as $V_{\text{bdy}}$, which itself is partitioned into three subsets: $V_{\text{top}}$, $V_{\text{bottom}}$ and $V_{\text{time}}$ representing the top spatial boundary ($\tilde{z}=d$), bottom spatial boundary ($\tilde{z}=0$), and open time boundaries respectively. To obtain a valid correction, the correction subset of hyperedges $c\subseteq\mathcal{H}$ must have vertex boundary $\partial c$ that matches the observed defects everywhere except on the boundary vertices $V_{\text{bdy}}$. For the purposes of running \textsc{mwpm} one needs to connect all these boundary vertices to each other by weight-zero edges; however, for the purposes of our proof, we instead equivalently assume that there are no edges between these vertices. In this picture, the boundary vertices are the only vertices of degree one in the decoding graph.

With this nomenclature, all of the standard edges in the decoding subgraph, assuming basic errors, fall in one of the four categories:
\begin{itemize}
    \item $z$-type edges between a pair of vertices $\{v(t,k,z,\tilde{z}),v(t,k,z\pm 1,\tilde{z})\}$;
    \item $\tilde{z}$-type edges between $\{v(t,k,z,\tilde{z}),v(t,k,z,\tilde{z}\pm 1)\}$;
    \item $t$-type edges between $\{v(t,k,z,\tilde{z}),v(t+1,k',z,\tilde{z})\}$ for some pair of indices $k,k'$ that may or may not be equal; or
    \item $k$-type edges between $\{v(t,k,z,\tilde{z}),v(t,k',z,\tilde{z})\}$ for $k\neq k'$.
\end{itemize}
Note that by definition, $z$-type, $\tilde{z}$-type and $k$-type edges connect vertices whose coordinates differ only in a single coordinate (the $z$, $\tilde{z}$ and $k$ coordinates respectively), while $t$-type edges contain vertices that can differ in both their $t$- and $k$-coordinate values. We will also occasionally refer to the $z$- and $\tilde{z}$-type edges collectively as \textit{space-like} edges, and the $t$- and $k$-type edges as \textit{time-like} edges.
%For a subset of edges $f\subseteq E$, for each coordinate $\xi\in\{t,k,z,\tilde{z}\}$ we denote the $\xi$-type weight $w_{\xi}(f)$ as the number $\xi$-type edges in the edge subset, including short-cut edges that we are about to define.
The observing edge set in any given single-\textsc{lom} contains only $\tilde{z}$-type edges between a pair of vertices $\{v(t,k,z,d-1),v(t,k,z,d)\}$, where the latter vertex is on the top boundary $v(t,k,z,d)\in V_{\text{top}}$ and the time coordinate $t$ is in the commit region of the single-\textsc{lom}.

With this, we define the short-cut edges that we add to each single-\textsc{lom} during the execution of the circuit as follows:
\begin{definition}[Short-cut edges]
    \,
\begin{itemize}
    \item \emph{($k$-type short-cut edges:)} for each $t,z,\tilde{z}$, add edges $\{v(t,k,z,\tilde{z}),v(t,k',z,\tilde{z})\}$ for all $k,k'$ that are in the single-\textsc{lom} at time $t$ and such that neither vertex is a boundary vertex (i.e.~$v(t,k,z,\tilde{z}),v(t,k',z,\tilde{z})\notin V_{\text{bdy}}$), and
    \item \emph{($t$-type short-cut edges:)} for each $t,z,\tilde{z}$, add edges $\{v(t,k,z,\tilde{z}),v(t+1,k',z,\tilde{z})\}$ for every $k$ that is in the single-\textsc{lom} at time $t$ and for every $k'$ that is in the single-\textsc{lom} at time $t+1$ and such that neither vertex is a boundary vertex.
\end{itemize}
\end{definition}
In words, this means that for every single-\textsc{lom}, there is a (real or short-cut) $k$-type edge between \textit{every} pair of vertices in the single-\textsc{lom} that has the same time and space coordinates, even if they differ by logical qubit and/or the Pauli type of the detector; \textit{and} there is a (real or short-cut) $t$-type edge between every pair of vertices that has the same space coordinates and time coordinates that differ by one.

The short-cut edges are important when running \textsc{mwpm} on the single-\textsc{lom}, since without them low-weight errors can cause logical decoding errors as discussed in \cref{sec:snakes}. However, the short-cut edges do not directly change what is committed to in an individual single-\textsc{lom}. For one, since none of the short-cut edges have any spatial component, they are not in the observing edge set and therefore cannot directly flip the inference of a logical outcome. And second, they also do not directly lead to any artificial defects. To understand this second point, note that any pair $\{v_{1},v_{2}\}$ of vertices that are the endpoints of a short-cut edge can always be matched using regular $t$-type and/or $k$-type edges. Whenever the correction uses a short-cut edge, we can therefore interpret the short-cut edge instead as a matching with regular edges. However, this matching is designed to be of lower weight than when regular edges are used, hence they represent short-cuts. Moreover, due to the structure of any first-step single-\textsc{lom}, this matching can never traverse the center of the single-\textsc{lom} window, so it therefore does not directly change the artificial defects in the following window.

The short-cut edges are necessary for the following.
\begin{conjecture}
    Assuming a basic error model, a basic or two-step windowed-\textsc{lom} decoder defined in \cref{sec:window} and adapted to include the short-cut edges, applied to circuits defined in \cref{sec:ft-surface} with synchronized resets and measurements and implemented in the unrotated surface code, can correct any basic error of weight $<d/2$.
    \label{conj:FT}
\end{conjecture}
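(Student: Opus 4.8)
The plan is to prove \cref{conj:FT} by induction over the sliding windows, reducing the global statement to a per-window, per-single-\textsc{lom} claim that is controlled using the single-\textsc{lom} fault-tolerance result \cref{thm:single-LOM_FT} together with the metric lemma \cref{lem:short-cut_edge_metric}. Fix a basic error $\vec{e}$ with $w(\vec{e})<d/2$, projected onto the full hypergraph $\mathcal{G}$. For the $i$-th window, with center at $t_{\text{center}}^{(i)}$, let $\vec{c}_{\leq i}$ denote the correction committed up to and including that center---both the time-like edges committed \emph{at} the center and the logical Pauli frame flips committed in the commit region. The invariant I would carry forward is that (i) the left-over defects entering window $i+1$ are exactly $\partial(\vec{e}\oplus\vec{c}_{\leq i})$ restricted to times $>t_{\text{center}}^{(i)}$, and (ii) for every reliable observable $O$ whose observing region lies entirely before $t_{\text{center}}^{(i)}$, the accumulated frame-flip decision matches the true parity with which $\vec{e}$ intersects the observing edge set $\mathcal{H}_{O}$. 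Establishing the invariant for every window, and in particular for the final window, immediately yields the theorem, since each reliable observable's outcome is read off from this accumulated parity. The argument is uniform across the basic and two-step variants, which differ only in whether the multi-logical-qubit (second-step) single-\textsc{lom}s are needed.

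First I would treat the two jobs of each window separately. For the commitment to the time-like edges at $t_{\text{center}}^{(i)}$, I would use the fact argued in \cref{sec:window-ft} that every first-step single-\textsc{lom} has support on only a single logical $\overline{X}$ or $\overline{Z}$ at the center, so that no time-like loop can traverse the center and the assignment of center time-edges is well-defined and unaffected by loops. For the commitment to the logical Pauli frame---done within the single step of the basic decoder, or in the dedicated second step of the two-step decoder---I would invoke the synchronization of resets and measurements together with the peeling construction of \cref{fig:window_fragile}(f), which guarantees that every committed time-edge lies more than $d/2$ time-like edges from any fragile boundary. Consequently the relevant single-\textsc{lom}s are commit-region reliable and have all boundaries closed except the two spatial ones. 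For each such single-\textsc{lom}, \cref{lem:short-cut_edge_metric} renders the matching metric a function only of the space-time coordinates of the defects, so the argument of \cref{thm:single-LOM_FT} applies verbatim: any correction flipping the observable must traverse a spatial distance $\geq d$, and hence cannot be chosen unless the total weight of error-plus-correction inside that single-\textsc{lom} reaches $d$.

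The heart of the proof, and the step I expect to be the main obstacle, is the global weight accounting needed to bound the left-over defects and the correction weight in each single-\textsc{lom} strictly below $d$. The difficulty is exactly the phenomenon isolated in \cref{sec:snakes}: because windows overlap and the same spatial error can be seen by many single-\textsc{lom}s, the number of artificial defects handed forward can exceed $w(\vec{e})$---the proliferation of defects. One must show that, after inserting the short-cut edges, every minimum-weight matching in any single-\textsc{lom} of the following window still has weight $<d$ whenever $w(\vec{e})<d/2$. Intuitively this holds because \cref{lem:short-cut_edge_metric} collapses the time-like snake to a path whose length is governed by space-time coordinates rather than by the logical circuit structure, so each proliferated pair of left-over defects can be re-paired by a short-cut string of weight bounded by (roughly twice) the weight of the error that produced it. Converting this intuition into a theorem requires a combinatorial lemma generalizing \cref{lem:descending_staircase} from the specific descending-staircase configuration to an arbitrary pattern of artificial defects in an arbitrary synchronized circuit, and then summing these contributions across all single-\textsc{lom}s and all windows without double-counting the shared error weight. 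Making this accounting airtight for \emph{every} circuit---rather than for the hand-designed adversarial families of \cref{sec:snakes}---is precisely what we have not completed, which is why the statement remains a conjecture.
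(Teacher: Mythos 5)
First, a point of framing: the statement you were asked to prove is \cref{conj:FT}, which the paper itself leaves as a \emph{conjecture} --- the authors explicitly write that they hope to include a proof in a later version or future publication, and they supply only the supporting intuition (short-cut edges collapse time-like snakes) plus \cref{lem:short-cut_edge_metric} as a tool. Your proposal follows exactly the mechanism the paper envisions: induction over windows, first-step single-\textsc{lom}s committing center time-edges (no loops through the center, per \cref{sec:window-ft}), synchronization and peeling to keep committed edges $>d/2$ from fragile boundaries, and \cref{lem:short-cut_edge_metric} to make the matching metric coordinate-dependent rather than circuit-dependent. To your credit, you also identify honestly that you cannot close the argument.

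The gap you flag is genuine, and it is worth stating precisely why it is not a routine technicality. Your plan invokes ``the argument of \cref{thm:single-LOM_FT} verbatim'' inside each window, but that argument rests on the bound $w(\vec{e}\oplus\vec{c})\leq 2\,w(\vec{e})<d$, which requires that the defects seen by the decoder come from an error of weight $<d/2$. In later windows this premise fails: the left-over defects are boundaries of $\vec{e}\oplus\vec{c}_{\leq i}$, and the committed correction $\vec{c}_{\leq i}$ is \emph{not} bounded by $w(\vec{e})$ --- the proliferation examples in \cref{sec:snakes} produce committed corrections of weight $\sim 6d/5$ from errors of weight $\sim 2d/5$, precisely because the same spatial error is independently corrected by many single-\textsc{lom}s. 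So the per-window application of \cref{thm:single-LOM_FT} is circular without the global accounting, and your heuristic that each proliferated pair can be re-paired ``by a short-cut string of weight bounded by roughly twice the weight of the error that produced it'' is exactly the unproven step: \cref{lem:short-cut_edge_metric} only fixes the metric, it does not by itself show that the minimum-weight matching over \emph{all} left-over defects (whose total multiplicity scales with the number of single-\textsc{lom}s that saw the error, not with $w(\vec{e})$) avoids the observing edge set. The missing combinatorial lemma --- a generalization of \cref{lem:descending_staircase} from the hand-built staircase to arbitrary synchronized circuits, with a summation over windows that does not double-count shared error weight --- is the substance of the conjecture, and neither you nor the paper supplies it. Your submission is therefore a correct and well-aligned research plan, but not a proof, which is consistent with the statement's status in the paper.
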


Intuitively, the reason we believe this works is because short-cut edges remove the possibility of snakes occurring in any given single-\textsc{lom}. In particular, recall that in \cref{sec:simple_snake,sec:complicated_snake}, we explained that for the counterexamples to work, the snake had to have a particular length in order for the bad correction to be shorter than the good corrections. If the snake was too short, or too long, then at least one of the good corrections would be shorter than the bad correction. The short-cut edges work by essentially making sure the snake is as small as possible, effectively removing it from the single-\textsc{lom}. For example, in \cref{fig:window_snake_4,fig:window_snake_7}, the different pairs of left-over defects are separated by a set of $\Theta(d)$ time-like edges in the single-\textsc{lom} in the second window. The $k$-type short-cut edges here act to reduce this time-like distance to 1, so that good corrections 2, 3, and 4 in \cref{fig:window_snake_4}(c), and good correction 2 in \cref{fig:window_snake_7}(b), are always the minimum-weight correction. On the other hand, one might worry that the short-cut edges might ``help'' cause a logical error that is of weight $<d/2$. However, this does not happen because \textit{none} of the short-cut edges have any spatial component; that is, they always connect vertices with the same $z$ and $\tilde{z}$ coordinates. We hope to include a proof of \cref{conj:FT} either in a later version of this manuscript or in a future publication.

One helpful lemma that highlights the importance of the short-cut edges is the following:
\begin{lemma}\label{lem:short-cut_edge_metric}
    For any decoding subgraph $G=(V,E)$ for a single-\textsc{lom}, and for any pair of vertices $v(t,k,z,\tilde{z}),v(t',k',z',\tilde{z}')\in V\setminus V_{\text{bdy}}$, the minimum-weight edge string $s$ with vertex boundary $\partial s=\{v(t,k,z,\tilde{z}),v(t',k',z',\tilde{z}')\}$ is of weight
    \begin{equation}\label{eq:short-cut_edge_metric}
        w(s)= |z-z'|+|\tilde{z}-\tilde{z}'|+|t-t'| +\delta_{t,t'}(1-\delta_{k,k'}).
    \end{equation}
    Moreover, the shortest edge string $s$ with vertex boundary $\partial s=\{v(t,k,z,\tilde{z}),v_{\text{top}}\}$ for some $v_{\text{top}}\in V_{\text{top}}$ has weight $w(s)=d-\tilde{z}$, and the shortest edge string $s$ with $\partial s=\{v(t,k,z,\tilde{z}),v_{\text{bottom}}\}$ for some $v_{\text{bottom}}\in V_{\text{bottom}}$ has $w(s)=\tilde{z}$.
\end{lemma}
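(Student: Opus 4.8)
The plan is to prove the metric formula in \cref{eq:short-cut_edge_metric} by establishing a matching lower bound and a constructive upper bound separately, and then to treat the two boundary statements by the same template. I would obtain the lower bound from a $1$-Lipschitz potential argument, while the upper bound requires an explicit edge string together with two structural facts about a single-\textsc{lom} subgraph: that at every logical circuit location where it has support of a given Pauli type it contains the \emph{full} two-dimensional detector grid (all $0\le z,\tilde z\le d-1$), and that---since $G$ is assumed connected---the set of times carrying support forms an uninterrupted interval.

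For the lower bound, I would fix the target $v' = v(t',k',z',\tilde z')$ and take the candidate distance function $\phi\big(v(t,k,z,\tilde z)\big)$ to be the right-hand side of \cref{eq:short-cut_edge_metric}. The key step is to verify that $\phi$ changes by at most $1$ across every edge type in the augmented subgraph. This is immediate for $z$-type and $\tilde z$-type edges (each moves one spatial coordinate by one) and for $k$-type edges at fixed $t$ (which can only toggle the term $\delta_{t,t'}(1-\delta_{k,k'})$). The only case needing care is a $t$-type edge, which simultaneously changes $|t-t'|$ by one and may switch $\delta_{t,t'}(1-\delta_{k,k'})$ on or off; a short case split on whether the edge lands on, departs from, or avoids the time slice $t=t'$ shows that the two contributions always partially cancel, leaving $|\Delta\phi|\le 1$. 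Since $\phi(v')=0$ and any edge string $s$ with $\partial s = \{v,v'\}$ contains a path from $v$ to $v'$, one concludes $w(s)\ge |\phi(v)-\phi(v')| = \phi(v)$.

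For the upper bound I would build a path realizing $\phi(v)$ exactly. First route within the starting slab $(t,k)$ using $|z-z'|$ $z$-type and $|\tilde z-\tilde z'|$ $\tilde z$-type edges to reach $(z',\tilde z')$ (legal because that slab is a full grid); then, if $t\ne t'$, hop through time at fixed $(z',\tilde z')$ using $|t-t'|$ $t$-type short-cut edges, passing through the grid present at each intermediate time, and crucially choosing the \emph{final} time-hop to land directly in slab $k'$---exactly what the $t$-type short-cut edges permit, which is why no separate $k$-edge is needed when $t\ne t'$, matching the vanishing $\delta$ term. If instead $t=t'$ and $k\ne k'$, I append a single $k$-type short-cut edge at $(z',\tilde z')$, contributing the $+1$ of the $\delta$ term. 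The boundary statements follow the same scheme: to reach $V_{\text{top}}$ ($\tilde z=d$) one uses $\psi(v)=d-\tilde z$, which is $1$-Lipschitz and vanishes on $V_{\text{top}}$, for the lower bound, and a straight climb in $\tilde z$ for the matching upper bound; the $V_{\text{bottom}}$ case uses $\psi(v)=\tilde z$.

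The main obstacle I anticipate is not the lower bound---a routine Lipschitz check once the $t$-type case is dispatched---but making the upper-bound construction airtight. Specifically, one must justify that the time-hopping path never leaves the subgraph, which rests on the temporal support being an interval (forced by connectivity of $G$) and on each supported time carrying the full spatial grid including $(z',\tilde z')$; both come from how a single-\textsc{lom} is built from the observing region in the unrotated surface code (established in Section~III of the SM~\cite{supp}). The subtle bookkeeping is the interplay of the $t$- and $k$-coordinates: the short-cut edges must be invoked precisely so that the logical-qubit label is corrected ``for free'' during the last temporal step when $t\ne t'$, yet costs one edge when $t=t'$---it is this feature that produces the $\delta_{t,t'}(1-\delta_{k,k'})$ term and must be argued consistently in both the construction and the Lipschitz check.
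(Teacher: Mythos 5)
Your proposal is correct and follows essentially the same route as the paper's proof: a lower bound from the fact that every (real or short-cut) edge changes the claimed distance by at most one, plus a matching upper bound constructed explicitly from spatial moves, time-hops through $t$-type short-cut edges (relying on the single-\textsc{lom} having nonempty support with a full spatial grid at every time step of the window), and a single $k$-type short-cut edge in the $t=t'$, $k\neq k'$ case, with the boundary claims handled by straight spatial strings. The only differences are cosmetic: you package the lower bound as a $1$-Lipschitz potential, which absorbs the $\delta_{t,t'}(1-\delta_{k,k'})$ term uniformly where the paper argues it by a separate one-line case, and your construction routes space-first/time-last rather than the paper's time-first/space-last.
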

Importantly, the weight \cref{eq:short-cut_edge_metric} does \textit{not} depend on the logical structure of the single-\textsc{lom}. Therefore, if the same pair of defects arise in two different single-\textsc{lom}s, they can be matched with a correction of the same weight in each single-\textsc{lom}.
\begin{proof}
    We prove \cref{lem:short-cut_edge_metric} as follows. We begin with the edge string between a pair of vertices. If $t\neq t'$ then assume without loss of generality $t<t'$. Since every (real or short-cut) edge has weight 1 and increments at most one coordinate $z,\tilde{z}$ or $t$ by 1, it is clear that every such edge string has weight lower-bounded by $|z-z'|+|\tilde{z}-\tilde{z}'|+|t-t'|$. We can construct such a string beginning with a string of $t$-type edges $s_{\text{time}}=\{v(t_{\text{int}},k(t_\text{int}),z,\tilde{z}),v(t_{\text{int}}+1,k(t_\text{int}+1),z,\tilde{z})\}$ for $t_{\text{int}}=t,t+1,\dots,t'-1$. This must have the property that $k(t)=k$, $k(t')=k'$, and $k(t_\text{int})$ is in the single-\textsc{lom} at time $t_\text{int}$---such a choice of $k(t)$ is always possible since there is at least one value of $k$ in the single-\textsc{lom} at every time-step. This string of $t$-type edges $s_{\text{time}}$ is then composed with the space-like edge string $s_{\text{space}}$ from $v(t',k',z,\tilde{z})$ to $v(t',k',z',\tilde{z}')$ which is of weight $|z-z'|+|\tilde{z}-\tilde{z}'|$, so that in total $w(s)=w(s_{\text{time}})+w(s_{\text{space}})=|z-z'|+|\tilde{z}-\tilde{z}'|+|t-t'|$.
    
    If $t=t'$ and $k=k'$ then the statement is clear simply by constructing a space-like string. If $k\neq k'$ then clearly the minimum weight of such a string must be at least $|z-z'|+|\tilde{z}-\tilde{z}'|+1$, since we must use at least one $t$-type or $k$-type edge to move from $k$ to $k'$. We can construct such an edge string $s$ with the short-cut edge $\{v(t,k,z,\tilde{z}),v(t,k',z,\tilde{z})\}$ and the space-like edge string from $v(t,k',z,\tilde{z})$ to $v(t,k',z',\tilde{z}')$ which is of weight $|z-z'|+|\tilde{z}-\tilde{z}'|$, so that in total $w(s)=|z-z'|+|\tilde{z}-\tilde{z}'|+1$.

    The minimum weight strings connecting a vertex with the top or bottom spatial boundary with the correct weight are clear since we can simply construct a space-like string that does not contain any $t$- or $k$-type edges.
\end{proof}

% make all the figures appear before the bilbiography
\FloatBarrier

%\bibliography{references}
%\bibliographystyle{apsrev4-2}

\clearpage
\section*{Supplemental Material for ``Decoding across transversal Clifford gates in the surface code''}

\section{Proof of Lemma 1}
\label{sec:proppauli}

Here, we prove Lemma~1; that is, that $S^{\rightarrow}$ and $O$ anticommute if and only if $S$ and $O^{\leftarrow}$ anticommute.

\begin{proof}
We first define the observing region backpropagated only to time $t$, denoted $O^{t\leftarrow}$. More specifically, $O^{t\leftarrow}$ contains an $X$ (resp.~$Z$) before every $Z$-measurement ($X$-measurement) included in the observable $O$, \textit{and} these Pauli operators are backpropagated using rule 2 only for times $t'\geq t$. Note that this means that $O^{t\leftarrow}$ may have non-trivial components earlier than time $t$ if there are measurements in $O$ that occur before that time $t$. Under this definition, $O^{\infty\leftarrow}=O$, while $O^{0\leftarrow}=O^{\leftarrow}$. We similarly define the reset stabilizing region propagated forwards to time $t$ and denote it $S^{\rightarrow t}$.
%We say that $O^{t\leftarrow}$ and $S^{\rightarrow t}$ commute iff an even number of components of $O^{t\leftarrow}$ and $S^{\rightarrow t}$ anticommute.

With these definitions, we now wish to show that, for any $t$, $S^{\rightarrow t}$ and $O^{t\leftarrow}$ anticommute if and only if $S^{\rightarrow (t+1)}$ and $O^{(t+1)\leftarrow}$ anticommute.
Consider the propagation of the operator $O^{(t+1)\leftarrow}$ backwards one step in time to $O^{t\leftarrow}$: the components of its Pauli regions don't change for any circuit locations $(t',j)$ with $t'\neq t,t+1$; that is, $O^{(t+1)\leftarrow}_{(t',j)}=O^{t\leftarrow}_{(t',j)}$ for all $t'\neq t,t+1$. Let the set of qubits that do not participate in a reset or measurement at time $t+1/2$ be denoted $J$, and let the corresponding $|J|$-qubit Clifford gate applied to those qubits be denoted $C_{J}$. For all qubits $j\notin J$, $O^{t\leftarrow}_{(t,j)}=O^{(t+1)\leftarrow}_{(t,j)}$ and $O^{t\leftarrow}_{(t+1,j)}=O^{(t+1)\leftarrow}_{(t+1,j)}$ since rule 2 of the backpropagation algorithm doesn't apply to those circuit locations\added{; in other words, measurements and resets on qubits outside $J$ do not affect the anti-commutation relations during time-step propagation since these qubits are not involved in the Clifford operation}. The same is true when comparing $S^{\rightarrow t}$ and $S^{\rightarrow(t+1)}$. With this, we can trivially say that for every circuit location $(t',j)$ with either $t'\notin\{t,t+1\}$ or $j\notin J$, $[O^{t\leftarrow}_{(t',j)},S^{\rightarrow t}_{(t',j)}]=[O^{(t+1)\leftarrow}_{(t',j)},S^{\rightarrow (t+1)}_{(t',j)}]$. Now, note that $S^{\rightarrow t}_{(t+1,j)}=I$ for all $j\in J$, and likewise $O^{(t+1)\leftarrow}_{(t,j)}=I$ for all $j\in J$. Moreover, by definition we have $O^{t\leftarrow}_{(t,J)}=C_J^{\dag}O^{(t+1)\leftarrow}_{(t+1,J)}C_J$ and $S^{\rightarrow t}_{(t,J)}=C_J^{\dag}S^{\rightarrow(t+1)}_{(t+1,J)}C_J$. So, the number of circuit locations on which $O^{t\leftarrow}$ and $S^{\rightarrow t}$ anticommute must be the same as the number of circuit locations on which $O^{(t+1)\leftarrow}$ and $S^{\rightarrow (t+1)}$ anticommute, from which the claim follows.
\end{proof}

\section{Detector frames}
\label{sec:frames-errors}

In this appendix we provide a precise mathematical definition of the fold-transversal gates and the pre-gate and post-gate frames.

\subsection{Qubit coordinates and fold-transversal gates}
\label{sec:coordinates}

We label each qubit in the unrotated surface code, see Fig.~1 for $d=3$, with spatial coordinates $(x,y)$ with $x,y\in\frac{1}{2}\mathbb{Z}$ and $0\leq x,y\leq d-1$. Data qubits have $x+y\in\mathbb{Z}$, ancilla qubits measuring $X$-stabilizers have $x\in\mathbb{Z}$ and $y\in\mathbb{Z}+\frac{1}{2}$, and $Z$-ancilla qubits have $x\in\mathbb{Z}+\frac{1}{2}$ and $y\in\mathbb{Z}$.%Moreover, we label the time-slices within a QEC cycle with a time coordinate $\ell=0,1,2,3,4$, so that the ancilla resets occur at $\ell=-1/2$, CNOT gates occur at $\ell=1/2,3/2,5/2,7/2$, and ancilla measurements occur at $\ell=9/2$.

Recall that the bare circuit $\mathcal{C}$ is made up of gate layers occurring at half-integer time steps such that we can define a space-time location $(t,j)$ of $\mathcal{C}$ with $t,j\in\mathbb{Z}$. Then, in the encoded circuit $\overline{\mathcal{C}}$, a QEC round is implemented \textit{after} each logical gate at half-integer time-steps. With this, each physical qubit of $\overline{\mathcal{C}}$ is described by a tuple $(j,x,y)$ with $j\in\mathbb{Z}$ and $x,y\in\frac{1}{2}\mathbb{Z}$. 
%For completeness, one could introduce more granularity to the time labels within each QEC cycle such that each physical circuit location has a unique label, but this is unnecessary here.

Using the notation $U_{(j,x,y)}$ to denote the single-qubit gate $U$ acting on the qubit $(j,x,y)$ (and similarly for two-qubit gates), we can describe the exact implementation of the fold-transversal gates:
\begin{subequations}\label{eq:fold-transversal_gates}
\begin{gather}
    \overline{H}_{j}=\prod_{\substack{x,y\in\mathbb{Z}/2,\\x+y\in\mathbb{Z},\\0\leq x,y\leq d-1}}H_{(j,x,y)}\prod_{\substack{x,y\in\mathbb{Z}/2,\\x+y\in\mathbb{Z},\\0\leq x<y\leq d-1}}\mathrm{SWAP}_{(j,x,y),(j,y,x)}\\
    \overline{S}_{j}=\prod_{\substack{x\in\mathbb{Z},\\0\leq x\leq d-1}}S_{(j,x,x)}\prod_{\substack{x\in\mathbb{Z}+1/2,\\0\leq x\leq d-1}}S^{\dag}_{(j,x,x)}\prod_{\substack{x,y\in\mathbb{Z}/2,\\x+y\in\mathbb{Z},\\0\leq x<y\leq d-1}}CZ_{(j,x,y),(j,y,x)}\\
    \overline{\mathrm{CNOT}}_{j_{1},j_{2}}=\prod_{\substack{x,y\in\mathbb{Z}/2,\\x+y\in\mathbb{Z},\\0\leq x,y\leq d-1}}\mathrm{CNOT}_{(j_{1},x,y),(j_{2},x,y)}.
\end{gather}
\end{subequations}
\added{Note that for the fold-transversal $H$ and $S$ gates, there are SWAPs and CZs between physical qubits in coordinates $(x,y)$ and $(y,x)$ due to the 'folding' nature of these logical gates, see Fig. 1 from the main text. }

%Other Clifford gates -- such as the $\overline{\sqrt{X}}$ and $\overline{CZ}$ gates -- can be implemented by taking the product of multiple logical operators in Eq.~\ref{eq:fold-transversal_gates}.

\subsection{The pre-gate frame}
\label{sec:pregate}

In this section, we discuss various ``frame'' definitions of detectors in the presence of logical gates, with a particular focus on the \textit{pre-gate} frame that we use throughout the main text. For a memory experiment without logical gates, all frames are equivalent and the detectors are defined in the standard way, i.e. 
\begin{equation}
    d(t,j,x,y)=\big\{m(t-1/2,j,x,y),m(t+1/2,j,x,y)\big\},
\end{equation}
where $m(t-1/2,j,x,y)$ refers to the measurement of the ancilla qubit $(x,y)$ in the $j$th logical qubit at the end of the QEC round at time $t-1/2$. Here it is implied that the detector takes the XOR of the measurement bits on which it depends.

We also define additional detectors for each reset and measurement in the standard way. In particular, for a $\ket{\overline{0}}$-reset occurring at location $(t-1/2,j)$ (in the bare circuit), we define additional $Z$-detectors $d(t,j,x,y)$ that each contain only a single measurement $m(t+1/2,j,x,y)$.

Meanwhile, for a $\overline{Z}$-measurement occurring at $(t+1/2,j)$ (in the bare circuit), there are physical $Z$-measurements on all data qubits $(x,y)$ that we label as $m(t+1/2,j,x,y)$. Then, we define the additional $Z$-detectors $d(t+1,j,x,y)$ containing the data qubit measurements $m(t+1/2,j,x\pm1/2,y\pm1/2)$~\footnote{Technically, to account for $Z$-stabilizers on the spatial boundaries of the surface code, we only include the qubit measurements $m(t+1/2,j,x',y')$ that satisfy $0\leq x'=x\pm1/2\leq d-1$, $0\leq y'=y\pm1/2\leq d-1$.} and the ancilla qubit measurement $m(t-1/2,j,x,y)$. The data qubit measurements here simply represent the support of the $Z$-stabilizer that is being measured by the corresponding ancilla qubit. We define analogous detectors for $\ket{\overline{+}}$-reset and $\overline{X}$-measurements.

We have assumed that $\ket{\overline{T}}$ magic states are initialized fault-tolerantly such that both its $X$ and $Z$ stabilizers are $+1$ with high probability. Therefore, for each $\ket{\overline{T}}$-reset occurring at $(t-1/2,j)$ in the bare circuit, the first round of both $X$- and $Z$-detectors $d(t,j,x,y)$ contains only one measurement $m(t-1/2,j,x,y)$ since this should be $+1$ in the absence of errors.

In the presence of non-trivial logical gates, the pre-gate frame is defined such that every detector $d(t,j,x,y)$ %that contains measurements on either side of a logical gate at $(t+1/2,j,x,y)$ 
depends on
\begin{enumerate}
    \item one measurement $m(t-1/2,j,x,y)$ at time $t$, and
    \item any number of measurements at time $t+1/2$ that guarantee that the detecting region only spans two consecutive QEC rounds.
\end{enumerate}
The detailed definition of the detectors is given in Table~\ref{tab:detectors}.

\begin{table*}
\caption{Definition of detectors around a fold-transversal logical gate executed at time $t+1/2$. Note that $X$-detectors have spatial coordinates $(x,y)$ that satisfy $x\in\mathbb{Z}$ and $y\in\mathbb{Z}+\frac{1}{2}$, while $Z$-detectors satisfy $x\in\mathbb{Z}+\frac{1}{2}$ and $y\in\mathbb{Z}$.}\label{tab:detectors}
    \begin{tabular}{| c | c |}
        \hline
        Logical gate at $t{+}\frac{1}{2}$&Detector $d(t,j,x,y)$\\
        \hline
        $\overline{I}_{j}$&$\{m(t{-}\frac{1}{2},j,x,y),m(t{+}\frac{1}{2},j,x,y)\}$\\\hline
        $\overline{H}_{j}$&$\{m(t{-}\frac{1}{2},j,y,x),m(t{+}\frac{1}{2},j,x,y)\}$\\\hline
        $\overline{S}_{j}$&$\begin{array}{r c}X\text{-detectors: }&\{m(t{-}\frac{1}{2},j,x,y),m(t{+}\frac{1}{2},j,x,y),m(t{+}\frac{1}{2},j,y,x)\}\\Z\text{-detectors: }&\{m(t{-}\frac{1}{2},j,x,y),m(t{+}\frac{1}{2},j,x,y)\}\end{array}$\\\hline
        $\mathrm{CNOT}_{j_{1},j_{2}}$&$\begin{array}{r c}X_{j_{1}}\text{-detectors: }&\{m(t{-}\frac{1}{2},j_{1},x,y),m(t{+}\frac{1}{2},j_{1},x,y),m(t{+}\frac{1}{2},j_{2},x,y)\}\\Z_{j_{1}}\text{-detectors: }&\{m(t{-}\frac{1}{2},j_{1},x,y),m(t{+}\frac{1}{2},j_{1},x,y)\}\\X_{j_{2}}\text{-detectors: }&\{m(t{-}\frac{1}{2},j_{2},x,y),m(t{+}\frac{1}{2},j_{2},x,y)\}\\Z_{j_{2}}\text{-detectors: }&\{m(t{-}\frac{1}{2},j_{2},x,y),m(t{+}\frac{1}{2},j_{1},x,y),m(t{+}\frac{1}{2},j_{2},x,y)\}\end{array}$\\\hline
    \end{tabular}
\end{table*}

In contrast, the post-gate frame is defined such that every detector $d(t,j,x,y)$ only contains a single measurement at time $t+1/2$, but any number of measurements at time $t-1/2$.

The motivation for using this terminology for the detector frames is that incoming independent $X$ and $Z$ errors before (resp.~after) a logical gate lead to edges in the pre-gate frame (resp.~post-gate frame). 

% new para
Both the pre-gate and post-gate frames could be called ``co-moving'' frames because one associates a detector with a gate-transformed stabilizer generator immediately before (resp.~after) the logical gate, hence the names pre- and post-gate, respectively. Choosing a ``stationary'' frame in which one associates a detector with each stabilizer generator at the beginning of the circuit---thus requiring one to express the value of this generator in terms of multiple measured stabilizer outcomes at both times $t\pm 1/2$---leads to a detecting region which is local in time but possibly non-local in space. Choosing an ``uninformed'' frame would be to define a detector directly as the XOR of the bits measured for a given parity check at two consecutive times $t\pm 1/2$, which guarantees that measurement errors create only two defects, but leads to detecting regions that can be nonlocal both in space as well as in time, meaning that data qubit errors trigger a number of defects growing with the depth of the circuit.

% \newpage

% \section{Single-\textsc{lom} decoder algorithm}
% \label{sec:slomalg}
% \begin{algorithm}[h]
% \caption{Single-\textsc{lom} decoder}\label{alg:slom}
% \begin{algorithmic}
% \Require A decoding hypergraph $\mathcal{G}=(\mathcal{V},\mathcal{H})$ with vertices labeled by $(t,j,x,y)$, a set of defects $\mathcal{V}_{\text{defect}}\subseteq \mathcal{V}$, and an observing region $O^{\leftarrow}$.
%  \Ensure A prediction $x$ of whether the observable $O$ was flipped or not.
%  \State $V\gets$ empty set
%  \For{$O^{\leftarrow}_{(t,j)}\in O^{\leftarrow}$}
%  	\If{$O^{\leftarrow}_{(t,j)}=X$ or $Y$}
% 		\State $V\gets V\cup \{(t,j,x,y)|x\in\mathbb{Z},y\in\mathbb{Z}+\frac{1}{2}\}$
% 	\EndIf
% 	\If{$O^{\leftarrow}_{(t,j)}=Z$ or $Y$}
% 		\State $V\gets V\cup \{(t,j,x,y)|x\in\mathbb{Z}+\frac{1}{2},y\in\mathbb{Z}\}$
%  	\EndIf
% \EndFor
%  \State $V_{\text{defect}}\gets \mathcal{V}_{\text{defect}}\cap V$
%  \State \textbf{Assert:} $|h\cap V|=0$ or $2$ for all $h\in\mathcal{H}$
%  \State $E\gets \big\{h\cap V\big|h\in\mathcal{H},|h\cap V|=2\big\}$
%  \State $x\gets$ \textsc{mwpm}(\mhs{er what goes in here})
%  \State \textbf{Return:} $x$
%  \end{algorithmic}
%  \end{algorithm}

\section{$G_O$ for an observable $O$ is a graph}
\label{sec:graph}

Here we provide an intuitive, general, argument why $G_O$ defined in Section~III.A from the main text is a graph. Near each logical gate we can consider so-called ``space-time stabilizers'' of the decoding hypergraph---a set of hyperedges that doesn't flip a detector \textit{or} any representation of a logical operator. A simple known way to construct a space-time stabilizer in a memory experiment is to take two time-like edges and two space-like edges in a square; this corresponds to an incoming qubit error into a QEC round, errors on the parity check measurements which would detect the errors so the error is not observed, and then a canceling qubit error after the QEC round. The generalization of this error pattern when we have a time-like hyperedge is to take two time-like hyperedges (corresponding to measurement errors) and three space-like edges (qubit errors on the edges between the vertices of the hyperedges). If the space-time stabilizer is located at a spatial boundary of the decoding hypergraph, then we only need one time-like hyperedge and three space-like edges, as shown in \cref{fig:space-time_stabilizer}. Because space-time stabilizers, by definition, do not flip any logical operators, the space-time stabilizer must have even overlap with any observing hyperedge region (which only contains space-like edges). Therefore, whenever there is a weight-3 time-like hyperedge $h \in \mathcal{H}$, then either zero or two of its endpoints can lie in $V_{O}$, i.e. $|h \cap V_{O}|=0 \mod 2$, so that the corresponding space-time stabilizer has even overlap with the observing hyperedge region $H_O$. This implies that $E_{O}$ contains no hyperedges. \added{An example of $G_{O}$ and its corresponding $E_O$ is shown in Fig.~4(c) from the main text.} 

\begin{figure}
    \begin{center}
        \includegraphics{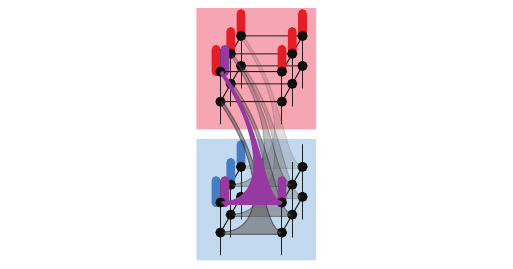}
    \end{center}
    \caption{An example of a space-time stabilizer (purple) that has even overlap with an observing hyperedge region (red and blue). This particular example represents the decoding graph around a logical $\overline{S}$-gate and is a part of the decoding graph in Fig.~4(b) in the main text.}\label{fig:space-time_stabilizer}
\end{figure}

\section{Circuit distance and hook errors} \label{sec:d_circ}

Let $\mathcal{F}$ be the set of individual errors that can occur in a circuit $\mathcal{C}$. The errors in $\mathcal{C}$ can be expressed by a vector $\vec{e} \in \mathbb{F}_2^{|\mathcal{F}|}$, with $e_i = 0$ if error $i$ did not occur. We label the number of detectors in circuit $\mathcal{C}$ as $n_D$ and the number of reliable logical observables as $n_L$. Following the same vector representation, the detectors that a set of errors $\vec{e}$ triggers, are given by $H_{\mathcal{C}}\vec{e}$ with $H_{\mathcal{C}} \in \mathbb{F}_2^{n_D \times |\mathcal{F}|}$ the parity-check matrix of circuit $\mathcal{C}$. 
Similarly, the reliable logical observables that $\vec{e} \in \mathbb{F}_2^{|\mathcal{F}|}$ flips are given by $L_{\mathcal{C}}\vec{e}$ with $L_{\mathcal{C}} \in \mathbb{F}_2^{n_L \times |\mathcal{F}|}$ the logical observable matrix of the circuit $\mathcal{C}$. 
The circuit distance of a circuit $\mathcal{C}$ with observables is 
\begin{equation}
	d_{\mathrm{circ}} := \min_{\vec{e} \in \mathbb{F}_2^{|\mathcal{F}|} : H_{\mathcal{C}}\vec{e} = \vec{0}, L_{\mathcal{C}}\vec{e} \neq \vec{0}} |\vec{e}|,
\end{equation}
that is, the minimum-weight error which does not trigger any detectors but has a logical effect. 
The circuit distance values reported in the main text have been obtained by solving the equivalent MaxSAT problem~\cite{gidney2021stim}, whose description for each experiment is generated using Stim~\cite{gidney2021stim}, in particular,
\begin{verbatim}
    stim.Circuit.shortest_error_sat_problem
\end{verbatim}
Then, the problem is solved using the RC2 (relaxable cardinality constraints) algorithm from the \texttt{PySAT} Python package. We have written a wrapper for computing the circuit distance directly from a Stim circuit and it can be found in~\cite{qec-util}. 

Because of the hardness of this problem, we have only computed the circuit distances for experiments involving $d=3$ and $d=5$ surface codes. %\added{The $d=7$ computation involves more than 1600 variables and 1700 conditions and was stopped because it did not finish even after more than 20 hours}.
%We tried to run the RC2 algorithm for the $d=7$ memory experiment, but the program was stopped after more than 20 hours without reaching a solution. 

%The runtime is large because the decoding volume of the considered memory experiment grows as $O(d^3)$, which makes the problem intractable even for small distances. 

\subsection{Distance-reducing hook errors for the \textsc{lom} decoder}
\label{sec:hook}
In this section we discuss a circuit-level error of weight $d-1$ which can occur in the repeated-$\overline{S}$ experiment and which is undetectable by the \textsc{lom} decoder, possibly affecting the numerical results in Section~IV.D.1 from the main text. 
Fig.~\ref{fig:bad_error_s_gate} shows such weight-4 error for the distance-5 unrotated surface code. The error consists of three incoming errors and a hook error, see Fig.~\ref{fig:bad_error_s_gate}(b). Note that a hook error is not included in the basic error model, i.e.~it requires more than 1 error in the basic model. 
Also, the error is not a logical Pauli (up to stabilizers) as the circuit distance of the experiment is $d_{\mathrm{circ}} = d = 5$. It corresponds to a $\overline{Z}$ error plus two $X$ data-qubit errors, which are not detected by the \textsc{lom} decoder because the triggered $Z$-detectors have been removed in the decoding subgraph, as drawn in Fig.~\ref{fig:bad_error_s_gate}(c). 
For a distance-$d$ unrotated surface code, this error is built using $d-2$ incoming errors along the code boundary and a hook error located at the corner of the fold for the $\overline{S}$ gate, see Fig.~\ref{fig:physical_circuits}. \added{The effect of these distance-reducing errors on the logical performance can be observed in Fig.~\ref{fig:threshold_repeated_scaling}(c).}

\begin{figure*}[tb]
    \centering
    \includegraphics[width=0.99\linewidth]{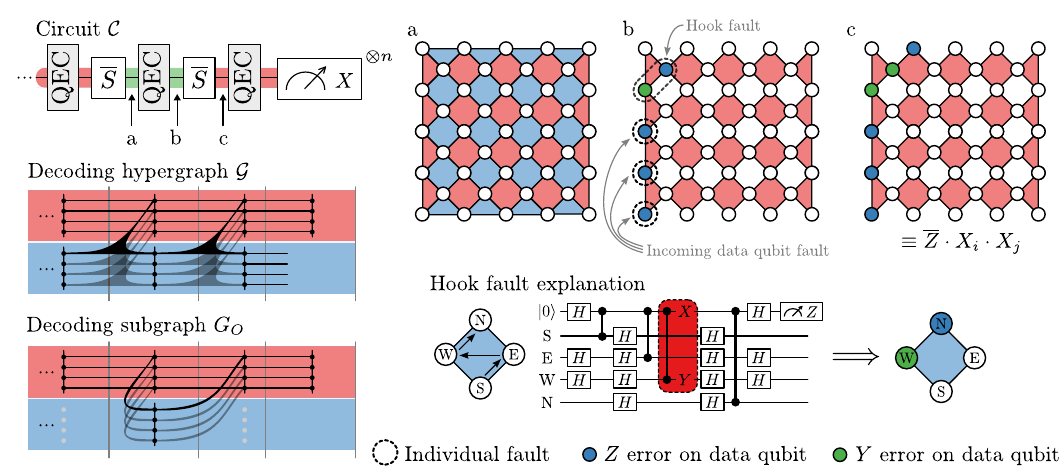}
    \caption{Example of a weight-4 error which is undetectable when decoding the sequence of repeated $\overline{S}$ gates by the single-\textsc{lom} decoder which only considers the subgraph $G_O$ of the full hypergraph $G$. 
    The four errors are surrounded by a dashed line in panel b and happen between time-points a and b: there are three errors on data qubits after the QEC round and one weight-2 XY error on the CZ inside the QEC round which spreads. Due to $\overline{S}$ gate the error transforms into a $\overline{Z}$ with two $X$ errors on top, but the latter errors are not detected by the detectors included in $G_O$. The noise model in the circuit is circuit-level depolarizing noise although only the basic error model is depicted in the decoding graphs for visual simplification. The detectors are defined in the post-gate frame to simplify the error pattern. Similar undetectable error patterns can be generated for any code distance and valid CZ ordering in the QEC round. }
    \label{fig:bad_error_s_gate}
\end{figure*}

\section{Circuit implementation}
\label{sec:schedule}

\begin{figure*}[tb]
    \centering
    \includegraphics[width=0.99\textwidth]{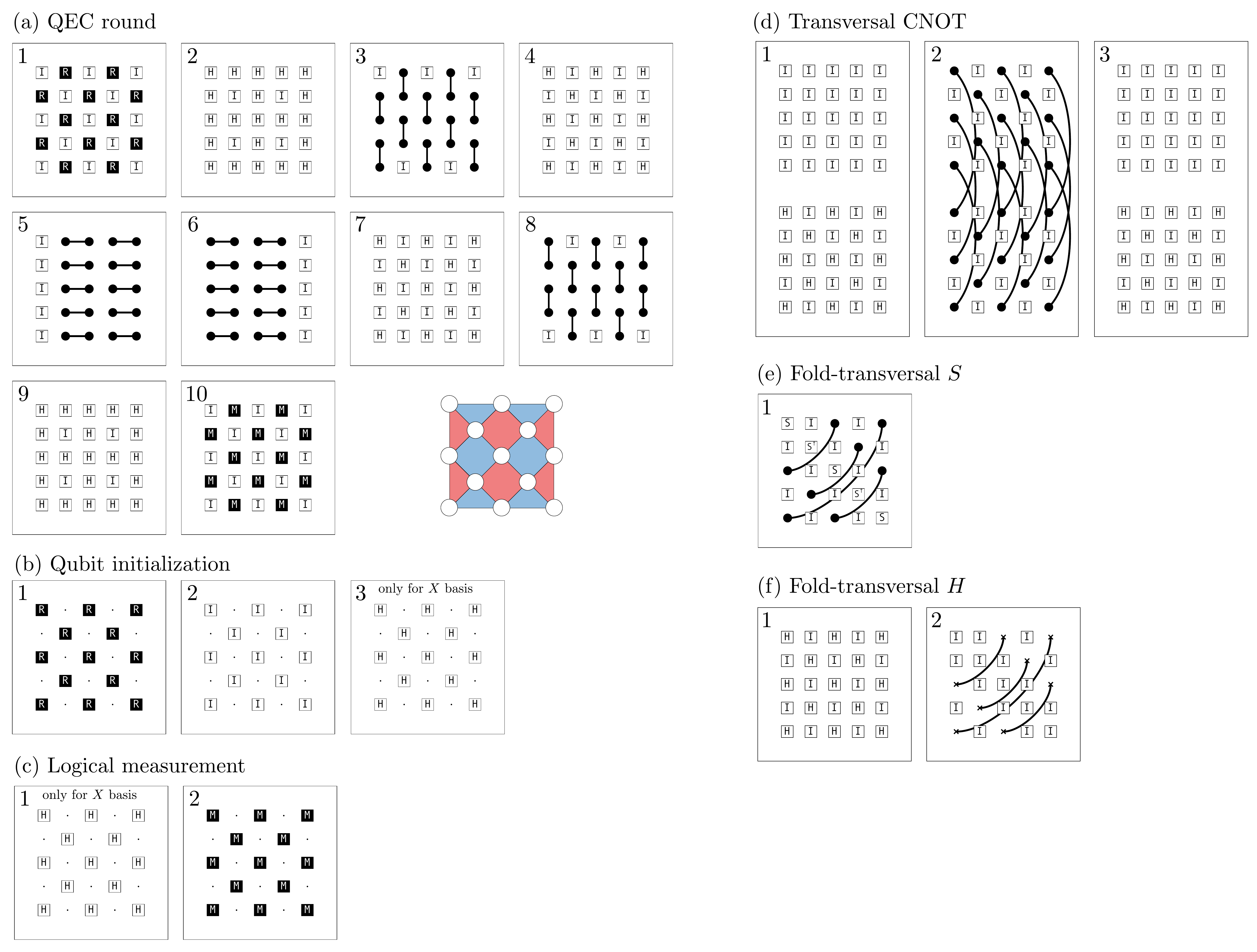}
    \caption{Physical circuits for the $d=3$ unrotated surface code used in the numerical experiments. The circuits are represented as gate layers that are applied in order. Layers labeled with ``only for $X$ basis'' are only applied if the experiment is run in the $X$ basis. The boxes with ``R'' and ``M'' represent a reset to $\ket{0}$ and a measurement in the $Z$-basis, respectively. \added{To benchmark our decoder in conditions similar to future experimental realizations of logical circuits, the CNOT gates and $X$-basis measurements have been decomposed into operations from the primitive gate set of superconducting qubits and neutral atoms. This decomposition is consistent with the noise and operations available in the SI1000 noise model.} }
    \label{fig:physical_circuits}
\end{figure*}

This section describes the physical implementation of the individual circuit blocks described in the main text and used in the numerical simulations. The circuits for the $d=3$ experiments are shown in Fig.~\ref{fig:physical_circuits}. Note that in the qubit initialization, the second layer is just qubit idling. We have added this idling layer to model the noisier preparation in $\ket{1}$ versus $\ket{0}$. 

The CNOT gates in the $\overline{\mathrm{CNOT}}$ implementation have been compiled down to $H$ and CZ to use a primitive gate set of, say, neutral atoms. As neutral atoms or trapped ions in some set-ups can be moved and rearranged, we have considered that SWAP gates are part of the primitive gate set, or, alternatively, one works with a folded surface such that the SWAP acts very locally. Therefore, the SWAP gates in the $\overline{H}$ gate have not been decomposed. 

\section{Additional numerical results} \label{sec:additional_results}

This section shows the scaling of the logical error probability for the repeated-gate experiments under circuit-level noise for a larger range of physical error probabilities $p$ than in Fig.~12(f)--(j), see Fig.~\ref{fig:threshold_repeated_scaling}. The worse scaling of the logical error probability for the repeated-$\overline{S}$ experiment in the $\overline{X}$ basis described in Section~IV.B.1 from the main text and Section~\ref{sec:hook} in this SM is visible for the $d=3$ and 5 unrotated surface codes. Finally, Table~\ref{tab:thresholds} also includes the computed thresholds for the experiments, which have been obtained using the method from Ref.~\cite{hillmann2024single}. \added{Note that the threshold for memory experiments under SI1000 noise is lower than the standard one under circuit-level noise (at around 0.75\%-1\%~\cite{Dennis_2002}) due to the differences between the noise models. Therefore, to correctly interpret the \textsc{lom} performance when decoding different logical gates or circuits, one has to do relative comparisons with respect to the SI1000-memory-experiment threshold.}

\begin{table*}[tb]
\caption{Threshold values (in \%) for the different experiments, noise models, and basis obtained using the method from Ref.~\cite{hillmann2024single}. Note that the repeated-$\overline{I}$ experiment corresponds to a memory experiment. }
\label{tab:thresholds}
\begin{tabular}{ccccccc}
\hline
\multicolumn{7}{c}{\textbf{SI1000 circuit-level noise}}                                               \\ \hline
        & \multicolumn{5}{c|}{\textbf{Repeated-gate experiments}}                & \textbf{Two-qubit}            \\
        & $\overline{I}$     & $\overline{H}$     & $\overline{S}$     & $\overline{\mathrm{CNOT}}$  & \multicolumn{1}{c|}{alt-$\overline{\mathrm{CNOT}}$} & \textbf{Clifford experiments} \\ \hline
$\overline{X}$ basis & 0.4508\added{(4)} & 0.4200\added{(4)} & 0.3236\added{(4)} & 0.3398\added{(3)} & 0.3337\added{(3)}                         & 0.4963\added{(1)}                \\
$\overline{Z}$ basis & 0.4360\added{(4)} & 0.4214\added{(4)} & 0.4156\added{(4)} & 0.3275\added{(3)} & 0.3218\added{(3)}                         & 0.5036\added{(1)}                \\ \hline
\multicolumn{7}{c}{\textbf{Phenomenological   depolarizing noise}}                                      \\ \hline
        & \multicolumn{5}{c|}{\textbf{Repeated-gate experiments}}                & \textbf{Two-qubit}            \\
        & $\overline{I}$     & $\overline{H}$     & $\overline{S}$     & $\overline{\mathrm{CNOT}}$  & \multicolumn{1}{c|}{alt-$\overline{\mathrm{CNOT}}$} & \textbf{Clifford experiments} \\ \hline
$\overline{X}$ basis & 2.247\added{(2)} & 2.245\added{(2)} & 1.625\added{(2)} & 1.786\added{(2)} & 1.736\added{(1)}                         & 2.8212\added{(2)}                \\
$\overline{Z}$ basis & 2.247\added{(2)} & 2.250\added{(2)} & 2.248\added{(2)} & 1.786\added{(2)} & 1.737\added{(1)}                         & 2.9576\added{(2)}                \\ \hline
\end{tabular}
\end{table*}

\begin{figure*}[tb]
    \centering
    \includegraphics[width=0.99\textwidth]{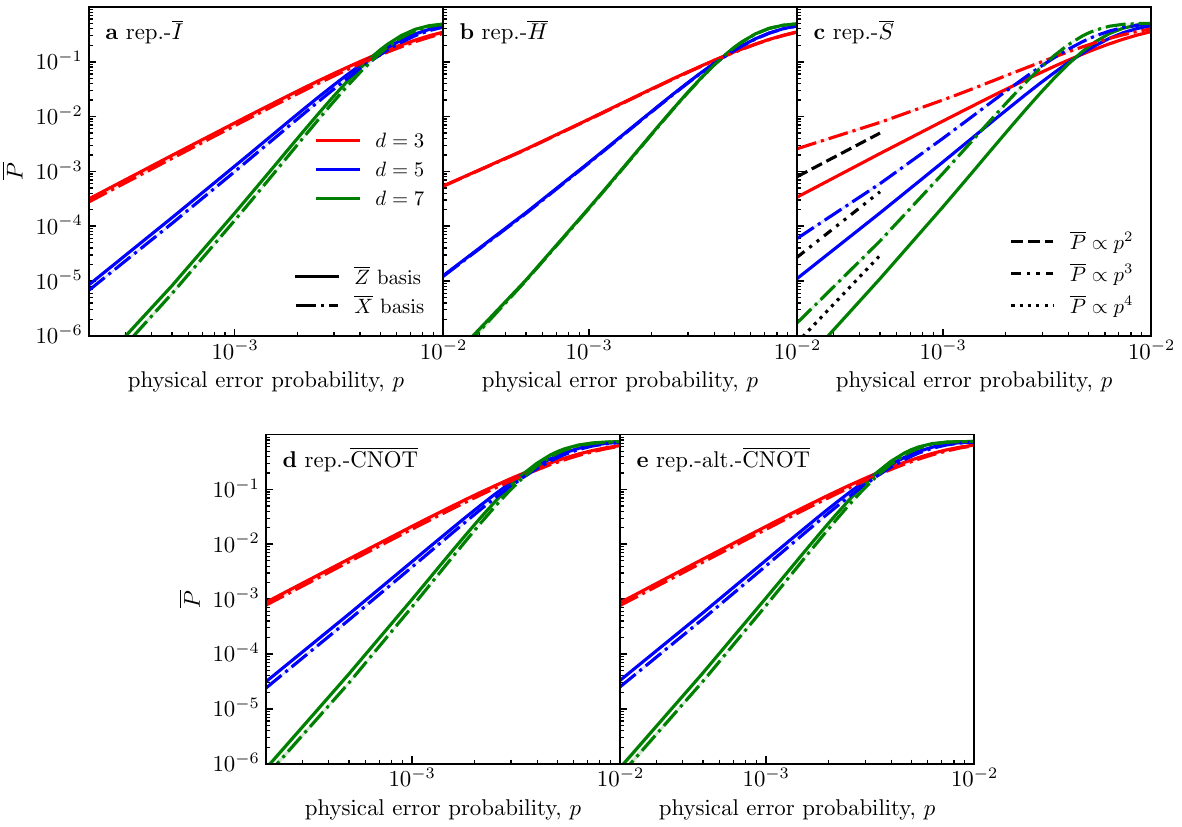}
    \caption{Extended version of Fig.~12(f)--(j) showing the scaling of the logical error probability of the \textsc{lom} decoder for repeated-gate experiments under circuit-level noise in the pre-gate frame. 
    Dashed/dotted black lines are included for the repeated-$\overline{S}$ experiment to emphasize that the scaling in the $\overline{X}$ basis is not $p^{\lfloor(d+1)/2\rfloor}$. 
    \added{The 95\% confidence intervals are represented as shaded regions, but they are too small to be visible.} 
    %The shaded regions indicate the uncertainty intervals. 
    }
    \label{fig:threshold_repeated_scaling}
\end{figure*}

%\bibliography{references}
%\bibliographystyle{apsrev4-2}
\FloatBarrier
\clearpage
\bibliography{references}

\end{document}